\definecolor{myurlcolor}{rgb}{0,0,0.4}
\definecolor{mycitecolor}{rgb}{0,0.5,0}
\definecolor{myrefcolor}{rgb}{0.5,0,0}
\newcommand{\beq}{\begin{equation}}
\newcommand{\eeq}{\end{equation}}
\newcommand{\Z}{\mathbb{Z}}
\newcommand{\C}{\mathbb{C}}
\newcommand{\R}{\mathbb{R}}
\newcommand{\Rplus}{\mathbb{R}_{\geq 0}}
\renewcommand{\H}{\mathcal{H}}
\newcommand{\B}{\mathcal{B}}
\newcommand{\K}{\mathcal{K}}
\renewcommand{\S}{\mathcal{S}}
\newcommand{\lra}{\longrightarrow}
\newcommand{\eps}{\varepsilon}
\newcommand{\rsa}{\rightsquigarrow}
\newcommand{\tr}{\mathrm{tr}}
\newcommand{\id}{\mathrm{id}}
\newcommand{\defin}{:=}
\newcommand{\src}[1]{{\mathrm{src}(#1)}}
\newcommand{\tar}[1]{{\mathrm{tar}(#1)}}
\newcommand{\pa}[1]{{\mathrm{pa}(#1)}}
\newcommand{\ch}[1]{{\mathrm{ch}(#1)}}
\newcommand{\pst}[1]{{\mathrm{pst}(#1)}}
\newcommand{\inc}[1]{{\mathrm{in}(#1)}}
\newcommand{\out}[1]{{\mathrm{out}(#1)}}
\newcommand{\Cp}{\mathtt{C}}
\newcommand{\Dp}{\mathtt{D}}
\newcommand{\Pp}{\mathtt{C_1}}
\newcommand{\Qp}{\mathtt{D_1}}
\newcommand{\sop}{\mathtt{StOp}}
\newcommand{\fsop}{\mathtt{FinStOp}}
\renewcommand{\sp}{\mathtt{StOp_1}}
\newcommand{\qop}{\mathtt{QOp}}
\newcommand{\FinStoch}{\mathtt{FinStoch}}
\newcommand{\ground}{\begin{tikzpicture}[thick,scale=.8,circuit ee IEC] \node[ground,rotate=90]{}; \end{tikzpicture}} 
\theoremstyle{plain}
\newtheorem{thm}{Theorem}[section]
\newtheorem{lem}[thm]{Lemma}
\newtheorem{prop}[thm]{Proposition}
\newtheorem{conj}[thm]{Conjecture}
\newtheorem{defn}[thm]{Definition}
\theoremstyle{definition}
\theoremstyle{remark}
\newtheorem{rem}[thm]{Remark}
\numberwithin{equation}{section}
\newenvironment{partialproof}{\paragraph{\textsc{Partial Proof.}}}{\hfill$\square$\bigskip}
\newenvironment{sketchproof}{\paragraph{\textsc{Sketch of Proof.}}}{\hfill$\square$\bigskip}
\renewcommand{\labelenumi}{(\alph{enumi})}
\renewcommand{\theenumi}{(\alph{enumi})}
\renewcommand{\emph}[1]{\textbf{#1}}
\newcommand{\emphalt}[1]{\textit{#1}}
\begin{document}
\sloppy

\setlength{\jot}{6pt}



\title[Beyond Bell's Theorem]{Beyond Bell's Theorem II: Scenarios with arbitrary causal structure\medskip}

\address{Tobias Fritz, Perimeter Institute for Theoretical Physics\\
Waterloo ON, Canada}
\email{tfritz@perimeterinstitute.ca}

\keywords{Bell's theorem, quantum nonlocality, causal inference, hidden Bayesian networks, measure-theoretic probability theory}

\subjclass[2010]{Primary: 81P05, 62A01; Secondary: 28A35}

\thanks{\textit{Acknowledgements.} Thanks to all the participants of the Benasque meeting on Causal Structure in Quantum Theory, and in particular to the organizers Llu\'is Masanes and Rob Spekkens. Matthew Pusey has kindly made many helpful suggestions and contributed to some of our proofs. Research at Perimeter Institute is supported by the Government of Canada through Industry Canada and by the Province of Ontario through the Ministry of Economic Development and Innovation. The author has been supported by the John Templeton Foundation.}

\begin{abstract}
It has recently been found that Bell scenarios are only a small subclass of interesting setups for studying the non-classical features of quantum theory within spacetime. We find that it is possible to talk about classical correlations, quantum correlations and other kinds of correlations on any directed acyclic graph, and this captures various extensions of Bell scenarios which have been considered in the literature. From a conceptual point of view, the main feature of our approach is its high level of unification: while the notions of source, choice of setting and measurement play all seemingly different roles in a Bell scenario, our formalism shows that they are all instances of the same concept of ``event''.

Our work can also be understood as a contribution to the subject of causal inference with latent variables. Among other things, we introduce hidden Bayesian networks as a generalization of hidden Markov models.
\end{abstract}

\maketitle

\tableofcontents

\section{Introduction}
\label{introduction}

Bell's discovery of ``quantum nonlocality''~\cite{Bell,Bellreview} has maybe been one of the most influential discoveries in theoretical physics over the past 50 years. Not only has it been a vexing challenge and a source of philosophical discussion for many physicists interested in the foundations of quantum theory, it has also led more recently to the development of new information processing protocols whose performance and security rely crucially on violations of Bell inequalities~\cite{Ekert,VVqkd,PAal,VVrand}.

Over the course of time, it has been realized that Bell's \emphalt{gedankenexperiment}, whose causal structure is illustrated in Figure~\ref{Bellscen}, is only one of many possible setups in which the non-classicality of quantum theory manifests itself in correlations between between different observations distributed in spacetime. Generalizations of Bell's original idea include Svetlichny's extension from a two-party scenario to a many-party scenario~\cite{Svet}, Popescu's ``hidden nonlocality''~\cite{Pop} which modifies the Bell scenario by including additional preselection measurements (Figure~\ref{Popscen}), and the ``bilocality'' scenario of Branciard, Gisin and Pironio~\cite{BGP} (Figure~\ref{bilocal}). Subsequently, we proposed a general definition of what constitutes a \emphalt{correlation scenario}, showed how this definition comprises two-party and many-party Bell scenarios, the bilocality scenario, and many new ones (like Figure~\ref{trianglefig}), about which we also derived some first results~\cite{Fri}. Most recently, Gallego, W\"urflinger, Chaves, Ac\'in and Navascu\'es have considered ``sequential correlation scenarios''~\cite{GWCAN} which extend Popescu's idea by adding an additional choice of measurement setting (Figure~\ref{seqnl}).

The purpose of this present paper is threefold:

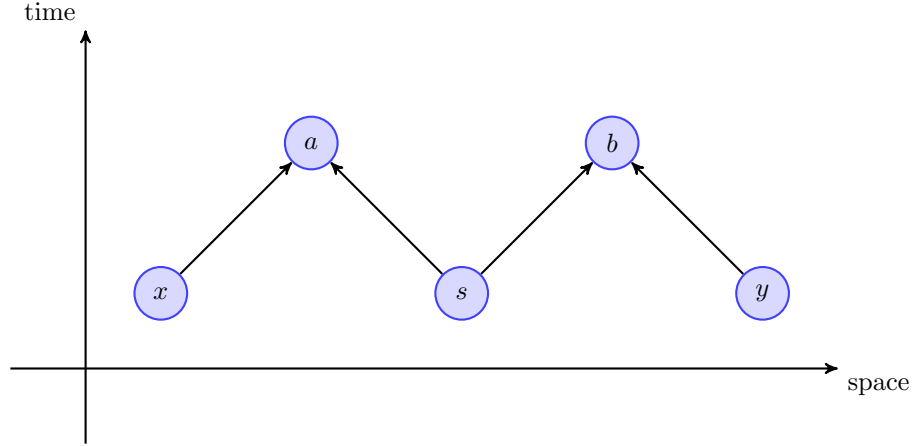
\begin{figure}
\centering
\begin{tikzpicture}[node distance=2.3cm,>=stealth',thick]
\draw[->] (-2,-1) -- (-2,4.5) node[anchor=south east] {time} ;
\draw[->] (-3,0) -- (8,0) node[anchor=north west] {space} ;
\tikzstyle{place}=[circle,thick,draw=blue!75,fill=blue!15,minimum size=7mm]
\node[place] (l) at (3,1) {$s$} ;
\node[place] (la) at (-1,1) {$x$} ;
\node[place] (lb) at (7,1) {$y$} ;
\node[place] (a2) at (1,3) {$a$} ;
\node[place] (b2) at (5,3) {$b$} ;
\draw[->] (la) -- (a2) ;
\draw[->] (lb) -- (b2) ;
\draw[->] (l) -- (a2) ;
\draw[->] (l) -- (b2) ;
\end{tikzpicture}
\caption{Causal structure of the two-party Bell scenario~\cite{Bell}.}
\label{Bellscen}
\end{figure}

\begin{enumerate}
\item We propose a general definition of a class of \emphalt{gedankenexperiments}, in which one can talk about classical correlations, quantum correlations and the like, and which comprises all the above scenarios as particular instances. We provide definitions of classical and quantum correlations and show that they correspond to the usual ones in the case of a Bell scenario. Moreover, we prove some very general technical results about classical correlations. 
\item We point out that our definition achieves a new level of unification: there is no distinction whatsoever between a ``source'', a ``choice of measurement setting'' and a ``measurement''. All three concepts are subsumed by our notion of \emph{event}, as illustrated in the Bell scenario of Figure~\ref{Bellscen}. A pair of events may or may not be connected by a causal link drawn as an arrow pointing from one event to the other. We think of an event as a measurement apparatus located at a specific point in spacetime which probes and potentially modifies the outside world and outputs a classical outcome. There are two ways to interpret the causal links: one can either think of them as representing a theoretical hypothesis about which events influence which other ones, or alternatively as physical systems propagating between the events---exactly as in Hardy's \emph{operator-tensor formulation} of quantum theory~\cite{HardyOT}. Both of these interpretations are equally viable, since the underlying mathematics is the same.
\begin{figure}
\centering
\begin{tikzpicture}[node distance=2.3cm,>=stealth',thick,scale=.9]
\draw[->] (-2.2,-1) -- (-2.2,6.5) node[anchor=south east] {time} ;
\draw[->] (-3.2,0) -- (12.3,0) node[anchor=north west] {space} ;
\tikzstyle{place}=[circle,thick,draw=blue!75,fill=blue!15,minimum size=7mm]
\node[place] (a1) at (3,3) {$a'$} ;
\node[place] (b1) at (7,3) {$b'$} ;
\node[place] (l) at (5,1) {$s$} ;
\draw[->] (l) -- (a1) ;
\draw[->] (l) -- (b1) ;
\node[place] (la) at (-1,3) {$x$} ;
\node[place] (lb) at (11,3) {$y$} ;
\node[place] (a2) at (1,5) {$a$} ;
\node[place] (b2) at (9,5) {$b$} ;
\draw[->] (la) -- (a2) ;
\draw[->] (lb) -- (b2) ;
\draw[->] (a1) -- (a2) ;
\draw[->] (b1) -- (b2) ;
\end{tikzpicture}
\caption{Causal structure of Popescu's scenario~\cite{Pop}.}
\label{Popscen}
\end{figure}
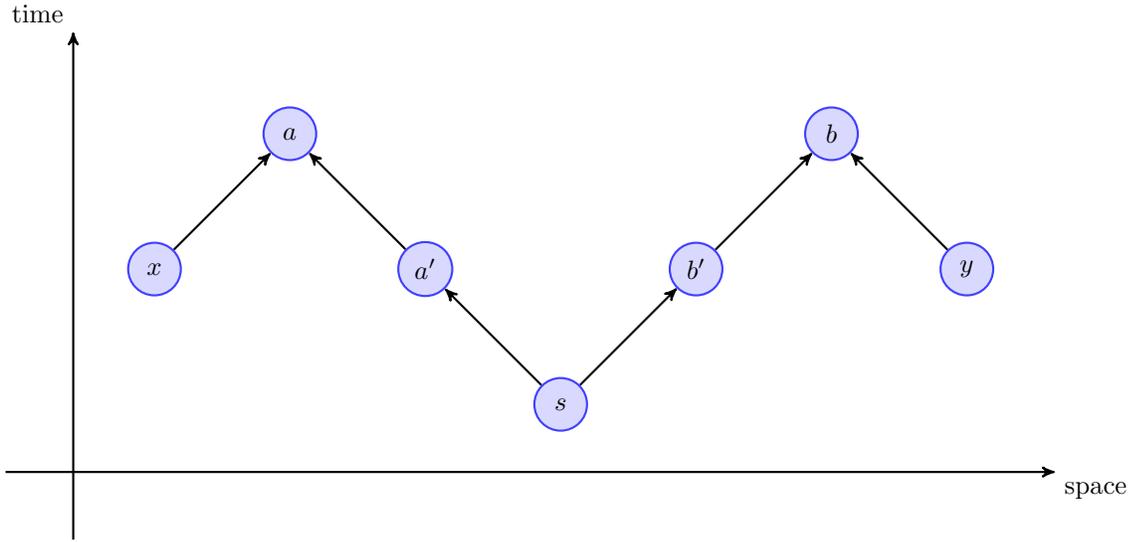
\item The first interpretation connects nicely with the third purpose of this paper: our definitions and results can also be understood as a contribution to the subject of \emph{causal inference}~\cite{Pearl}. Typically, in causal inference one starts with a joint probability distribution of a finite number of random variables and then asks which underlying causal structures, describing which variables influence which other ones, may have generated the given joint distribution. With our definitions, the notion of \emph{hidden variable} or \emph{latent variable}~\cite{Pearl} becomes central to causal inference. The basic idea is as follows: let us assume that the joint probability distribution that we start with represents the distributions of outcomes of probings of a physical system at various points in spacetime, as outlined above. At each of these points, we sample the system, but we do not know whether the information that we obtain by that sampling is a complete representation of the state of the system at that particular point in spacetime. In other words, it may be that there is some ``hidden information'' propagating through that point in spacetime, while not being revealed by the probing. This seems like a reasonable assumption to make when dealing with systems for which we are uncertain whether our theoretical description is complete. We imagine that this applies in particular to biological systems, whose functioning can rarely be comprehensively understood; but it also applies to fundamental physics, as exemplified by Einstein's conviction that quantum mechanics was ``incomplete'' and hidden variables had to be introduced which would describe additional degrees of freedom that were hitherto unknown~\cite{EPR}. To summarize, we believe that any reasonable generalization of Bell scenarios needs to consider hidden variables not only behind events representing sources of (entangled) particles, but behind \emphalt{every} spacetime event relevant to the gedankenexperiment. For the standard Bell scenario of Figure~\ref{Bellscen} and its many-party extensions, we will see in Propositions~\ref{nsbox},~\ref{cbell} and~\ref{bellclassification} that the resulting formalism is equivalent to the conventional one.

Connections between causal inference and Bell's theorem have previously been pointed out by Wood and Spekkens~\cite{WS}. Similar connections have also been explored by Henson, Lal and Pusey~\cite{HLP} and were also hinted at in our previous paper~\cite{Fri}. Most recently, they are being actively investigated by Chaves, Gross and collaborators~\cite{CLG,CLMGJS,CMG,CKBG}.

\end{enumerate}

\makeatletter
\newcommand*{\centerfloat}{%
\parindent \z@
\leftskip \z@ \@plus 1fil \@minus \textwidth
\rightskip\leftskip
\parfillskip \z@skip}
\makeatother
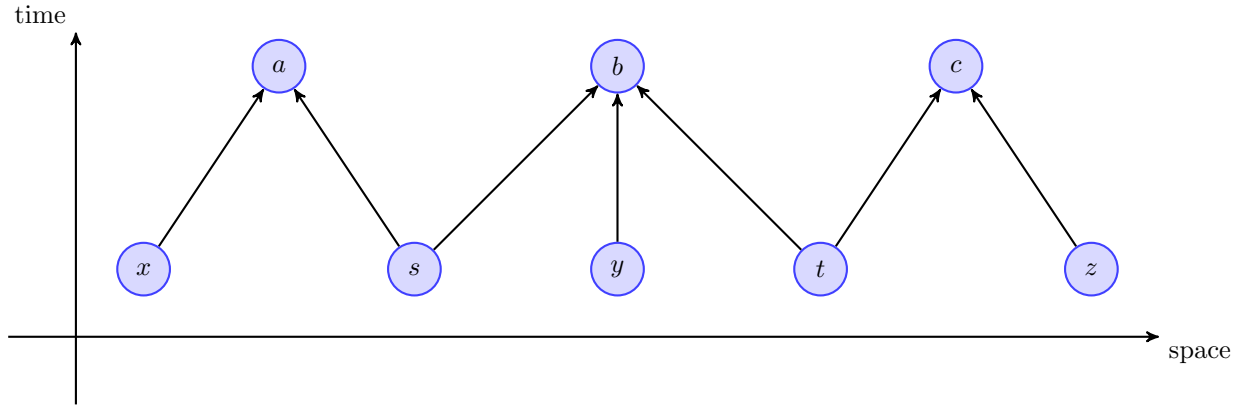
\begin{figure}
\centerfloat
\begin{tikzpicture}[node distance=2.3cm,>=stealth',thick,scale=.9]
\draw[->] (-2,-1) -- (-2,4.5) node[anchor=south east] {time} ;
\draw[->] (-3,0) -- (14,0) node[anchor=north west] {space} ;
\tikzstyle{place}=[circle,thick,draw=blue!75,fill=blue!15,minimum size=7mm]
\node[place] (l) at (3,1) {$s$} ;
\node[place] (la) at (-1,1) {$x$} ;
\node[place] (lb) at (9,1) {$t$} ;
\node[place] (a2) at (1,4) {$a$} ;
\node[place] (b2) at (6,4) {$b$} ;
\node[place] (c) at (11,4) {$c$} ;
\node[place] (lc) at (13,1) {$z$} ;
\node[place] (y) at (6,1) {$y$} ;
\draw[->] (la) -- (a2) ;
\draw[->] (lb) -- (b2) ;
\draw[->] (l) -- (a2) ;
\draw[->] (l) -- (b2) ;
\draw[->] (lb) -- (c) ;
\draw[->] (lc) -- (c) ;
\draw[->] (y) -- (b2) ;
\end{tikzpicture}
\caption{Causal structure of Branciard--Gisin--Pironio's ``bilocality'' scenario~\cite{BGP}.}
\label{bilocal}
\end{figure}

\subsection*{Summary and statement of results.}

Now that the main ideas have been explained, we outline in more detail our definitions and results, the proofs of which will sometimes constitute a technical tour de force. We suspect that some of our results may seem quite intuitive and unsurprising, but, as the technical difficulty of the proofs will demonstrate, they are non-trivial. In fact, despite sustained efforts, we have not achieved perfect rigor in all our arguments; the corresponding arguments are marked as ``\textsc{Sketch of proof}'' rather than ``\textsc{Proof}''. We challenge any skeptical reader to find simpler and completely rigorous proofs.

We start out in Section~\ref{causalsec} by discussing the relevant notion of causal structure, which is given by the concept of \emph{directed acyclic graph}, just as in the subject of Bayesian networks and causal inference~\cite{Pearl}. This graph describes which events have a \emphalt{direct} influence on which other ones through causal links. During the course of our investigations, it will turn out that it is sufficient to know which events have a direct \emphalt{or indirect} influence on which other ones, and it is not necessary to distinguish between a direct and an indirect causal link. In fact, Theorem~\ref{onlyposet} shows that only the \emph{partial order} (or equivalently, \emph{causal set}) induced by the graph needs to be known in order to discuss classical, quantum and other correlations on the given causal structure.

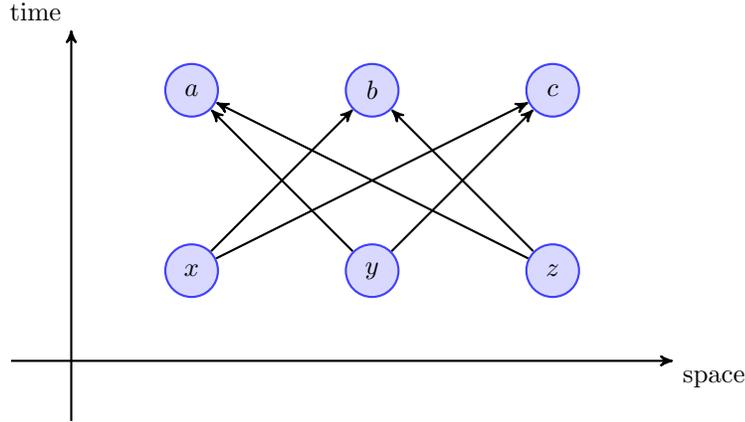
\begin{figure}
\centering
\begin{tikzpicture}[scale=.8,node distance=2.1cm,>=stealth',thick]
\draw[->] (-1,-1) -- (-1,5.5) node[anchor=south east] {time} ;
\draw[->] (-2,0) -- (9,0) node[anchor=north west] {space} ;
\tikzstyle{place}=[circle,thick,draw=blue!75,fill=blue!15,minimum size=7mm]
\node[place] (a) at (1,4.5) {$a$} ;
\node[place] (b) at (4,4.5) {$b$} ;
\node[place] (c) at (7,4.5) {$c$} ;
\node[place] (x) at (1,1.5) {$x$} ;
\node[place] (y) at (4,1.5) {$y$} ;
\node[place] (z) at (7,1.5) {$z$} ;
\draw[->] (x) -- (b) ;
\draw[->] (x) -- (c) ;
\draw[->] (y) -- (a) ;
\draw[->] (y) -- (c) ;
\draw[->] (z) -- (a) ;
\draw[->] (z) -- (b) ;
\end{tikzpicture}
\caption{The triangle scenario from~\cite{Fri} in our present definition. In this two-dimensional drawing, the causal structure may look contrived; but in a three- or four-dimensional spacetime, the events can be arranged in such a way that the causal structure is indeed the one induced from the lightcones, for example by spatially arranging all six events as the vertices of a regular hexagon, while $x$, $y$ and $z$ happen at a suitably earlier time than $a$, $b$ and $c$.}
\label{trianglefig}
\end{figure}

Section~\ref{seccorr} continues by introducing the notion of \emph{correlation} on a given causal structure. Roughly, a correlation is a joint distribution over the outcome variables situated at the events which makes two variables independent as soon as they have disjoint causal past. This formalizes Reichenbach's \emph{common cause principle} ``no correlation without causation''~\cite{cc} in our framework. On a given causal structure, these independence conditions constitute a minimal set of requirements that one would expect an observed joint probability distribution to satisfy, assuming that it arises from that causal structure. This is analogous to the no-signalling condition in a Bell scenario, which is likewise a minimal requirement for a conditional distribution $P(ab|xy)$ to be regarded as potentially physical. As shown in Proposition~\ref{nosig}, our definition essentially reduces to the no-signalling assumption in the case of a Bell scenario.

Then, in Section~\ref{secccorr}, we introduce hidden variables. A given correlation, as in the previous paragraph, is a \emph{classical correlation} if it can be explained in terms of hidden variables being passed along the causal links and being processed at the nodes. This formalizes the above idea of causal inference with hidden variables: one asks which observational results are compatible with the given causal structure, while permitting unobserved information to propagate along the causal links. A given joint distribution is compatible with the causal structure in this sense if and only if it is a classical correlation. It is an essential feature of our definitions and results that we do not restrict a hidden variable to have a finite domain, but allow for it to live on an arbitrary measurable space. The measure-theoretical background necessary for dealing with hidden variables with arbitrary domain is outlined in Appendix~\ref{appsop}. We show in Proposition~\ref{ccorriscorr} that every classical correlation is in fact a correlation. Our further results, in parts quite difficult to prove, are the following: Proposition~\ref{cbell} characterizes classical correlations in a Bell scenario as those whose associated conditional distribution is ``local'' in the conventional sense~\cite{Bellreview}. Theorem~\ref{finitethm} then shows that any classical correlation on any causal structure can be approximated arbitrarily well by other classical correlations for which the hidden variables take on only finitely many values; the proof relies on some auxiliary measure-theoretical results derived in Appendix~\ref{secfinapp}. Towards the end of the section, Conjecture~\ref{detconj}, together with its partial proof, concerns the question whether one can assume without loss of generality that the information processing from incoming to outgoing hidden variables at each event with incoming hidden variables is deterministic. 

Motivated by the question of how to define quantum correlations within our formalism, Section~\ref{seccats} discusses the general properties that a theoretical framework must satisfy in order for a corresponding notion of correlation within that framework to exist. We find that, if a framework for physical theories is given by a symmetric monoidal category $\Cp$ satisfying suitable additional conditions, then there exists a notion of ``$\Cp$-correlation''. We explain how classical correlation are an instance of this general construction. In complete generality, Theorem~\ref{onlyposet} shows that the set of $\Cp$-correlations only depends on the causal structure as a partially ordered set instead of all the details of a directed graph.

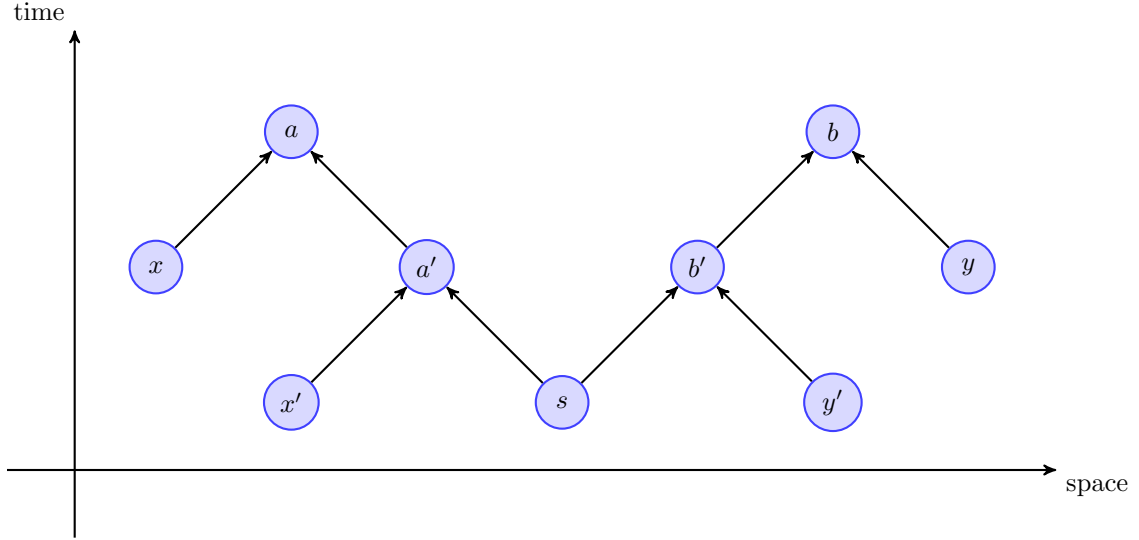
\begin{figure}
\centering
\begin{tikzpicture}[node distance=2.3cm,>=stealth',thick,scale=.9]
\draw[->] (-2.2,-1) -- (-2.2,6.5) node[anchor=south east] {time} ;
\draw[->] (-3.2,0) -- (12.3,0) node[anchor=north west] {space} ;
\tikzstyle{place}=[circle,thick,draw=blue!75,fill=blue!15,minimum size=7mm]
\node[place] (a1) at (3,3) {$a'$} ;
\node[place] (b1) at (7,3) {$b'$} ;
\node[place] (l) at (5,1) {$s$} ;
\draw[->] (l) -- (a1) ;
\draw[->] (l) -- (b1) ;
\node[place] (la) at (-1,3) {$x$} ;
\node[place] (lb) at (11,3) {$y$} ;
\node[place] (xp) at (1,1) {$x'$} ;
\node[place] (yp) at (9,1) {$y'$} ;
\node[place] (a2) at (1,5) {$a$} ;
\node[place] (b2) at (9,5) {$b$} ;
\draw[->] (la) -- (a2) ;
\draw[->] (lb) -- (b2) ;
\draw[->] (a1) -- (a2) ;
\draw[->] (b1) -- (b2) ;
\draw[->] (xp) -- (a1) ;
\draw[->] (yp) -- (b1) ;
\end{tikzpicture}
\caption{Causal structure of the two-party sequential correlation scenario with sequences of two measurements~\cite{GWCAN}.}
\label{seqnl}
\end{figure}

Building on these considerations, we define quantum correlations in Section~\ref{secqc}. We hypothesize in Conjecture~\ref{cisq} that every classical correlation is a quantum correlation, but due to the technical challenges involved, we can only offer a partial proof. Then, Proposition~\ref{bellclassification} shows how our definition of quantum correlation corresponds to the conventional one in the case of Bell scenarios.

Finally, Section~\ref{hbn} discusses another point of view on classical correlations. By analogy with hidden Markov models, we introduce hidden Bayesian networks. The main result here is Theorem~\ref{hbnthm}, which shows that a given joint probability distribution over the outcome variables is a classical correlation if and only if it can be modelled by a hidden Bayesian network.

\subsection*{Terminology and Notation.}

We collect here the relevant notation that we use for graphs $G=(V,E)$. All the graphs $G$ that we consider are \emphalt{directed acyclic graphs} as in Definition~\ref{defdag}, and hence we usually omit the qualifiers ``directed acyclic'' and simply refer to $G$ as a \emphalt{graph}.

While we always think of the nodes $v\in V$ as events in spacetime, we mostly call them ``nodes'' rather than ``events'', and regard these two terms as essentially synonymous. Similarly, we usually speak of ``edges'' $e\in E$ rather than ``causal links'', although we think of the edges as causal links along which physical systems propagate. (With the exception of the edges of the hidden Bayesian networks of Section~\ref{hbn}, which we do not interpret like this.)

For nodes $v\in V$ and edges $e\in E$, we put
\begin{alignat*}{2}
\src{e} & \defin \textrm{source node of } e && \\
\tar{e} & \defin \textrm{target node of } e && \\
\pa{v} & \defin \textrm{parent nodes of } v && = \{ u\in V \:|\: u\to v\} \\
\ch{v} & \defin \textrm{children nodes of } v && = \{ w \in V \:|\: v\to w \} \\ 
\pst{v} & \defin \textrm{causal past of } v && = \{ u\in V \:|\: u\to\ldots\to v \} 
\end{alignat*}
We can also apply each of these functions to sets of edges (respectively nodes), in which case we put $\pa{U}\defin \bigcup_{v\in U}\pa{v}$, and similarly for the other functions. See Section~\ref{causalsec} for more detail on all of this.

In order to keep the notation from being too confusing, we distinguish between a node and the corresponding outcome variable it carries. If $v\in V$ is a node, then the outcome variable at $v$ is denoted $o[v]$, and $P(o[v])$ is its distribution. This notation is unfortunately somewhat unusual, since in the Bell scenario of Figure~\ref{Bellscen} we need to denote e.g.~the outcome of Alice by ``$o[a]$'' instead of the conventional ``$a$'', but it will help to keep things clear.

\subsection*{Disclaimer.}

The definitions proposed in the present paper are not intended to be final or conclusive in any sense. Rather, they constitute only one possible approach for generalizing Bell scenarios to arbitrary causal structures, and we expect that other definitions are possible and potentially useful~\cite{HLP}. It may also be desirable to have definitions which even allow for the consideration of indefinite causal structure~\cite{OCB}, although we do not know how to achieve this. Further investigations will have to tell which definitions are the most adequate and useful; the only thing that seems certain at the present time is that Bell scenarios constitute only a small subclass of all conceivable scenarios, and that there are many other interesting ones that one can consider~\cite{Fri,HLP}.

A potentially problematic aspect of our approach is this: in order to observe the joint probability distribution of outcomes of probings in an experimental situation, one needs to gather statistics, i.e.~repeat the experiment many times \emphalt{with independent trials}. This independence assumption of trials goes somewhat against the spirit of allowing hidden variables wherever conceivable, and might be hard to justify in certain applications, e.g.~in neurophysiology, where the neurons may also store hidden information between the trials. At present, we do not know under which conditions this independence assumption can be relaxed, but the results of Bierhorst~\cite{Bierhorst} for the case of the Bell scenario are encouraging.

\newpage
\section{What is causal structure?}
\label{causalsec}

Our goal is to study correlations between different observations conducted on a collection of systems and to determine which causal structures between these observations are compatible with these correlations. So we first need to discuss what the term ``causal structure'' actually means. Generally, a causal structure is defined on a collection of events $V$, and the function of a causal structure is to clarify which events may \emph{influence} which other ones. Whenever event $v$ may influence event $w$, we write $v\to w$. Upon taking the events to be nodes and regarding the influence relations $v\to w$ as edges, we therefore obtain a \emph{directed graph} $G=(V,E)$, i.e.~a set of nodes $V$ together with a set of directed edges $E\subseteq V\times V$, which becomes the mathematical model of a causal structure. If there is no causal link from $v$ to $w$, then we write $v\not\to w$. We will always assume that $V$ is finite, and then so is $E$; it is probably possible to relax this assumption, albeit at the cost of additional technical challenges. Any edge $e\in E$ is necessarily of the form $v\to w$ for a unique pair of nodes $v,w\in V$, which we call the \emph{source} and \emph{target} of $e$ and also denote by $\src{e}$ and $\tar{e}$. For $v\in V$, we write
\[
\inc{v} \defin \set{ e\in E \: | \: \tar{e} = v } 
\]
for the collection of \emph{incoming edges} at $v$, and similarly
\[
\out{v} \defin \set{ e\in E \: | \: \src{e} = v }
\]
for the \emph{outgoing edges} at $v$. The upcoming acyclicity condition of Definition~\ref{defdag} will guarantee that these two sets are disjoint. In Section~\ref{hbn}, we will use similar concepts for nodes: the set
\[
\pa{v} \defin \set{ u\in V \: | \: u\to v }
\]
is the collection of \emph{parents} of $v$, while
\[
\ch{v} \defin \set{ w\in V \: | \: v\to w }
\]
are its \emph{children}.

We typically think of the events or nodes of $G$ as points in spacetime, and there is an edge or causal link $v\to w$ if $w$ lies in the future lightcone of $v$. However, our mathematical results are completely general and do not depend on this interpretation. In particular, we are confident that other ways for obtaining candidate causal structures are feasible and interesting, in particular in applications to mathematical modelling in other fields like biology or climate science. In these situations, it will often be the case that it is not known which events can influence which other ones, and therefore the causal structure is not known in advance. Applying our formalism to these cases is nevertheless possible by first choosing some ``trial'' causal structures and then asking, for each of these, whether it is compatible with the given observations or not. We do not subscribe to any particular method for finding such trial causal structures; doing so is up to the user of our formalism. It should be an interesting challenge to come up with algorithms for doing so, possibly in analogy with the standard causal inference algorithms commonly used~\cite[p.~50ff]{Pearl}.

In general, even if $u\not\to w$, it may be possible that there is an \emph{indirect} causal link from $u$ to $w$ proceeding through other nodes:

\begin{defn}
\label{dirpath}
A \emph{directed path} from $u \in V$ to $w \in V$ is a finite sequence of nodes $v_1, \dotsc, v_n \in V$ with $u \to v_1 \to \dotsb \to v_n \to w$.
\end{defn}

We use the shorthand $u \rsa w$ for the existence of a directed path from $u$ to $w$, and $u \not\rsa w$ for the absence of any such directed path. In the conventional formalism of Bayesian networks~\cite{Pearl}, no information can propagate along indirect causal links since the intermediate nodes ``screen off'' any potential indirect influence of $u$ on $w$, and hence the outcome variable $o[u]$ may become conditionally independent of $o[w]$ given $o[v]$ (whether this happens depends on the overall structure of the graph). For us, this is no longer the case: the outcome variable $o[v]$ living on an intermediate node $v$ constitutes merely a probing of the environment without the guarantee that it faithfully represents all information propagating through $v$, and hence we cannot expect $o[v]$ to screen off any correlations between $o[u]$ and $o[w]$.

The additional property enjoyed by ``$\rsa$'' is \emph{transitivity}:
\[
u\rsa v,\quad v\rsa w \quad \Longrightarrow \quad u\rsa w.
\]
This makes sense if one interprets ``$u\rsa v$'' as ``$u$ is in the causal past of $v$''. So the information carried by $\rsa$ is exactly that of a \emph{partially ordered set}, which for us plays a r{\^o}le similar to that of a \emph{causal set}~\cite{Henson}. Due to the frequent occurrence of the concept, we introduce a special notation for the causal past,
\[
\pst{v} \defin \set{ v } \cup \set{ \: u\in V \: | \: u\rsa v \: }.
\]
It is important to note that we consider every node itself to be a member of its causal past.

A condition that any reasonable causal structure should satisfy is that there are no causal loops, i.e.~that no event can influence itself in any non-trivial way:

\begin{defn}[{e.g.~\cite{Pearl}}]
\label{defdag}
A \emph{directed acyclic graph} is a directed graph with $v \not\rsa v$ for all $v \in V$.
\end{defn}

Note that this definition is sensible only if we do not allow paths of length zero in Definition~\ref{dirpath}.

For example, Figures~\ref{Bellscen}--\ref{seqnl} depict directed acyclic graphs. Acyclicity means that there is no directed sequence of edges which returns to its starting point. In particular, no node has a self-loop.

The particular causal structure considered by Bell~\cite{Bell} is the one of Figure~\ref{Bellscen}. It is worth noting that the five nodes of that figure are normally considered to play very different roles: $s$ represents a \emph{source} emitting particle pairs or other physical systems, one in each direction. It is usually not considered to yield an outcome $o[s]$, although it almost certainly does in any actual experiment, where the outcome $o[s]$ in any given run will be one of ``successfully produced an entangled particle pair'' or ``failed to produce an entangled particle pair''. The nodes $x$ and $y$ stand for a \emph{choice of measurement} randomly selected by an agent (or by an apparatus, in most actual experiments) with values $o[a]$ and $o[b]$. Finally, the events $a$ and $b$ represent the execution of these chosen measurements on the systems emitted by the source which produce outcomes $o[a]$ and $o[b]$. However, this categorization of the events into three different kinds is completely superfluous: one can consider all five variables $o[s]$, $o[x]$, $o[y]$, $o[a]$, $o[b]$ to be outcomes of observations, which are (probabilistic) functions of the values of the incoming hidden variables propagating along the incoming causal links (if any), and emitting hidden variables along the outgoing causal links (if any). Here, the ``hidden variables'' may be classical, quantum or other kinds of systems; we will show later that in this picture, our corresponding definitions of classical and quantum correlations coincide precisely with the usual ones~\cite{Bellreview}.

So, generalizing and reinterpreting some of Bell's work is what we set out to do in this paper. The questions that naturally arise are of this kind: under which conditions can a given correlation be explained in terms of hidden variables on a given causal structure? We are not able to answer this question in this paper and we suspect that it is in general very difficult, but we provide a set of basic definitions and results for considering it.

\newpage
\section{Correlations}
\label{seccorr}

So what are these ``correlations'' between observations which we were talking about? As mentioned before, we imagine that at every node $v\in V$, there is an outcome $o[v]$ describing the result of the corresponding observation and taking values in a finite set of possible values denoted $O_v$. It takes a concrete value in each run of the experiment, but possibly not the same value in each run, so that it needs to be described in terms of a probability distribution over outcomes. We also write $o[V]$ for the vector or tuple $(o[v])_{v\in V}$ consisting of all the outcomes $o[v]$ for all nodes $v\in V$. In each run of the experiment, we obtain one concrete tuple of values for $o[V]$, so that we obtain a probability distribution $P(o[V])$, which is a joint distribution of all the individual outcome variables $o[v]$. If $U\subseteq V$ is any subset of the nodes, then we similarly use notation like $o[U]$ and $P(o[U])$ for the corresponding tuples and their observed distribution.

Now, under which condition is a given joint distribution $P(o[V])$ explainable in terms of hidden variables on a given causal structure? This is a very difficult question in general and we are very far from having an answer, but there is an obvious necessary condition: if two subsets $V_1,V_2\subseteq V$ have distinct causal past, then $o[{V_1}]$ and $o[{V_2}]$ should be independent. More generally:

\begin{defn}
\label{defcor}
$P(o[V])$ is a \emph{correlation} if for any collection of subsets $V_1,\ldots,V_m\subseteq V$ with disjoint causal past, i.e.~$\pst{V_i} \cap \pst{V_j} = \emptyset$ for $i\neq j$, the distribution factors as
\beq
\label{correlation}
P(o[V_1]\ldots o[V_m]) = P(o[{V_1}])\cdots P(o[{V_m}]).
\eeq
\end{defn}

This is the basic necessary condition that we will use. From now on, we restrict our considerations to those joint distributions $P(o[V])$ that are correlations. No other joint distributions of outcomes will be considered on a given causal structure. When we define classical correlations, it will be our first result (Proposition~\ref{ccorriscorr}) to show that every classical correlation is actually a correlation, and similarly in the quantum case.

In deriving the concrete form of the constraints~\eqref{correlation} for a given (candidate) causal structure, it is sufficient to postulate~\eqref{correlation} only for $m=2$, i.e.~when one only considers two sets $U$ and $W$ with disjoint causal past, in which case the condition is
\beq
\label{bicorrelation0}
P(o[U] o[W] ) = P(o[U])\, P(o[W]).
\eeq
The reason for the equivalence is that one can derive~\eqref{correlation} from this assumption recursively. For example, for three sets $V_1$, $V_2$, and $V_3$ with disjoint causal past, one gets
\begin{align*}
P(o[V_1] o[V_2] o[V_3]) & = P(o[V_1] o[{V_2\cup V_3}]) \stackrel{\eqref{bicorrelation0}}{=} P(o[{V_1}]) \, P(o[{V_2\cup V_3}]) \\
&= P(o[{V_1}]) \, P(o[{V_2}] o[{V_3}]) \stackrel{\eqref{bicorrelation0}}{=} P(o[{V_1}])\, P(o[{V_2}]) \, P(o[{V_3}]).
\end{align*}
Here, the second equation uses the fact that $V_1$ and $V_2\cup V_3$ have disjoint causal past, which follows from the assumption that all the $V_i$ have pairwise disjoint causal past.

Moreover, in~\eqref{bicorrelation0} it is enough to consider those $U$ and $W$ which are \emph{maximal} in the sense that there is no node outside of these sets which could be added to either of them, while still preserving the condition of disjoint past. The reason for this equivalence is that for any pair $U,W$, there is a maximal one $U',W'$ with $U\subseteq U'$ and $W\subseteq W'$. Now the assumption
\[
P(o[U'] o[W']) = P(o[{U'}])\, P(o[{W'}])
\]
implies the desired equation~\eqref{bicorrelation} upon taking the marginal of both sides over all those variables which are contained in $U'$ or $W'$, but not in $U$ or $W$. Note that maximality requires in particular that $\pst{U}=U$ and $\pst{W}=W$, since otherwise $U$ or $W$ could be enlarged. Thereby, we have shown:

\begin{prop}
\label{corrprop}
$P(o[V])$ is a correlation if and only if for any maximal pair of subsets $U,W\subseteq V$ with $\pst{U}\cap\pst{W}=\emptyset$, one has
\beq
\label{bicorrelation}
P(o[U] o[W]) = P(o[U])\, P(o[W]).
\eeq
\end{prop}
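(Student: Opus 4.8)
The plan is to prove the two implications separately. The forward direction is immediate: if $P(o[V])$ is a correlation in the sense of Definition~\ref{defcor}, then specializing~\eqref{correlation} to the case $m=2$ already yields the factorization for \emphalt{every} pair of subsets with disjoint causal past, and in particular for the maximal ones, which is exactly~\eqref{bicorrelation}. So the content lies entirely in the converse.

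The backward implication I would establish by composing two reductions. The first passes from maximal pairs to arbitrary pairs: given any $U,W$ with $\pst{U}\cap\pst{W}=\emptyset$, I would enlarge them to a maximal pair $U'\supseteq U$, $W'\supseteq W$ by repeatedly adjoining admissible nodes---a process that terminates because $V$ is finite---and then recover the two-fold identity~\eqref{bicorrelation0} for $U,W$ by marginalizing the hypothesized identity~\eqref{bicorrelation} for $U',W'$ over the variables indexed by $U'\setminus U$ and $W'\setminus W$. The second reduction passes from~\eqref{bicorrelation0} to the full $m$-fold factorization~\eqref{correlation}: given $V_1,\dots,V_m$ with pairwise disjoint causal past, I would observe that $\pst{V_1}$ is disjoint from $\bigcup_{i\geq 2}\pst{V_i}=\pst{V_2\cup\dots\cup V_m}$, split off the factor $P(o[V_1])$ using~\eqref{bicorrelation0}, and then induct on $m$. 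Composing these two steps turns the maximal-pair hypothesis into the correlation property.

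The step I expect to demand the most care is the marginalization in the first reduction. It is valid precisely because $U'$ and $W'$ contain no common node---each node lies in its own causal past, so disjoint causal pasts force disjoint node sets---which lets the summation variables split into a block indexed by $U'\setminus U$ and a block indexed by $W'\setminus W$, so that the two sums distribute across the product $P(o[U'])\,P(o[W'])$. Along the way I would verify that maximality forces $\pst{U'}=U'$: if $u\rsa v$ for some $v\in U'$, then $\pst{u}\subseteq\pst{U'}$, so adjoining $u$ leaves $\pst{U'}$ unchanged and hence cannot spoil disjointness from $\pst{W'}$, whence $u\in U'$ already. This ``causal closure'' of a maximal pair is what certifies that the enlargement genuinely exhausts all admissible nodes, and it is also what makes the splitting of the marginalization variables unambiguous.
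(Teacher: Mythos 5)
Your proof is correct and follows essentially the same route as the paper: the paper likewise derives the $m$-fold factorization~\eqref{correlation} recursively from the binary case~\eqref{bicorrelation0} by grouping $V_2\cup\dots\cup V_m$ into one set, and handles the maximal-pair hypothesis by enlarging an arbitrary disjoint-past pair $U,W$ to a maximal pair $U',W'$ and marginalizing over the extra variables. Your added verifications---that disjoint causal pasts force $U'$ and $W'$ to be disjoint as node sets (so the marginalization splits across the product) and that maximality forces $\pst{U'}=U'$---are exactly the points the paper states more tersely and relies on in the same way.
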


For any given causal structure, it is instructive to write down a minimal set of conditions on $P(o[V])$ which guarantee that it is a correlation. We now carry this out in the familiar case of a Bell scenario. Instead of considering only the $2$-party scenario of Figure~\ref{Bellscen}, we generalize it to the \emph{$n$-party Bell scenario} in which the two ``arms'' $\{a,x\}$ and $\{b,y\}$ are replaced by an arbitrary number of arms $\{a_1,x_1\},\ldots,\{a_n,x_n\}$. More precisely, the $n$-party Bell scenario has $2n+1$ nodes: a ``source'' $s$, ``settings'' $x_1, \dotsc, x_n$ and ``outcomes'' $a_1, \dotsc, a_n$. For each party $i$ we have $s \to a_i$ and $x_i \to a_i$, and there are no other edges. Our first result about the $n$-party Bell scenario is that correlations on this causal structure can be characterized like this:

\begin{prop}
\label{nsbox}
In the $n$-party Bell scenario,  $P$ is a correlation if and only if it satisfies the \emph{free will equation}
\beq
\label{freewill}
P(o[{x_1}\ldots {x_n} s]) = P(o[{x_1}])\:\cdots\, P(o[{x_n}]) \, P(o[s])
\eeq
and the \emph{no-signalling equation}
\beq
\label{nosig}
P(o[{a_1}\ldots \bcancel{{a_i}} \ldots {a_n} {x_1}\ldots {x_n} s]) = P(o[{x_i}]) P(o[{a_1}\ldots \bcancel{{a_i}}\ldots {a_n} {x_1}\ldots \bcancel{{x_i}}\ldots {x_n} s])
\eeq
for every party $i=1,\ldots,n$.
\end{prop}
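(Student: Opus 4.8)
The plan is to prove both implications by relating the two stated equations to instances of the defining factorization property of a correlation, using the reduction to maximal pairs supplied by Proposition~\ref{corrprop}. Throughout I use that in this scenario $\pst{s}=\{s\}$, $\pst{x_i}=\{x_i\}$, and $\pst{a_i}=\{a_i,s,x_i\}$.

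For the ``only if'' direction, both equations are immediate special cases of Definition~\ref{defcor}. The free will equation~\eqref{freewill} is the factorization~\eqref{correlation} applied to the $n+1$ singletons $\{x_1\},\ldots,\{x_n\},\{s\}$, whose causal pasts are pairwise disjoint. The no-signalling equation~\eqref{nosig} for party $i$ is the biseparation~\eqref{bicorrelation0} applied to $U=\{x_i\}$ and $W=\{a_j,x_j : j\neq i\}\cup\{s\}$: here I would just check that $\pst{U}\cap\pst{W}=\emptyset$, since the causal past of $W$ is $\{a_j,x_j : j\neq i\}\cup\{s\}$, which does not contain $x_i$. The resulting factorization, after identifying $U\cup W$ with the index set appearing in~\eqref{nosig}, reproduces that equation verbatim.

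For the ``if'' direction I would invoke Proposition~\ref{corrprop} and verify~\eqref{bicorrelation} for every maximal disjoint pair. First I would classify these pairs: since $s$ lies in the past of every $a_i$, it belongs to at most one of $U,W$, and if it belonged to neither it could be adjoined to one of them (its past is just $\{s\}$), contradicting maximality; so say $s\in U$. Then $W$ can contain no outcome and no source, hence $W=\{x_i : i\in B\}$ for some $B\subseteq\{1,\ldots,n\}$, and maximality forces $U=\{s\}\cup\{a_i,x_i : i\notin B\}$. Thus the maximal pairs are exactly the $(U_B,W_B)$ indexed by subsets $B$, and the identity to be shown is that $\{x_i : i\in B\}$ is jointly independent of $\{s\}\cup\{a_j,x_j : j\notin B\}$.

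The crux, and the step I expect to be the main obstacle, is establishing this joint independence from the two hypotheses; I would prove it as a lemma by induction on $|B|$. Writing $B=B'\cup\{i_0\}$, I would apply the inductive hypothesis for $B'$ (whose relevant marginal still retains the outcome $a_{i_0}$), then marginalize out $a_{i_0}$ on both sides, and finally factor out $P(o[x_{i_0}])$ using the no-signalling equation~\eqref{nosig} for party $i_0$ together with the elementary fact that marginalizing out the remaining ``other'' variables preserves the independence of $x_{i_0}$. This yields $P(o[U_B]\,o[W_B])=\big(\prod_{i\in B}P(o[x_i])\big)\,P(o[U_B])$; combining it with the free will equation~\eqref{freewill}, which marginalized to the indices in $B$ gives $P(o[W_B])=\prod_{i\in B}P(o[x_i])$, produces exactly~\eqref{bicorrelation} for $(U_B,W_B)$, and Proposition~\ref{corrprop} then concludes that $P$ is a correlation. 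The delicate bookkeeping in the induction---tracking which outcomes and settings remain in each marginal and ensuring that no-signalling is applied only at a marginal to which it legitimately descends---is where I would expect to have to be most careful.
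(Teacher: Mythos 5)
Your proposal is correct and takes essentially the same route as the paper's proof: the ``only if'' direction by directly instantiating the factorization condition with the singletons and with $\{x_i\}$ versus the rest, and the ``if'' direction by classifying the maximal disjoint-past pairs via Proposition~\ref{corrprop} (settings-subset versus everything else containing $s$) and then inducting on the size of the settings-only set, using marginalized forms of~\eqref{nosig} and~\eqref{freewill}. The only cosmetic difference is that you carry the product $\prod_{i\in B}P(o[x_i])$ through the induction and invoke the free-will equation once at the end, whereas the paper folds it in at each step to keep the joint marginal $P(o[x_I])$.
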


If one makes use of~\eqref{freewill}, then the no-signalling equation~\eqref{nosig} can be seen to be equivalent to the more conventional form
\[
P(o[{a_1}\ldots \bcancel{{a_i}} \ldots {a_n}] \,|\, o[{x_1}\ldots {x_n} s]) = P(o[{a_1}\ldots \bcancel{{a_i}}\ldots {a_n}] \,|\, o[{x_1}\ldots \bcancel{{x_i}}\ldots {x_n} s]),
\]
which is usually interpreted as stating that the conditional probability on the left is independent of the particular value of $o[{x_i}]$. Also, it is usually assumed in the Bell scenario that the source $s$ does not have an outcome; in our formalism, this corresponds to taking the source's outcome $o[s]$ to be always the same. In an actual experiment, however, the source typically does have an outcome, corresponding to whether it successfully produced the required pair of entangled particles or not. In this sense, our approach automatically treats a Bell scenario in a more realistic manner than is usually done. Another difference is that a correlation in our approach does include chosen probabilities $P(o[{x_i}])$ for the ``settings'' $o[{x_i}]$. Again, this seems more realistic to us: in an actual experiment, the $o[{x_i}]$ have particular distributions, and these are even relevant in practical information processing applications like device-independent randomness expansion, where strong limitations on the $P(o[{x_i}])$ need to be imposed~\cite{VVrand}. Another example is device-independent quantum key distribution~\cite{VVqkd}, where protocol performance also depends on the $P(o[{x_i}])$, even in such a way that highly non-uniform distributions are optimal.

\begin{proof}
We begin with the ``only if'' part. To establish~\eqref{freewill} from the assumption that $P(o[V])$ is a correlation, use
\[
V_1 = \{x_1\},\quad\ldots,\quad V_n = \{x_n\},\quad V_{n+1} = \{s\}
\]
in~\eqref{correlation}. To establish~\eqref{nosig}, put similarly
\[
V_1 = \{x_i\},\qquad V_2 = \{a_1\ldots\bcancel{a_i}\ldots a_n x_1 \ldots \bcancel{x_i} \ldots x_n s\}.
\]
In both cases, it is straightforward to check that the condition of having disjoint causal past holds.

For the ``if'' part, suppose~\eqref{freewill} and~\eqref{nosig} hold. Thanks to Proposition~\ref{corrprop}, it suffices to check the correlation condition for two maximal disjoint-past sets $U$ and $W$. By $\pst{U}\cap \pst{W}=\emptyset$, at least one of $\pst{U}$ and $\pst{W}$ does not contain $s$; without loss of generality, we take this to be $U$. Therefore, by maximality, $s$ must be contained in $W$. Then since $U$ and $W$ have disjoint causal past, $U$ cannot contain $s$ or any $a_i$, so $U = \set{ x_i | i\in I }$ for some $I \subseteq \{1, \dotsc, n\}$. Again by maximality, we conclude that $W = \set{a_i, x_i | i\not\in I } \cup \set{s}$. If we write $\bar{I} \defin \{1,\dotsc,n\}{\setminus} I$, then the equation that we have to derive from~\eqref{freewill} and~\eqref{nosig} is that
\[
P(o[{a_{\bar{I}}}  {x_I} {x_{\bar{I}}} s] ) = P(o[{x_I}])\, P(o[{a_{\bar{I}}}{x_{\bar{I}}}s]),
\]
for an arbitrary set of parties $I\subseteq\{1,\ldots,n\}$. This equation can be interpreted as the analogue of~\eqref{nosig} in which the individual party $i$ is replaced by the set $I$. That this indeed follows from the assumptions is analogous to a well-known fact in the conventional formalism of no-signaling boxes, where no-signaling across a bipartition $I$ vs $\bar{I}$ follows from no-signaling with respect to each party individually~\cite{BLMPPR}. With our definitions, this derivation can be carried out by induction on the size of $I$: if $I$ contains just one party, then our target equation coincides with the assumption~\eqref{nosig}. For the induction step, we assume that the equation holds for $I$ and prove that it also holds for $I\cup\set{j}$ in place of $I$, where $j$ is an arbitrary party not in $I$. This works as follows:
\begin{align*}
P(o[a_{\bar{I}{\setminus}\{j\}} x_{I\cup\{j\}} x_{\bar{I}{\setminus}\{j\}} s]) & = P(o[{a_{\bar{I}{\setminus}\{j\}}}{x_I} {x_{\bar{I}}} s]) \\
& = P(o[{x_I}])\, P(o[{a_{\bar{I}{\setminus}\{j\}}} {x_{\bar{I}}} s]) \\
& \stackrel{\eqref{nosig}}{=} P(o[{x_I}])\, P(o[{x_j}])\, P(o[{a_{\bar{I}{\setminus}\{j\}}} {x_{\bar{I}{\setminus}\{j\}}} s]) \\
& \stackrel{\eqref{freewill}}{=} P(o[{x_{I\cup\{j\}}}])\, P(o[{a_{\bar{I}{\setminus}\{j\}}} {x_{\bar{I}{\setminus}\{j\}}} s]),
\end{align*}
as was to be shown. Here, the second step uses the induction assumption, marginalized over $o[a_j]$. The third and fourth steps use similar marginalizations of~\eqref{nosig} and~\eqref{freewill}.
\end{proof}

Finally, it is worth noting that the set of all correlations on a given graph $G$ does not depend on all details of the graph itself, but only on the partial order ``$\rsa$'' induced by $G$, as discussed in Section~\ref{causalsec}. The reason is that in order to apply Definition~\ref{defcor}, all that one needs to know about the causal structure is which events lie in the causal past of which other ones.

\newpage
\section{Classical correlations}
\label{secccorr}

As discussed in the introduction and the previous section, we want to assume that every node $v\in V$ carries a random variable $o[v]$, which represents the outcome of some probing of spacetime conducted at $v$. The basic assumption that we make is that these probings may not constitute a complete and faithful representation of the physical systems propagating through these events, and in addition the observations may modify those systems in an unknown and arbitrary manner. More concretely, it could be the case that there are additional ``hidden variables'' necessary to describe the systems under consideration, and the observations only reveal partial information about the values of these hidden variables. We believe that this is a hypothesis which naturally arises in many situations in the sciences: for example, we can measure action potentials in a biological neural network and try to infer the causal structure from the correlations behind these action potentials, but how would we know that there is no other form of information being transmitted between the neurons? If we measure a certain number of meteorological parameters in the atmosphere at different points in space and time, how can we be sure that we are not missing relevant information like the concentration of a certain aerosol, which may have a crucial effect on cloud formation? For these reasons, we allow for the propagation of hidden information and for the possibility that the observations modify this hidden information in an unknown manner.

There are two ways to make this idea precise:
\begin{itemize}
\item The hidden variables live on the \emphalt{edges} of the graph.

This is the approach that we pursue in this section. The local structure of a node $v$ is illustrated in Figure~\ref{localhv}: the ingoing edges $e\in E$ carry hidden variables $\lambda_e$, which determine the outcome $o[v]$ and the outgoing hidden variables via a certain conditional distribution $P(o[v]\lambda_{\out{v}}|\lambda_{\inc{v}})$\footnote{Here, we also write $\lambda_{\out{v}}$ as shorthand for the tuple of variables $(\lambda_e)_{e\in\out{v}}$ etc.}. In terms of a computational interpretation, the node $v$ operates like an \emph{information processing gate} taking the incoming hidden variables $\lambda_{\inc{v}}$ and turning them into outgoing hidden variables $\lambda_{\out{v}}$ together with an outcome $o[v]$. In terms of a physical interpretation, we think that it is appropriate to have the hidden variables live on the edges: after all, a hidden variable is supposed to represent the state of a physical system, and since we interpret our causal structures in terms of spacetime pictures, these physical systems should be described by their one-dimensional worldlines. Since we allow the systems to be modified in an arbitrary way by the observations, the conditional distribution $P(o[v]\lambda_{\out{v}}|\lambda_{\inc{v}})$ may be completely arbitrary. In particular, the outgoing systems may be of a different kind than the incoming ones, and we do not make any sort of ``noninvasiveness'' assumption~\cite{LG}, and hence our formalism does not comprise Leggett-Garg scenarios.
\item The hidden variables live on the \emphalt{nodes} of the graph. 

We develop this approach in Section~\ref{hbn} and show it to be equivalent to the previous one. While it is less susceptible to a physical or information processing interpretation, it is very natural from the mathematical point of view and fits in nicely with the standard approach to causal inference in terms of Bayesian networks~\cite{Pearl}. It captures the idea of assuming that there may be hidden information ``behind'' each node.
\end{itemize}

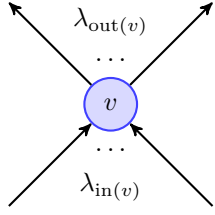
\begin{figure}
\begin{tikzpicture}[node distance=2.1cm,>=stealth',thick]
\tikzstyle{place}=[circle,thick,draw=blue!75,fill=blue!15,minimum size=7mm]
\node[place] (node) at (0,0) {$v$} ;
\node [above right of=node] (out2) {} ;
\node [above left of=node] (out3) {} ;
\node [below right of=node] (in1) {} ;
\node [below left of=node] (in2) {} ;
\draw[->] (node) -- (out2) ;
\draw[->] (node) -- (out3) ;
\draw[<-] (node) -- (in1) ;
\draw[<-] (node) -- (in2) ;
\node at (0,1.1) {$\lambda_{\out{v}}$} ;
\node at (0,.6) {$\ldots$} ;
\node at (0,-1.1) {$\lambda_{\inc{v}}$} ;
\node at (0,-.6) {$\ldots$} ;
\end{tikzpicture}
\caption{Intended local structure of a node $v$.}
\label{localhv}
\end{figure}

We now develop the first approach in some detail.

\begin{defn}
\label{defccorr}
$P(o[V])$ is a \emph{classical correlation} if for every $e\in E$ there is a hidden variable $\lambda_e$ given in terms of a probability space $(X_e,\Sigma_e)$, and for every $v\in V$ there is a conditional distribution
\beq
\label{gate}
P(o[v] \lambda_{\out{v}} | \lambda_{\inc{v}} )
\eeq
such that
\beq
P(o[V]) = \int_{\lambda_E} \prod_{v\in V} P(o[v] \lambda_{\out{v}} | \lambda_{\inc{v}}) .
\label{ccorr}
\eeq
\end{defn}

There are many possible variations on this definition having to do with the precise mathematical definitions used in the case that the hidden variables may take infinitely many values, which we explicitly allow. The definitions that we work with are explained in Appendix~\ref{appsop}; we will not consider other possible definitions in this work. We suspect that the set of classical correlations does not depend on which definitions are used, but we do not have a proof of this. The readers not interested in non-discrete hidden variables may want to restrict themselves to taking the $X_e$ to be finite or countable sets.

Some further explanation of~\eqref{ccorr} is in order. For any concrete value of the tuple of outcomes $o[V]$, the right-hand side evaluates to a non-negative number as follows. Any $P(o[v] \lambda_{\out{v}} | \lambda_{\inc{v}})$ is a $\lambda_{\inc{v}}$-dependent subnormalized measure on the product space $X_{\out{v}} = \prod_{e\in\out{v}} X_e$. The product in~\eqref{ccorr} hence yields a measure on the total product space $X_E = \prod_{e\in E} X_e$. Integrating this measure results in a number, corresponding to its normalization, which is the joint probability of getting the joint outcome $o[V]$. Strictly speaking, making sense of this product of conditional distributions and the integral requires the methods of Section~\ref{seccats}, since it is not obvious how to show that the integrand does indeed define a measure. We ignore this technical point for the time being.

As the terminology suggests, a classical correlation is indeed a correlation in the sense of Section~\ref{seccorr}.

\begin{prop}
\label{ccorriscorr}
If $P(o[V])$ is a classical correlation, then it is a correlation.
\end{prop}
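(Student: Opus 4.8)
The plan is to reduce, via Proposition~\ref{corrprop}, to proving the bipartite factorization $P(o[U]\, o[W]) = P(o[U])\, P(o[W])$ for any two subsets $U,W\subseteq V$ with $\pst{U}\cap\pst{W}=\emptyset$ (my argument will in fact work for all such pairs, not only the maximal ones). Write $A\defin\pst{U}$ and $B\defin\pst{W}$, so that $A$ and $B$ are disjoint and \emph{past-closed}, meaning $\pst{v}\subseteq A$ for all $v\in A$ (by transitivity of $\rsa$) and likewise for $B$, and set $S\defin A\cup B$.

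The key technical step is a \emph{marginalization lemma}: if $P(o[V])$ is a classical correlation and $S\subseteq V$ is past-closed, then the marginal $P(o[S])$ is again a classical correlation, now on the induced subgraph $G|_S$ whose edge set $E_S$ consists of the edges with both endpoints in $S$. I would prove this by peeling off the nodes of $V\setminus S$ one at a time in reverse topological order. The crucial combinatorial fact is that \emphalt{no} node of $V\setminus S$ has a child inside $S$: if $w\to c$ with $c\in S$, then $w\in\pst{c}\subseteq S$. Hence the topologically-greatest node $w\in V\setminus S$ is a global sink, with $\out{w}=\emptyset$; its gate is simply $P(o[w]\,|\,\lambda_{\inc{w}})$, and summing over the finitely many values of $o[w]$ yields $1$ for every value of $\lambda_{\inc{w}}$, by normalization of the conditional distribution, so the factor for $w$ disappears. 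The edges $\inc{w}$ now occur only as outputs of the parent gates, so integrating them out replaces each parent gate by its marginal over those outputs, which is again a normalized conditional distribution. The result is a classical correlation on $G$ restricted to $V\setminus\{w\}$, in which $S$ is still past-closed, and induction on $|V\setminus S|$ finishes the lemma.

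With the lemma in hand, I would exploit that $A$ and $B$ are past-closed and disjoint to conclude that $G$ has no edges between $A$ and $B$ in either direction: an edge with target in $B$ has its source in $B$ by past-closure, so it cannot start in $A$, and symmetrically. Consequently $G|_S$ is the disjoint union $G|_A\sqcup G|_B$, with $E_S=E_A\disjcup E_B$ and no gate at a node of $A$ referring to a variable of $E_B$ or vice versa. Applying the lemma to $S$ and splitting the product accordingly, the integral in~\eqref{ccorr} over $X_{E_S}=X_{E_A}\times X_{E_B}$ factors into an integral over $X_{E_A}$ times one over $X_{E_B}$; by the lemma applied to $A$ and to $B$ separately these are exactly $P(o[A])$ and $P(o[B])$ (note that for a node of $A$ the gate marginalized to $S$ coincides with the gate marginalized to $A$, since such a node has no children in $B$). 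This gives $P(o[A]\,o[B]) = P(o[A])\,P(o[B])$, and marginalizing both sides over $o[A\setminus U]$ and $o[B\setminus W]$ yields the desired $P(o[U]\,o[W]) = P(o[U])\,P(o[W])$.

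The main obstacle is not combinatorial but measure-theoretic: as the paper itself flags after~\eqref{ccorr}, one must justify that the product of the conditional distributions defines a genuine measure and that summing out outcomes, integrating out the ``future'' edge variables, and applying Fubini to separate the integrals over the disjoint edge sets $E_A$ and $E_B$ are all legitimate when the $X_e$ are arbitrary measurable spaces. In the discrete case every step above is a finite sum and the argument is entirely routine; in general these manipulations rely on the formalism of stochastic operations developed in Section~\ref{seccats} and Appendix~\ref{appsop}, and it is there that the normalization and Fubini-type interchanges used in the marginalization lemma acquire their rigorous meaning.
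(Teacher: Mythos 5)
Your proposal is correct and takes essentially the same route as the paper's proof: the paper's central step is the marginal formula~\eqref{cmarginal} for past-closed sets, established by the same node-peeling induction with gate normalization (the paper removes a node $w$ with $\pa{w}\subseteq U$, whose gate --- including its now-dangling outgoing edges --- integrates to $1$), after which disjointness of the two past-closed sets makes the integrand in~\eqref{ccorr} split and the integral factorize, exactly as in your Fubini step. The remaining differences are bookkeeping rather than substance --- you repackage the marginal as a genuine classical correlation on the induced subgraph by absorbing dangling edges into marginalized parent gates, where the paper keeps the original gates and integrates over all of $\out{U}$ --- except for one small omission: a correlation must in particular be a normalized probability distribution, which the paper verifies via the case $U=\emptyset$ of~\eqref{cmarginal}, and which in your scheme is precisely the case $S=\emptyset$ of your marginalization lemma and should be stated.
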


\begin{proof}
It is already non-trivial to verify that $P(o[V])$ is actually a probability distribution, i.e.~satisfies the normalization equation $\sum_{o[V]} P(o[V]) = 1$.

We start by showing that if $U\subseteq V$ is some set of nodes with $\pst{U}=U$, then the marginal $P(o[U])$ of~\eqref{ccorr} can be computed like this:
\beq
\label{cmarginal}
P(o[U]) = \int_{\lambda_\out{U}} \prod_{v\in U}  P(o[v] \lambda_{\out{v}} | \lambda_{\inc{v}}) .
\eeq
Here, $\out{U} = \bigcup_{v\in U} \out{v}$, so that the integration variables $\lambda_{\out{U}}$ comprise all those $\lambda_e$ with $\src{e}\in U$. This formula should make intuitive sense: in order to determine the distribution of outcomes in $\pst{U}=U$, it should be sufficient to consider only the nodes and edges in $U$ itself, since no other events or hidden variables can influence the variables in $U$.

We prove~\eqref{cmarginal} by induction on the size of $V{\setminus} U$. The base case is $U=V$, for which the formula is precisely the definition~\eqref{ccorr}. For the induction step, we need to prove that the formula holds for a given $U$ with $\pst{U}=U$, when assuming that it holds for all bigger $U'$ in place of $U$ which likewise satisfy $\pst{U'}=U'$. For the given $U$, we can choose a node $w$ with $\pa{w}\subseteq U$; upon putting $U'\defin U\cup\{w\}$, it is then guaranteed that $\pst{U'}=U'$ as well. We therefore obtain, by the induction assumption,
\begin{align*}
P(o[U]) & = \sum_{o[w]} P(o[U] o[w]) = \sum_{o[w]} \,\int_{\lambda_{\out{U}\cup\out{w}}} P(o[w] \lambda_{\out{w}} | \lambda_{\inc{w}}) \prod_{v\in U} P(o[v] \lambda_{\out{v}} | \lambda_{\inc{v}}) \\
& = \int_{\lambda_{\out{U}}} \int_{\lambda_{\out{w}}} \sum_{o[w]} P(o[w] \lambda_{\out{w}} | \lambda_{\inc{w}}) \prod_{v\in U} P(o[v] \lambda_{\out{v}} | \lambda_{\inc{v}}) \\
& = \int_{\lambda_{\out{U}}}  \left(\prod_{v\in U} P(o[v] \lambda_{\out{v}} | \lambda_{\inc{v}})\right) \,\int_{\out{w}} \sum_{o[w]} P(o[w]\lambda_{\out{w}} | \lambda_{\inc{w}}) \\
& = \int_{\lambda_{\out{U}}}  \prod_{v\in U} P(o[v] \lambda_{\out{v}} | \lambda_{\inc{v}}) .
\end{align*}
Here, in the last step we have used the normalization of probability for the conditional distribution $P(o[v]\lambda_{\out{v}}|\lambda_{\inc{v}})$, which states precisely that the integral-sum on the right evaluates to $1$.

This completes the proof of~\eqref{cmarginal}. In the special case $U=\emptyset$, the right-hand side of~\eqref{cmarginal} is an empty product and therefore evaluates to $1$, which proves that $P(o[V])$ is indeed a probability distribution.

To keep the notation simple, we prove the ``binary'' correlation equation~\eqref{bicorrelation} with the assumption $\pst{U}=U$ and $\pst{W}=W$, which we already know to suffice. Then we have $\pst{U\cup W} = \pst{U} \cup \pst{W} = U\cup W$, and therefore we obtain by~\eqref{cmarginal},
\begin{align*}
P(o[U] o[W]) = & \int_{\lambda_{\out{U\cup W}}} \prod_{v\in U\cup W} P(o[v] \lambda_{\out{v}} | \lambda_{\inc{v}}) \\
 = & \int_{\lambda_{\out{U}}} \int_{\lambda_{\out{W}}} \left( \prod_{v\in U} P(o[v] \lambda_{\out{v}} | \lambda_{\inc{v}}) \right) \cdot\left( \prod_{v\in W} P(o[v] \lambda_{\out{v}} | \lambda_{\inc{v}}) \right) \\
 = & \left( \int_{\lambda_{\out{U}}} \prod_{v\in U} P(o[v] \lambda_{\out{v}} | \lambda_{\inc{v}}) \right) \cdot \left( \int_{\lambda_{\out{W}}} \prod_{v\in W} P(o[v] \lambda_{\out{v}} | \lambda_{\inc{v}}) \right) \\
 = & P(o[U]) \, P(o[W]),
\end{align*}
as was to be shown. 
\end{proof}

In the case of Bell scenarios, classical correlations correspond exactly to those conditional probability distributions that are usually called ``local''~\cite{Bellreview}:

\begin{prop}
\label{cbell}
In the $n$-party Bell scenario, $P$ is a classical correlation if and only if it satisfies the free will equation~\eqref{freewill}, and the corresponding conditional probabilities are of the form
\beq
\label{lcausal}
P(o[a_1\ldots a_n]|o[x_1 \ldots x_n s]) = \int_\lambda P(o[a_1]|o[x_1]\lambda)\cdots P(o[a_n]|o[x_n]\lambda) P(\lambda|o[s])
\eeq
for some single hidden variable $\lambda$ living on some measurable space $(X,\Sigma)$.
\end{prop}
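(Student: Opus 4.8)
The plan is to prove both implications by explicitly relating the edge-wise hidden variables of Definition~\ref{defccorr} to the single hidden variable $\lambda$ appearing in~\eqref{lcausal}. Write $\lambda_i^s$ for the hidden variable on the edge $s\to a_i$ and $\lambda_i^x$ for the one on $x_i\to a_i$. In this scenario the only nonempty incoming-edge set is $\inc{a_i}=\{s\to a_i,\,x_i\to a_i\}$, so the gates are $P(o[s]\,\lambda_1^s\cdots\lambda_n^s)$ at the source, $P(o[x_i]\,\lambda_i^x)$ at each setting, and $P(o[a_i]\,|\,\lambda_i^s\lambda_i^x)$ at each outcome node. The guiding idea is to \emph{bundle} the source's outgoing variables into a single variable $\lambda\defin(\lambda_1^s,\dots,\lambda_n^s)$ living on the product space $X\defin\prod_{i=1}^n X_{s\to a_i}$.

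For the ``only if'' direction, the free will equation~\eqref{freewill} is immediate: a classical correlation is a correlation by Proposition~\ref{ccorriscorr}, and every correlation in the $n$-party Bell scenario satisfies~\eqref{freewill} by Proposition~\ref{nsbox}. For the local form I would start from~\eqref{ccorr} with $\pst{V}=V$ and note that $\lambda_i^x$ occurs only in party $i$'s factors, so a Fubini-type interchange carries out the $\lambda_i^x$-integrals separately:
\[
P(o[V])=\int_\lambda P(o[s]\,\lambda)\,\prod_{i=1}^n \int_{\lambda_i^x} P(o[x_i]\,\lambda_i^x)\,P(o[a_i]\,|\,\lambda_i^s\lambda_i^x).
\]
Since the $o[a_i]$ are discrete, summing over $o[a_i]$ and using normalization of the $a_i$-gate shows that the inner integral has $o[x_i]$-marginal $P(o[x_i])=\int_{\lambda_i^x}P(o[x_i]\,\lambda_i^x)$. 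Defining the normalized response $P(o[a_i]\,|\,o[x_i]\,\lambda)$ as that inner integral divided by $P(o[x_i])$ (where positive, arbitrary otherwise) and $P(\lambda\,|\,o[s])\defin P(o[s]\,\lambda)/P(o[s])$, dividing $P(o[V])$ by the free-will-factored $P(o[x_1\ldots x_n s])=P(o[x_1])\cdots P(o[x_n])\,P(o[s])$ cancels the $\prod_i P(o[x_i])$ and the $P(o[s])$, leaving exactly~\eqref{lcausal}.

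For the ``if'' direction I would run this construction in reverse. Take $X_{s\to a_i}\defin X$ for every $i$ and let the source gate emit $n$ perfectly correlated copies of $\lambda$, i.e.\ the measure on $X^n$ obtained by pushing $P(\lambda\,|\,o[s])$ forward along the diagonal $\lambda\mapsto(\lambda,\dots,\lambda)$ and weighting by the marginal $P(o[s])$. Let each setting gate output a copy of its own outcome, $\lambda_i^x\defin o[x_i]$ (a diagonal measure weighted by $P(o[x_i])$), and let the outcome gate at $a_i$ be the given kernel $P(o[a_i]\,|\,o[x_i]\,\lambda)$ read with $\lambda=\lambda_i^s$ and $o[x_i]=\lambda_i^x$. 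Substituting into~\eqref{ccorr} and integrating out the diagonal source and setting variables collapses each copy, producing
\[
P(o[s])\prod_{i=1}^n P(o[x_i])\int_\lambda \prod_{i=1}^n P(o[a_i]\,|\,o[x_i]\,\lambda)\,P(\lambda\,|\,o[s]);
\]
by~\eqref{lcausal} and~\eqref{freewill} this equals $P(o[x_1\ldots x_n s])\,P(o[a_1\ldots a_n]\,|\,o[x_1\ldots x_n s])=P(o[V])$, so $P$ is a classical correlation.

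The main obstacle is measure-theoretic rather than combinatorial: I must check that each constructed object---the broadcasting source gate, the copying setting gates, and the averaged responses $P(o[a_i]\,|\,o[x_i]\,\lambda)$---is a legitimate stochastic operation in the sense of Appendix~\ref{appsop}, and that the interchanges of $\sum_{o[a_i]}$, $\int_{\lambda_i^x}$ and $\int_\lambda$ are justified. This is eased considerably by the fact that every variable one ever \emph{conditions on} ($o[s]$ and the $o[x_i]$) is discrete, so no disintegration of a continuous variable is required and the kernels $P(\lambda\,|\,o[s])$ and $P(o[a_i]\,|\,o[x_i]\,\lambda)$ are obtained by elementary division. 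The genuinely delicate point is formalizing the ``copy/broadcast'' operation that duplicates a hidden variable across several edges---precisely the classical feature that will fail in the quantum setting of Section~\ref{secqc}.
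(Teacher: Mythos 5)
Your proposal is correct and follows essentially the same route as the paper's proof: in the ``only if'' direction you bundle the source's outgoing variables into $\lambda=\lambda_{\out{s}}$, integrate out each $\lambda_{x_i\to a_i}$ to define the normalized responses, and divide by the free-will factorization; in the ``if'' direction you broadcast $\lambda$ diagonally along the edges $s\to a_i$ and copy $o[x_i]$ onto $x_i\to a_i$, exactly as the paper does. Your version merely makes explicit two points the paper leaves terse---deriving~\eqref{freewill} from Propositions~\ref{ccorriscorr} and~\ref{nsbox}, and reading ``each $\lambda_{s\to a_i}$ is a copy of $\lambda$'' as a diagonal pushforward---which is a welcome clarification rather than a different argument.
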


Here, we implicitly also assume that only those values of $o[x_i]$ and $o[s]$ are considered for which $P(o[x_i]) > 0$ and $P(o[s]) > 0$, since otherwise conditioning on these variables would not make sense.

\begin{proof}
We begin with the ``only if'' part. To establish~\eqref{lcausal} for a given classical correlation, we define the conditional distributions occurring on the right-hand side of~\eqref{lcausal} by identifying $\lambda\defin \lambda_{\out{s}}$ and putting 
\begin{align*}
P(\lambda|o[s]) &\defin \frac{P(o[s]\lambda_{\out{s}})}{P(o[s])},\\
P(o[a_i]|o[x_i]\lambda_{\out{s}}) &\defin \int_{\lambda_{x_i\to a_i}} \frac{P(o[a_i]|\lambda_{s\to a_i}\lambda_{x_i\to a_i})P(o[x_i]\lambda_{x_i\to a_i})}{P(o[x_i])}
\end{align*}
where the (conditional) distributions in the numerators of the right-hand sides are all part of the given data which realizes $P$ as a classical correlation, while those in the denominators are marginals thereof. Using the free will equation~\eqref{freewill}, we have
\beq
  P(o[a_1\ldots a_n] | o[x_1 \ldots x_n s]) = \frac{P(o[a_1 \ldots a_n x_1 \ldots x_n s])}{P(o[x_1])\dotsm P(o[x_n])P(o[s])}
  \label{expandedcond}
\eeq
which, when the numerator is expanded using~\eqref{ccorr}, gives \eqref{lcausal}.

Now for the ``if'' part. Suppose~\eqref{freewill} and~\eqref{lcausal} hold. We establish that this is a classical correlation as follows. Let each $\lambda_{s\to a_i}$ be a copy of $\lambda$. Let $P(s\lambda_\out{s})\defin P(\lambda|s)\, P(s)$. We take $\lambda_{x_i\to a_i}$ to be $o[x_i]$ itself, meaning that $P(o[x_i]\lambda_{x_i\to a_i}) \defin P(o[x_i])\cdot\delta_{o[x_i],\lambda_{x_i\to a_i}}$. Finally, let $P(o[a_i]|\lambda_{x_i\to a_i}\lambda_{s\to a_i}) \defin P(o[a_i]|o[x_i]\lambda)$. Multiplying~\eqref{lcausal} by~\eqref{freewill} then gives~\eqref{ccorr}.
\end{proof}

A similar argument will work in the quantum case; see Proposition~\ref{bellclassification}.

Bell scenarios have the desirable feature that the resulting conditional distributions of the form~\eqref{lcausal} form a convex polytope known as the \emph{local polytope}~\cite{Bellreview}. Checking whether a given correlation is classical then simply boils down to asking whether the associated conditional distribution lies in the local polytope. However, for general causal structures, determining whether a given correlation is classical is a very difficult problem. A good example is the triangle scenario displayed in Figure~\ref{trianglefig}, which has been studied in~\cite{Fri,HLP,CLG} for the special case that the outcomes $o[x]$, $o[y]$ and $o[z]$ on the first layer of nodes are deterministic variables. In this situation, the only known generally applicable conditions for a correlation to be classical are certain entropic inequalities; see~\cite[Lemma~2.14]{Fri} and~\cite[Sec.~VII.B]{CLG}. One of the problematic aspects is that the set of classical correlations is not convex. In fact, this already holds in the case of a Bell scenario: although the set of classical correlations is a convex polytope on the level of the conditional distributions~\eqref{lcausal}, convexity fails as soon as one works in the space of ``unconditional'' distributions $P(o[a_1\ldots a_n x_1\ldots x_ns])$.

So it is an important problem to understand the structure of classical correlations better. One relevant question is whether one can bound the maximally necessary sizes of the hidden variable spaces $(X_e,\Sigma_e)$ in terms of the number of outcomes of each $o[v]$. In Bell scenarios, this was shown to be possible by Fine~\cite{Fine}, and it again has to do with the fact that the associated conditional distributions form a convex polytope. But in general, it is not even clear whether one must allow the hidden variable spaces $(X_e,\Sigma_e)$ to be continuous or even \emphalt{bigger than continuous} in order to obtain all classical correlations. What we have been able to prove is that finite hidden variable spaces are sufficient at least in an approximate sense:

\begin{thm}
\label{finitethm}
If $P(o[V])$ is a classical correlation on an arbitrary causal structure $G$ and $\eps > 0$, then there is another classical correlation $P'(o[V])$ on $G$ with $|P(o[V]) - P'(o[V])| < \eps$ for any value of $o[V]$ and such that the probability spaces $(X'_e,\Sigma'_e)$ occurring in the hidden variable model for $P'$ are all finite.
\end{thm}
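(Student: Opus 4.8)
The plan is to read the defining formula~\eqref{ccorr} as a composite of stochastic operations performed in a topological order of $G$, and then to replace, one edge and one node at a time, each infinite hidden-variable space $(X_e,\Sigma_e)$ by a finite one and each conditional distribution $P(o[v]\lambda_{\out{v}}|\lambda_{\inc{v}})$ by a kernel taking only finitely many values, choosing every such replacement fine enough that the finitely many numbers $P(o[V])$ move by less than $\eps$ in total. Because $V$ is finite and each $O_v$ is finite, there are only finitely many outcome tuples $o[V]$, so it suffices to control each of these probabilities separately and then take the worst case.

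First I would fix, for each edge $e$, a reference measure on $X_e$, namely the marginal law of $\lambda_e$ induced by the full model; the marginalization formula~\eqref{cmarginal} guarantees that this law depends only on the part of the model lying in $\pst{\src{e}}$, so it is well defined and computable from upstream data alone. I would then order the nodes as $v_1,\dots,v_n$ so that parents precede children, and approximate the kernels in this order. When it is node $v$'s turn, the laws of its incoming variables $\lambda_{\inc{v}}$ have already been fixed by the (already finite) approximations upstream, so the single object to be discretized is the kernel $P(o[v]\lambda_{\out{v}}|\lambda_{\inc{v}})$. I would approximate it in two moves: partition each input space $X_e$, $e\in\inc{v}$, finely enough that the kernel is close, in $L^1$ against the incoming reference law, to a kernel that is constant on the cells of the partition; and coarse-grain each output space $X_e$, $e\in\out{v}$, to the finitely many cells of a partition. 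The crucial bookkeeping is that the single partition chosen for an edge $e$ must serve simultaneously as the output coarse-graining for its source $\src{e}$ and as the input resolution for its target $\tar{e}$. The existence of partitions that are fine enough for both roles at once is exactly the measure-theoretic content supplied by the auxiliary results of Appendix~\ref{secfinapp}, and amounts to approximating measurable functions by simple ones in $L^1$.

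The error is then controlled by a telescoping argument: since the integrand of~\eqref{ccorr} is multilinear in the node kernels and every kernel is sub-normalized, replacing a single kernel by one that is $L^1$-close against the correct reference law changes each $P(o[V])$ by a controlled amount, and summing these $n$ contributions bounds the total deviation. Choosing each of the $n$ partitions fine enough therefore forces $|P(o[V])-P'(o[V])|<\eps$ uniformly in $o[V]$, while the resulting $P'$ is manifestly built from finite hidden-variable spaces; that it is again a classical correlation follows because each coarse-grained kernel is still a genuine conditional distribution, so that Proposition~\ref{ccorriscorr} continues to apply. The hard part will be the measure-theoretic control of this error propagation through arbitrary, possibly non-standard measurable spaces: one must ensure both that the per-edge partitions can be taken compatible with their two conflicting roles and that the coarse-graining preserves normalization exactly, and it is this delicate bookkeeping, rather than any single idea, that makes a fully rigorous argument difficult.
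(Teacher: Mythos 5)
There is a genuine gap, and it lies in the direction of your sweep. You process nodes parents-first and fix each edge's finite partition when you handle its \emph{source}; but the fineness that the partition of $X_e$ must have is dictated by the kernel at its \emph{target} $\tar{e}$, which you have not yet examined. Concretely: by the time it is node $v$'s turn, every input space of $v$ has already been collapsed to a finite set by the processing of $v$'s parents, so the instruction ``partition each input space $X_e$, $e\in\inc{v}$, finely enough'' cannot be carried out --- there is nothing left to refine, and the $L^1$-error of $v$'s kernel against the already-frozen cells is a fixed quantity that no choice made at stage $v$ can drive below $\delta$. This circularity (upstream nodes must commit to partitions whose required fineness is only revealed downstream) is exactly what you call ``the crucial bookkeeping,'' but Appendix~\ref{secfinapp} does not supply it: Proposition~\ref{finiteapprox} concerns a \emph{single} measurable function into a finite set together with a \emph{fixed} measure on a product, and its real content is not generic simple-function approximation but the fact (via Lemma~\ref{algebraapprox}, approximation by finite unions of measurable rectangles) that the approximant can be made to factor through \emph{per-factor} finite coarse-grainings. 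It says nothing about one partition serving two requirements imposed at different stages of a sweep. A second, related gap: discretizing the kernel of an \emph{interior} node is not even well-posed in your scheme, because before its outputs are coarse-grained that kernel takes values in measures on infinite spaces, where ``close to a cell-constant kernel'' has no usable meaning (kernel-measurability is weaker than Borel measurability into the total-variation metric space, which is non-separable in general, and simple-function approximation can fail there).

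The paper's proof resolves both problems by running the recursion in the opposite direction. It deletes a \emph{sink} $w$ (so $\out{w}=\emptyset$, the kernel $P(o[w]|\lambda_{\inc{w}})$ takes values in the simplex over the finite set $O_w$, and can be snapped to a finite grid $\hat{P}_\delta$), applies Proposition~\ref{finiteapprox} once there to obtain per-edge coarse-grainings $f_e:X_e\to X'_e$ for $e\in\inc{w}$, and then absorbs the coarse variables $\mu_e$ into the \emph{outcomes} of the parent nodes --- which only requires post-composing the parents' kernels with $f_e$, a pushforward, never a disintegration or a choice of cell representatives. The induction hypothesis is then applied to the resulting classical correlation on $G{\setminus}\{w\}$, and because the $\mu_{\inc{w}}$ are now outcomes, that hypothesis controls the joint law of $(o[V{\setminus}\{w\}],\mu_{\inc{w}})$, which is precisely what the final telescoping estimate (your multilinearity argument, which is sound in spirit) needs in order to compose with the finite kernel $P'_\delta(o[w]|\mu_{\inc{w}})$. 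In short: every kernel is discretized only at the moment its node has become a sink of the remaining graph, so its input partitions can still be chosen freely and its output side is already finite-dimensional. Without reversing the order of your sweep in this way (or an equivalent device), the per-node errors in your construction cannot be made small, and the proof as proposed does not go through.
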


It remains open whether this also holds without the approximation, i.e.~whether any classical correlation can be exactly represented in terms of finite hidden variable spaces. We suspect that this is not the case, but we do not have any examples. If one generalizes our definitions to the case of countably infinite $G$, as we do informally in the discussion opening Section~\ref{hbn}, then there are well-known examples~\cite[Sec.~3.6]{Upper}.

\begin{proof}
We construct, for every $\delta>0$, a classical correlation $P'_\delta$ realized by hidden variables with only finitely many values such that $|P(o[V]) - P'_\delta(o[V])|$ converges to $0$ as $\delta\to 0$. We perform this construction by induction on $|V|$, the number of nodes of the given graph.

For $|V|=1$, we have no edges and therefore no hidden variables. Hence we can simply put $P'_\delta(o[V])\defin P(o[V])$ for any $\delta$, and there is nothing to prove.

For the induction step, we start with a graph $G=(V,E)$ and a classical correlation $P(o[V])$ realized by hidden variables $\lambda_e$ as in the classical correlation equation~\eqref{ccorr}, where the hidden variable spaces $(X_e,\Sigma_e)$ may be of arbitrary size. We then choose a node $w\in V$ with $\out{w}=\emptyset$; such a node exists by the assumptions of finiteness and acyclicity. Taking out the node $w$ together with all the edges with target $w$ gives a smaller graph to which the induction hypothesis can be applied, but we need to be careful to do this in the right way. Before doing so, we need a bit more preparation.

We first consider the conditional distribution $P(o[w]|\lambda_{\inc{w}})$ at the chosen node $w$. We choose a finite subset of $[0,1]$ such that any number in $[0,1]$ has distance at most $\delta$ from this finite subset. Then we can replace the given conditional distribution $P(o[w]|\lambda_{\inc{w}})$ by a perturbation $\hat{P}_\delta(o[w]|\lambda_{\inc{w}})$ which only takes values in this finite subset and such that
\beq
\label{hatdiff}
| \hat{P}_\delta(o[w] | \lambda_{\inc{w}}) - P(o[w] | \lambda_{\inc{w}}) | < \delta \qquad \forall\, o[w],\lambda_{\inc{w}}.
\eeq
Then, for any given tuple of values for $\lambda_{\inc{w}}$, there are only finitely many possibilities for the conditional distribution $\hat{P}_\delta(o[w]|\lambda_{\inc{w}})$. Hence we can consider the family of these distributions, as indexed by $\lambda_{\inc{w}}$, as a measurable function from $X_{\inc{w}} = \prod_{e\in\inc{w}}X_e$ to some finite set $S$. We consider the product measurable space $X_\inc{w}$ to be equipped with the corresponding marginal of the total measure
\beq
\label{totaldist}
P(o[V]\lambda_E) = \prod_{v\in V} P(o[v] \lambda_{\out{v}} | \lambda_{\inc{v}}) ,
\eeq
which determines the distribution of the $\lambda_{\inc{w}}$. By virtue of Proposition~\ref{finiteapprox}, we can replace the measurable spaces $(X_e,\Sigma_e)$ for all $e\in\inc{w}$ by finite sets $X'_e$ carrying hidden variables $\mu_e$. More concretely, the $X'_e$ are coarse-grainings of the $X_e$ arising via functions $f_e : X_e\to X'_e$ and come themselves equipped with a function
\[
P'_\delta \: :\: X'_{\inc{w}} \lra S
\]
which assigns to every tuple $\mu_{\inc{w}}\in X'_{\inc{w}}$ a distribution $P'_\delta(o[w]|\mu_{\inc{w}})$ having the property that
\beq
\label{hatprime}
P'_\delta\!\left(o[w]| f_{\out{w}}(\lambda_{\out{w}})\right) = \hat{P}_\delta(o[w]|\lambda_{\inc{w}})
\eeq
holds with probability greater than $1-\delta$ with respect to the original distribution $P(\lambda_{\inc{w}})$ as the marginal of~\eqref{totaldist}.

In order to apply the induction hypothesis, we need to modify the correlation a bit by regarding each new hidden variable $\mu_e$ for $e\in\inc{w}$ as part of the outcome $o[\src{e}]$ associated to the source node $\src{e}$. In other words, we define new outcomes:
\[
\tilde{O}_v \defin \begin{cases} O_v \times  X'_{v\to w} & \textrm{if } v\to w, \\ O_v & \textrm{if } v\not\to w, \end{cases} \qquad\qquad \tilde{o}[v] \defin \begin{cases} (o[v],\mu_{v\to w}) & \textrm{if } v\to w, \\ o[v] & \textrm{if } v\not\to w. \end{cases}
\]
Now there is an obvious classical correlation on the remainder graph $G{\setminus}\{w\}$ corresponding to a distribution of the new outcomes $\tilde{o}[V]$: at those $v$ with $v\not\to w$, we retain the previous conditional distribution $P(o[v]\lambda_{\out{v}}|\lambda_{\inc{v}})$; at those $v$ with $v\to w$, we compose the old $P(o[v]\lambda_{\out{v}}|\lambda_{\inc{v}})$ with the $f_e$ from above, which gives a new conditional distribution $P(o[v]\mu_e\lambda_{\out{v}{\setminus}\{w\}}|\lambda_{\inc{v}})$, in which the first two components are now considered to be forming the new outcome $\tilde{o}[v]$. This data defines a new classical correlation $P(\tilde{o}[V])$ on the smaller graph $G{\setminus}\{w\}$.

We now apply the induction hypothesis to \emphalt{this} classical correlation. This way, we obtain finite sets $X'_e$ carrying hidden variables $\mu_e$ for all edges $e\in E$ with $\tar{e}\neq w$, where the new variables $\mu_e$ have a priori nothing to do with the previous $\mu_{\inc{w}}$ except for sharing the same letter, and conditional probabilities $P'_\delta(\tilde{o}[v]\mu_{\out{v}}|\mu_{\inc{v}})$ such that the difference
\beq
\label{indhypo}
|P'_\delta( o[{V{\setminus}\{w\}}]\, \mu_{\inc{w}} )  - P(o[{V{\setminus}\{w\}}]\, \mu_{\inc{w}} ) |  = | P'_\delta( \tilde{o}[V{\setminus}\{w\}]) - P(\tilde{o}[V{\setminus}\{w\}]) | 
\eeq
tends to $0$ as $\delta\to 0$. 

Taking all this together, we will obtain the desired bounds on the difference
\begin{align*}
P(o[V]) & - P'_\delta(o[V]) \\
& = P(o[V]) - \sum_{\mu_{\inc{w}}} P'_\delta(o[w]|\mu_{\inc{w}}) P(o[{V{\setminus}\{w\}}]\mu_{\inc{w}}) \\
& + \sum_{\mu_{\inc{w}}} \left( P'_\delta(o[w]|\mu_{\inc{w}}) P(o[{V{\setminus}\{w\}}]\mu_{\inc{w}}) - P'_\delta(o[w]|\mu_{\inc{w}}) P'_\delta(o[{V{\setminus}\{w\}}] \mu_{\inc{w}})\right),
\end{align*}
where we have inserted the first sum such that it cancels with the first part of the second sum. The second sum can be bounded in terms of~\eqref{indhypo}: by making the $\delta$ used in the application of the induction hypothesis arbitrarily small, we can also make the second sum arbitrarily small. Thus it remains to bound the difference between $P(o[V])$ and the first sum, which can be done like this:
\begin{align*}
\bigg| P(o[V]) - & \sum_{\mu_{\inc{w}}} P'_\delta(o[w]|\mu_{\inc{w}}) P(o[{V{\setminus}\{w\}}]\mu_{\inc{w}}) \bigg| \\
& = \left| P(o[V]) - \int_{\lambda_{\inc{w}}} P'_\delta(o[w]|f_{\inc{w}}(\lambda_{\inc{w}})) P(o[{V{\setminus}\{w\}}]\lambda_{\inc{w}}) \right| \\
& \stackrel{\eqref{hatprime}}{\leq} \left| P(o[V]) - \int_{\lambda_{\inc{w}}} \hat{P}_\delta(o[w]|\lambda_{\inc{w}}) P(o[{V{\setminus}\{w\}}]\lambda_{\inc{w}}) \right| + \delta \\
& \stackrel{\eqref{hatdiff}}{\leq} \left| P(o[V]) - \int_{\lambda_{\inc{w}}} P(o[w]|\lambda_{\inc{w}}) P(o[{V{\setminus}\{w\}}]\lambda_{\inc{w}}) \right| + 2 \delta \\
& = 2\delta.
\end{align*}
\end{proof}

In Bell scenarios, it is well-known that the ``information processing'' at the measurement nodes $a_i$ may be assumed to be deterministic without loss of generality~\cite{Fine}, in the sense that any classical correlation has a hidden variable model satisfying this property. For general causal structures, it also seems very plausible that it should be possible to assume that the information processing at each node $v$ with $\pa{v}\neq\emptyset$ is deterministic. The idea here is that if $v$ has at least one parent node, then the randomness needed can be ``pushed back'' in the sense of generating the randomness at the parent node and then transporting it from there to $v$ by augmenting the hidden variable on that edge with this additional information. If one repeats this process of pushing randomness generation back to the parents nodes, then one eventually ends up with a classical model in which all information processing is deterministic, except at those nodes that do not have any parents, which is where all the randomness originates.

More precisely, the hypothesis is this:

\begin{conj}
\label{detconj}
If $P(o[V])$ is a classical correlation, then $P(o[V])$ can be represented in the form~\eqref{ccorr} in such a way that all conditional probabilities are deterministic at all nodes $v$ with $\inc{v}\neq\emptyset$, meaning that $P(o[v] \lambda_{\out{v}}|\lambda_{\inc{v}})$ is a Dirac measure on $O_v\times X_{\out{v}}$ for all values of $\lambda_{\inc{v}}$.
\end{conj}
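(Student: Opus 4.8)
The plan is to invoke the standard \emph{functional representation} (or \emph{randomization}) lemma of measure-theoretic probability: any Markov kernel can be simulated by a deterministic measurable map fed with an independent random seed. Concretely, for each node $v$ with $\inc{v}\neq\emptyset$ I would rewrite the stochastic gate $P(o[v]\lambda_{\out{v}}|\lambda_{\inc{v}})$ of~\eqref{gate} as $\big(o[v],\lambda_{\out{v}}\big)=g_v(\lambda_{\inc{v}},R_v)$ for a measurable map $g_v$ and a seed $R_v$ uniform on $[0,1]$ and independent of $\lambda_{\inc{v}}$, so that $g_v(\lambda_{\inc{v}},R_v)$ has exactly the conditional law $P(o[v]\lambda_{\out{v}}|\lambda_{\inc{v}})$ for each fixed value of $\lambda_{\inc{v}}$. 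The entire problem then reduces to \emph{delivering} the independent seed $R_v$ to $v$ along the edges of $G$, which is precisely the ``pushing back'' of randomness described informally above.

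I would carry out the delivery globally rather than one step at a time. Since $\inc{v}\neq\emptyset$ and $G$ is a finite acyclic graph, following parents backwards from $v$ always terminates at a source (a node with no parents); fix such a source $s(v)$ together with a directed path $s(v)\rsa v$. I would then \emph{augment} the hidden variable on every edge so that, besides its original value $\lambda_e$, the edge $e$ transports the seed $R_{v'}$ whenever $e$ lies on the chosen path for $v'$. The new gates are as follows: at each source, emit the original stochastic output together with fresh, mutually independent uniform seeds $R_{v'}$ for every $v'$ assigned to that source, placed on the appropriate outgoing edges --- this gate may be stochastic, which is permitted because the source has empty $\inc{\cdot}$; at each non-source $v$, apply the deterministic map that (i) reads off $R_v$ from the incoming edges, (ii) computes $(o[v],\lambda_{\out{v}})=g_v(\lambda_{\inc{v}},R_v)$, and (iii) forwards every other incoming seed to the next edge on its path. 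By construction every non-source gate is then a Dirac measure, as required by the statement.

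It remains to check that integrating out all seeds recovers the original $P(o[V])$, and I would verify this by induction along a topological order of $G$, maintaining the invariant that the joint law of the original outcomes and original edge variables produced so far agrees with~\eqref{ccorr}, while every not-yet-consumed seed is still uniform and independent of everything produced so far. The only genuinely new input at $v$ is $R_v$; since it is generated independently at $s(v)$ and is merely forwarded, never consumed, before reaching $v$, it is independent of $\lambda_{\inc{v}}$, so the functional representation property guarantees that $g_v$ reproduces the original conditional law at $v$; the pass-through seeds stay independent because they do not enter the computation of $(o[v],\lambda_{\out{v}})$. Marginalizing all seeds at the end collapses the construction back to Definition~\ref{defccorr}, and the model built this way is manifestly deterministic at every node with $\inc{v}\neq\emptyset$.

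The main obstacle, and the reason this is stated only as a conjecture with a partial proof, is the functional representation lemma itself. In the form used above it is classical only when the target space $O_v\times X_{\out{v}}$ is a \emph{standard Borel} space, whereas Definition~\ref{defccorr} deliberately allows the $(X_e,\Sigma_e)$ to be arbitrary measurable spaces, for which a measurable deterministic simulation of a given kernel need not exist. Thus the argument is fully rigorous whenever all hidden variables are finite, countable, or standard Borel; in particular it settles the finite case outright, and combining it with Theorem~\ref{finitethm} yields at least an approximate version of the conjecture in general. Extending it to the full generality of the definition would require a randomization result valid beyond the standard Borel setting, which we do not have, and the same caveat applies to making the products and integrals of kernels used above completely rigorous, as flagged in Section~\ref{seccats}.
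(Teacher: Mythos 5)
Your proposal is correct (modulo the standard-Borel caveat you yourself flag) and rests on the same ``push the randomness back to the ancestors'' intuition, but its implementation is genuinely different from the paper's. The paper proceeds by induction on $|V|$, exactly as in Theorem~\ref{finitethm}: it picks a sink $w$ with $\pa{w}\neq\emptyset$, replaces the kernel $P(o[w]|\lambda_{\inc{w}})$ by a \emph{random function} $f\in O_w^{X_{\inc{w}}}$ sampled at a single parent $u$ with law $\hat{P}(f)=\prod_{\lambda_{\inc{w}}}P(f(\lambda_{\inc{w}})|\lambda_{\inc{w}})$, lets $w$ evaluate $f$ deterministically, and then recurses on $G{\setminus}\{w\}$ (after folding $\lambda_{\inc{w}}$ into the parents' outcomes), so that the fresh randomness introduced at $u$ is itself pushed further back one edge at a time until it reaches the roots. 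That construction requires the function space $O_w^{X_{\inc{w}}}$ to be a legitimate measurable space, which is why the paper's partial proof covers only finite hidden variables and why the paper names cartesian-closedness of measurable spaces as the obstruction. You instead invoke the randomization (transfer) lemma to write each kernel as a deterministic measurable map fed by an independent uniform seed, generate all seeds at source nodes in a single global construction, route them along fixed directed paths, and verify correctness by induction along a topological order of the \emph{fixed} graph rather than by graph surgery; your independence bookkeeping for the unconsumed seeds is exactly what makes this work. The trade-off between the two routes is real: your key lemma places no hypothesis on the input space $X_{\inc{v}}$ but needs the target $O_v\times X_{\out{v}}$ to be standard Borel, so your argument covers finite, countable and Polish hidden variables --- strictly more than the paper's finite case --- while the paper's random-function representation is elementary and self-contained, needing no measure-theoretic machinery beyond finite products. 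Both approaches hit a wall at arbitrary measurable spaces, but for distinct reasons (failure of randomization beyond the Borel setting versus non-measurability of function spaces), which is a useful clarification of why the conjecture remains open in full generality.
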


Due to the measure-theoretical technicalities involved, we only have a partial proof of this conjecture covering the case in which all hidden variables take only finitely many values. In this case, the statement can be reformulated as saying that $P(o[v]\lambda_{\out{v}}|\lambda_{\inc{v}})\in\{0,1\}$ for any values of the variables involved. Our proof is based on exactly the intuitive idea of repeatedly pushing the randomness back to parent nodes:

\bigskip
\begin{partialproof}
As in the proof of Theorem~\ref{finitethm}, we use induction on the size of $G$. When $G$ has only vertex, then there is no hidden variable and hence nothing to prove.

For the induction step, we start with a graph $G=(V,E)$ and a classical correlation $P(o[V])$ on $G$ realized by hidden variables $\lambda_e$ living on finite sets $X_e$ and conditional distributions $P(o[v]\lambda_{\out{v}}|\lambda_{\inc{v}})$ which are, due to the finiteness assumption, simply stochastic matrices. If there is no node with a parent, then the graph has no edges at all, and again there is nothing to prove. Otherwise, we pick a vertex $w$ with $\pa{w}\neq\emptyset$, but $\ch{w}=\emptyset$; since $G$ is finite and acyclic, there is at least one such vertex which can be found e.g.~by starting at any node with an incoming edge and following outgoing edges from node to node until one reaches a node without any outgoing edges. From now on, we keep such a node $w$ fixed.

The induction step consists in ``pushing back'' the randomness inherent in the conditional distribution $P(o[w]|\lambda_{\inc{w}})$ to a parent node of $w$, and applying the induction hypothesis afterwards. While this should be intuitively plausible, we will see that the technical details are quite demanding.

So to perform this induction step, pick an arbitrary parent $u\in\pa{w}$, which is where the randomness will be pushed back to. Doing so means that the hidden variable space $X_{u\to w}$ will have to be enlarged, and the appropriate definition turns out to be
\[
X'_{u\to w} \defin O_w^{X_{\inc{w}}}\times X_{u\to w},
\]
where the first factor is the set of all functions $X_{\inc{w}}\to O_w$. The given conditional distribution $P(o[w]|\lambda_{\inc{w}})$ can be regarded as a distribution on this function space by assigning to a function $f:X_\inc{w}\to O_w$ the probability
\[
\hat{P}(f) \defin \prod_{\lambda_{\inc{w}}\in X_{\inc{w}}} P(f(\lambda_{\inc{w}})|\lambda_{\inc{w}}).
\]
Intuitively, a function $f:X_{\inc{w}}\to O_w$ can also be regarded as a list of elements of $O_w$ indexed by $X_{\inc{w}}$, and then this definition says that all the elements of this list are selected independently of each other according to their respective distributions determined by the given $P(o[w]|\lambda_{\inc{w}})$.

If we denote the new hidden variable with the new domain $X'_{u\to w}$ by $\lambda'_{u\to w}$, then we can define a new conditional distribution $P'(o[w]|\lambda'_{u\to w}\lambda_{\inc{w}{\setminus}\{u\to w\}})$ by writing $\lambda'_{u\to w}$ as a pair $\lambda'_{u\to w}=(f,\lambda_{u\to w})$ and putting
\[
P'(o[w]|(f,\lambda_{u\to w})\lambda_{\inc{w}{\setminus}\{u\to w\}}) \defin \begin{cases} 1 & \textrm{if } f(\lambda_{\inc{w}}) = o[w], \\ 0 & \textrm{otherwise},\end{cases}
\]
where $\lambda_{\inc{w}}$ stands for the tuple of hidden variables consisting of the $\lambda_{\inc{w}{\setminus}\{u\to w\}}$ together with the additional $\lambda_{u\to w}$ which is part of $\lambda'_{u\to w}$. By construction, $P'$ is deterministic information processing: the node $w$ receives a collection of incoming hidden variables $\lambda_{\inc{w}}$ as well as a function $f:X_{\inc{w}}\to O_w$ and does nothing but to apply the latter to the former in a deterministic manner.

In order to define a classical hidden variable model using the new hidden variable space $X'_{u\to w}$, we also need to modify the conditional distribution $P(o[u]\lambda_{\out{u}}|\lambda_{\inc{u}})$ to some $P'(o[u]\lambda'_{u\to w}\lambda_{\out{u}{\setminus}\{u\to w\}}|\lambda_{\inc{u}})$. Since $\lambda'_{u\to w} = (f,\lambda_{u\to w})$, we can do this by simply putting
\[
P'(o[u] f \lambda_{\out{u}}|\lambda_{\inc{u}}) \defin \hat{P}(f) \cdot P(o[u] \lambda_{\out{u}} | \lambda_{\inc{u}}),
\]
where $\hat{P}(f)$ is as above. In other words, the function $f$ is chosen already at $u$, randomly and independently of any of the other hidden variables. In this sense, the randomness has been ``pushed back'' from the node $w$ to the node $u$.

In order to show that the two new conditional distributions $P'$ at nodes $u$ and $w$ indeed give rise to the same overall classical correlation, it is enough to show that the resulting conditional distribution $P'(o[uw] \lambda_{\out{u}{\setminus}\{u\to w\}}|\lambda_{\inc{u}\cup\inc{w}{\setminus}\{u\to w\}})$ coincides with the original one, since this is the only part of the graph to which we have applied any modifications. Indeed,
\begin{align*}
P'(o[u w] & \lambda_{\out{u}{\setminus}\{u\to w\}}|\lambda_{\inc{u}\cup\inc{w}{\setminus}\{u\to w\}}) \\
& = \sum_{f,\lambda_{u\to w}} P'(o[w] | f \lambda_{\inc{w}}) \, P'(o[u] f \lambda_{\out{u}} | \lambda_{\inc{u}}) \\
& = \sum_{f,\lambda_{u\to w}} \delta_{f(\lambda_{\inc{w}}),o[w]} \cdot \hat{P}(f)\cdot P(o[u]\lambda_{\out{u}} | \lambda_{\inc{u}}) \\
& = \sum_{\lambda_{u\to w}} P(o[u]\lambda_{\out{u}}|\lambda_{\inc{u}}) \quad\sum_f \delta_{f(\lambda_{\inc{w}}),o[w]} \prod_{\mu_{\inc{w}}} P(f(\mu_{\inc{w}})|\mu_{\inc{w}}) \\
& \stackrel{(*)}{=} \sum_{\lambda_{u\to w}} P(o[u]\lambda_{\out{u}}|\lambda_{\inc{u}}) \:\sum_{f(\lambda_{\inc{w}})} \delta_{f(\lambda_{\inc{w}}),o[w]}\, P(f(\lambda_{\inc{w}})|\lambda_{\inc{w}}) \\
& = \sum_{\lambda_{u\to w}} P(o[u]\lambda_{\out{u}}|\lambda_{\inc{u}}) \cdot P(o[w]|\lambda_{\inc{w}}) \\
& = P(o[uw]\lambda_{\out{u}{\setminus}\{u\to w\}}|\lambda_{\inc{u}\cup\inc{w}{\setminus}\{u\to w\}}),
\end{align*}
where the step $(*)$ uses the fact that upon regarding $f$ as a list of elements of $O_w$ indexed by $X_{\inc{w}}$, all components of this list except for $f(\lambda_{\inc{w}})$ simply drop out by normalization of probability.

This finishes our construction of making the information processing at $w$ deterministic by pushing back the randomness needed $u$ along the edge $u\to w$. The induction step can now be completed as in the proof of Theorem~\ref{finitethm}: the hidden variables $\lambda_{\inc{w}}$ have to be regarded as part of the outcomes $o[v]$ for those nodes $v$ with $v\to w$, so that we obtain an induced classical correlation on the restricted graph $G{\setminus}\{w\}$. Applying the induction hypothesis to this situation then results in a classical model for the corresponding correlation $P(\tilde{o}[V{\setminus}\{w\}])=P(o[{V{\setminus}\{w\}}] \lambda_{\inc{w}})$ with the property that all information processing at the non-root nodes is deterministic. Putting this together with the deterministic processing $P'$ at node $w$ completes the induction step.
\end{partialproof}

By virtue of Theorem~\ref{finitethm}, this also shows that any classical correlation can be arbitrarily well approximated by one in which all conditional distributions $P(o[v]\lambda_{\out{v}}|\lambda_{\inc{v}})$ are deterministic. We suspect that Conjecture~\ref{detconj} also holds without the approximation, but we do not have a proof of this.\footnote{Our partial proof could possibly be extended to a full proof if the category of measurable spaces was cartesian closed, but this seems to be an open problem. See \href{http://mathoverflow.net/questions/1388/is-there-a-natural-measures-on-the-space-of-measurable-functions}{http://mathoverflow.net/questions/1388/is-there-a-natural-measures-on-the-space-of-measurable-functions}.} The upcoming Theorem~\ref{hbnthm} might be helpful for achieving a complete proof.

\newpage
\section{Correlations within any theoretical framework}
\label{seccats}

It should be useful to have a definition of correlation not just for the classical case, but also for other mathematical frameworks for physical theories. In particular, there should be definitions of quantum correlation and of general probabilistic correlations. We consider this problem by outlining the form in which a framework for physical theories, like ``classical'', ``quantum'' or ``general probabilistic'',  needs to be given in order for it to enable a corresponding definition of correlation. Our answer to this turns out to be very similar to the definition of \emph{causal operational-probabilistic theory} of Chiribella, D'Ariano and Perinotti~\cite{CDP}. The reason for tackling this very general problem at this stage is that it seems to give the most convenient rigorous definition of quantum correlation. Those readers who are not interested in mathematically precise definitions can directly proceed to the informal definition of quantum correlations in Section~\ref{secqc}. Considerations similar to the ones of this section can also be found in the independent work of Henson, Lal and Pusey~\cite{HLP}.

So what pieces of structure did we need in the previous section to define classical correlations? For one thing, each edge of the graph $G=(V,E)$ carries a classical system defined in terms of a measurable space. Second, the nodes $v\in V$ in turn are implemented as ``information processing gates'' given by stochastic instruments indexed by outcomes $o[v]$; here, an individual operation is typically subnormalized, i.e.~happens with a state-dependent success probability. Only when taking the sum over all $o[v]$ do we get a normalization-preserving operation.

In general, the dichotomy between labelling the edges by state spaces and the nodes by operations suggests that we should assume that a framework for physical theories should be given by a \emph{category} which we denote $\Cp$.\footnote{We refer to~\cite{CP,Spivak} for background on the relevant pieces of category theory.} Then, each edge gets labelled by an object in $\Cp$, while each node gets labelled by a morphism of $\Cp$. More precisely, we assume that $\Cp$ is a (strict) \emph{symmetric monoidal category} whose morphisms play the role of the operations; in particular, this means that no assumption like normalization or preservation of probability should be made when defining the morphisms of $\Cp$. In order to be able to make sense of the requirement that the every node carries one operation \emphalt{for each outcome} and, when summing over all outcomes, the resulting total operation is normalized (more accurately, normalization-preserving), we need to make several additional assumptions:

\renewcommand{\labelenumi}{(cat.\roman{enumi})}
\renewcommand{\theenumi}{(cat.\roman{enumi})}

\begin{enumerate}
\itemsep6pt
\item\label{Ccone} Every morphism set $\Cp(X,Y)$ is a convex cone: any morphism in $\Cp(X,Y)$ can be multiplied by a positive scalar, any two morphisms $\Cp(X,Y)$ can be added, and there is a neutral element $0\in\Cp(X,Y)$. More precisely, $\Cp(X,Y)$ is assumed to be a module over the semiring $\Rplus$. Intuitively, the consequences of this are that we can take stochastic mixtures of operations, and the normalization of a morphism can be rescaled by simply multiplying the morphism by a scalar.
\item\label{Clin} The composition of morphisms
\[
\Cp(X,Y) \times \Cp(Y,Z) \to \Cp(X,Z)
\]
is bilinear with respect to this $\Rplus$-module structure. In other words, composition distributes both over addition and scalar multiplication of morphisms. Likewise, the tensor product
\[
\Cp(X_1,Y_1) \times \Cp(X_2,Y_2) \to \Cp(X_1\otimes X_2,Y_1\otimes Y_2)
\]
is also bilinear.
\item\label{CProcess} In addition, there is a distinguished symmetric monoidal subcategory $\Pp\subseteq\Cp$ containing all identity morphisms. We think of $\Pp$ as the subcategory which contains all the normalized operations.
\item\label{CI} We have $\Cp(I,I)=\Rplus$ and $\Pp(I,I)=\{1\}$, where $I$ is the monoidal unit of $\Cp$ (and $\Pp$). In other words, there is exactly one normalized operation which does nothing to nothing, and this operation is denoted $1$. The operations which act on nothing are precisely the scalar multiples of $1$. This how we are going to obtain probabilities: as morphisms of type $I\to I$ in $\Cp$, which are nonnegative real numbers.
\item\label{CPterm} We have $\Pp(X,I) = \{1\}$, i.e.~the unit object $I$ is terminal in $\Pp$. In other words, every system $X$ can be discarded in a unique way in the sense that there is exactly one normalized operation from $X$ to no system which we denote $\tau_X:X\to I$ and draw as ``$\ground$''. This assumption corresponds to the causality assumption of~\cite{CDP}, which also forms a central aspect of the ``causal categories'' of~\cite{CLcausal}, where the use of the ground symbol ``$\ground$'' for $\tau$ has been introduced.
\end{enumerate}
\vspace{6pt}

This finishes the list of requirements that the category $\Cp$ should satisfy in order for it to be possible to define the notion of ``$\Cp$-correlation''. For example, in the quantum case that we will consider in the next section, $\Cp$ will be the category of completely positive maps in which Hilbert spaces are the objects and completely positive maps between their sets of states will be the morphisms. As a more basic example, we can also define the classical correlations from the previous section like this by defining a category in which the objects are measurable spaces and the morphisms are stochastic operations. We will do this in more detail below.

Once a category $\Cp$ together with $\Pp\subseteq\Cp$ as above has been determined, how do we obtain the definition of $\Cp$-correlation on a graph $G$? The basic idea is to interpret $G$ as a diagram within the category $\Cp$, assuming that $G$ has been labelled by an object $X_e\in\Cp$ for every edge $e\in E$, and by a morphism 
\[
g_v \:\in\: \Cp\left(\bigotimes_{e\in\inc{v}} X_e ,\bigotimes_{e\in\out{v}} X_e\right) 
\]
for every node $v\in V$. Labelling the edges by objects and the nodes by morphisms is essentially the same idea as the one which underlies ``quantum picturalism''~\cite{Coecke}, namely the graphical calculus for symmetric monoidal categories~\cite{JS}. This is another reason for why we believe that taking the hidden variables to live on the edges---as opposed to the nodes---is the appropriate thing to do. Once $G$ can be interpreted as a diagram like this, it evaluates to an overall composite morphism $I\to I$, which~\ref{CI} guarantees to be a nonnegative number. We interpret this number as the probability that the specific labelled graph $G$ ``happens'', exactly as in Hardy's operator-tensor formulation~\cite{HardyOT}.

More precisely, we want to label each node $v$ not just by a single operation, but by a whole collection of operations indexed by the outcome $o[v]$. In other words, we need to assume that $v$ is labelled by a function from outcomes to morphisms,
\beq
\label{Cinstrument}
f_v \: :\: O_v \longrightarrow \Cp\left(\bigotimes_{e\in\inc{v}} X_e ,\bigotimes_{e\in\out{v}} X_e\right) .
\eeq
This function is assumed to assign to every outcome $o[v]$ a $\Cp$-operation $f_v(o[v]) : \bigotimes_{e\in\inc{v}} X_e \to \bigotimes_{e\in\out{v}} X_e$ in such a way that $\sum_{o[v]} f_v(o[v])$ is normalized, i.e.~lies in the subcategory $\Pp$. We can call this $f_v$ a \emph{$\Cp$-instrument} in analogy with the notion of quantum instrument, since it is a collection of operations indexed by an outcome such that the total operation obtained by marginalization (summing) over the outcome is normalized. The $\Cp$-instrument $f_v$ plays the role of the ``information processing gate'' turning all the incoming systems into outgoing systems and producing a classical outcome while doing so. We think of a particular operation $f_v(o[v])$ as realized at node $v$ exactly when the particular outcome $o[v]$ occurs. So when computing the probability $P(o[V])$ of a specific joint outcome $o[V]$ to occur, every node $v\in V$ is therefore labelled by exactly one morphism $f_v(o[v])$. Then one can try to interpret the graph $G$, labelled by the objects $X_e$ and these particular morphisms $f_v(o[v])$, as a diagram in $\Cp$, which evaluates to an element of $\Cp(I,I)$. Indeed, the graphical calculus for symmetric monoidal categories, as formalized by Joyal and Street~\cite[Sec.~2.1]{JS} as the definition of ``value'' of a progressive polarized diagram, should be the appropriate way of doing this. By~\ref{CI}, this value $I\to I$ is a nonnegative real number, which we take to be the definition of $P(o[V])$. 

However, there is an annoying technical problem with the idea of interpreting the graph $G$, when appropriately labelled, as a diagram in $\Cp$. Strictly speaking, the two tensor products $\bigotimes_{e\in\inc{v}} X_e$ and $\bigotimes_{e\in\out{v}} X_e$ are not defined in $\Cp$: as a symmetric monoidal category, we can only make sense of binary tensor products in $\Cp$.\footnote{There is an ``unbiased'' definition of monoidal category in which tensor products are not necessarily binary~\cite[Sec.~3.1]{Leinster}, but an ordering of the factors does still need to be chosen. More generally, there is also a definition of ``fat symmetric multicategory''~\cite[App.~A]{Leinster}, which is more general than the definition of ``order-independent unbiased symmetric monoidal category'' that we would need. However, there does not yet seem to be any theorem on interpreting directed acyclic graphs as diagrams in symmetric monoidal categories, which is the reason for the technical complications that we are facing here. See also \href{http://nforum.mathforge.org/discussion/3101/symmetric-monoidal-category/?Focus=46303\#Comment_46303}{nforum.mathforge.org/discussion/3101/symmetric-monoidal-category/?Focus=46303{\#}Comment\_46303}.} Hence, following Joyal and Street~\cite{JS}, we equip each set $\inc{v}$ and $\out{v}$ with a particular \emphalt{linear ordering}. We will show later that the resulting set of $\Cp$-correlations does not depend on this choice of ordering. Having fixed such orderings, we can now make sense of the tensor products $\bigotimes_{e\in\inc{v}} X_e$ and $\bigotimes_{e\in\out{v}} X_e$ by defining them as iterated binary tensor products with respect to the given ordering.

In order to describe the notion of value in more detail, we need to introduce some additional terminology. The details of this will differ from the treatment of Joyal and Street, but it should be clear to the interested reader how to translate their definitions into ours. The general notion of value applies to directed acyclic graphs which may have incoming and outgoing ``half-edges'', i.e.~edges which have only a source or target node within the graph. We can formalize this by defining a \emph{directed acyclic graph with boundary} $G$ to consist of a set $V$ of nodes, a set $E$ of ordinary edges, a set $E_{\mathrm{in}}$ of incoming edges, and a set $E_{\mathrm{out}}$ of outgoing edges, together with source and target maps
\[
s : E\cup E_{\mathrm{out}} \to V,\qquad t : E\cup E_{\mathrm{out}} \to V.
\]
So if we put $\inc{v}\defin t^{-1}(v)$, then this set of incoming edges at $v$ also includes all half-edges that have $v$ as their target, and similarly for $\out{v}$. The acyclicity condition is then as before in Definition~\ref{defdag}, and $G$ is acyclic if and only if the restricted graph $(V,E)$ without the half-edges is so. During the discussion of value, we use the term ``graph'' as a synonym of ``directed acyclic graph with boundary''. As a special case, the boundary may be empty, in which case the definition reduces to the usual notion of directed acyclic graph which we have been using until now.

Such a graph is \emphalt{polarized} if each set $\inc{v}$ and $\out{v}$ is equipped with a linear ordering. It is \emphalt{anchored} if both $E_{\mathrm{in}}$ and $E_{\mathrm{out}}$ are equipped with linear orderings. A \emphalt{valuation} in $\Cp$ is a labelling of $E$, $E_{\mathrm{in}}$ and $E_{\mathrm{out}}$ by objects $X_e\in\Cp$ and of $V$ by morphisms $g_v$ of $\Cp$ such that each $f_v$ is of type
\[
g_v \: : \: \bigotimes_{e\in\inc{v}} X_e \longrightarrow \bigotimes_{e\in\out{v}} X_e,
\]
where it is understood that the tensor products are taken with respect to the chosen orderings.

Then Joyal and Street~\cite[Sec.~2.1]{JS} show that it is possible to assign a value to this graph in the sense of regarding it as a diagram in $\Cp$ and composing it to a morphism 
\[
\nu(G) : \bigotimes_{e\in E_{\mathrm{in}}} X_e\longrightarrow \bigotimes_{e\in E_{\mathrm{out}}} X_e,
\]
where again these tensor products are with respect to the chosen orderings on $E_{\mathrm{in}}$ and $E_{\mathrm{out}}$. If $E_{\mathrm{in}}=E_{\mathrm{out}}=\emptyset$, as in the case of interest to us, then $\nu(G) : I\to I$. In fact, it can be shown that there is a unique way of assigning a value satisfying the following properties: 

\renewcommand{\labelenumi}{(val.\roman{enumi})}
\renewcommand{\theenumi}{(val.\roman{enumi})}

\begin{enumerate}
\item\label{velem} If $G$ is a graph containing exactly one node, then $\nu(G)$ coincides with the morphism labelling that node.
\item\label{vmult}
If $G_1$ and $G_2$ are graphs equipped with valuations in $\Cp$, then also the disjoint union graph $G_1 + G_2$ carries a valuation in $\Cp$, and
\[
\label{numult}
\nu(G_1 + G_2) = \nu(G_1) \otimes \nu(G_2) .
\]
Intuitively, this corresponds to placing the two diagrams $G_1$ and $G_2$ side by side, and hence their value should be interpreted as the tensor product of the individual values.
\item\label{vcomp}
The second requirement is similar and refers to composition of graphs: if $G_1$ and $G_2$ are such that there is an equality of ordered sets $E_{1,\mathrm{out}} = E_{2,\mathrm{in}}$, and both $G_1$ and $G_2$ carry a valuation such that the object assigned to a half-edge $e\in E_{1,\mathrm{out}}$ coincides with the object associated to the corresponding half-edge $e\in E_{2,\mathrm{in}}$, then the two graphs can be composed to $G_2 \circ G_1$ in the obvious way, and this composite graph also carries a valuation. The requirement now is that
\[
\label{nucomp}
\nu(G_2 \circ G_1) = \nu(G_2) \circ \nu(G_1),
\]
where ``$\circ$'' on the right-hand side is composition in $\Cp$.
\item\label{vlin} Taking the value is a linear function of the morphism $g_v$ at each node $v\in V$.
\end{enumerate}

\renewcommand{\labelenumi}{(\alph{enumi})}
\renewcommand{\theenumi}{(\alph{enumi})}

Armed with these preliminary considerations, we can now get to our main result of this section. If we have a graph without boundary $G$, and $G$ is labelled by objects on the edges and $\Cp$-instruments~\eqref{Cinstrument} on the nodes, then every joint outcome $o[V]$ can be assigned a number $P(o[V])$ by evaluating it as a diagram in $\Cp$ as described above. The main point is this:

\begin{thm}
\label{Ccorriscorr}
The resulting numbers $P(o[V])$ form a correlation on $G$.
\end{thm}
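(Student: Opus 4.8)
The plan is to mirror the proof of Proposition~\ref{ccorriscorr}, replacing the explicit integrals there by the abstract structure~\ref{Ccone}--\ref{CPterm} and the axiomatic properties~\ref{velem}--\ref{vlin} of the value $\nu$. Throughout I fix once and for all a linear ordering on each $\inc{v}$ and $\out{v}$, so that $\nu$ is well defined; the independence of the resulting set of correlations from this choice is deferred to the subsequent verification announced above. By Proposition~\ref{corrprop} it suffices to establish the binary factorization~\eqref{bicorrelation} for a maximal pair $U,W$, and exactly as in the classical case we may assume $\pst{U}=U$ and $\pst{W}=W$.

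The technical core is a marginalization lemma analogous to~\eqref{cmarginal}: for any $U$ with $\pst{U}=U$, I claim that the marginal $P(o[U])=\sum_{o[V\setminus U]}P(o[V])$ equals the value $\nu(G_U)$ of the boundaryless graph $G_U$ obtained by keeping only the nodes of $U$ and capping every edge that leaves $U$ with the discarding morphism $\tau$ of~\ref{CPterm}. I would prove this by induction on $|V\setminus U|$, paralleling Proposition~\ref{ccorriscorr}. The base case $U=V$ is the definition of $P(o[V])$. For the step, pick a $\rsa$-minimal node $w\in V\setminus U$; then $\pa{w}\subseteq U$, every child of $w$ lies outside $U'\defin U\cup\{w\}$, and $\pst{U'}=U'$. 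By the induction hypothesis $P(o[U'])=\nu(G_{U'})$, in which the outgoing edges of $w$ are already $\tau$-capped. Summing over $o[w]$ and using linearity~\ref{vlin} replaces the label of $w$ by the normalized morphism $\sum_{o[w]}f_w(o[w])\in\Pp$; composing it with the caps on its outputs yields, via~\ref{vcomp}, a normalized morphism $\bigotimes_{e\in\inc{w}}X_e\to I$, which by terminality~\ref{CPterm} must be the discarding map $\tau_{\inc{w}}$. Hence $w$ together with its output caps collapses to a discarding of its inputs, turning the edges $u\to w$ with $u\in U$ into $\tau$-capped half-edges of $U$; the result is precisely $G_U$. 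Specializing to $U=\emptyset$ gives $\sum_{o[V]}P(o[V])=\nu(\emptyset)=1$ by~\ref{CI}, so the numbers $P(o[V])$ do form a probability distribution.

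With the lemma in hand the factorization is immediate. Since $\pst{U}\cap\pst{W}=\emptyset$ there is no edge between $U$ and $W$ in either direction, for an edge $u\to w'$ with $u\in U$ and $w'\in W$ would place $u$ in $\pst{U}\cap\pst{W}$. Because $\pst{U\cup W}=U\cup W$, the lemma gives $P(o[U]\,o[W])=\nu(G_{U\cup W})$, and the absence of cross-edges means that $G_{U\cup W}$ is the disjoint union $G_U+G_W$ (both boundaryless, since all half-edges are capped). By multiplicativity~\ref{vmult}, $\nu(G_U+G_W)=\nu(G_U)\otimes\nu(G_W)$; and since both factors lie in $\Cp(I,I)=\Rplus$ with $I\otimes I=I$, the bilinearity~\ref{Clin} of the tensor identifies $\otimes$ on $\Cp(I,I)$ with multiplication of nonnegative reals. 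Thus $P(o[U]\,o[W])=\nu(G_U)\,\nu(G_W)=P(o[U])\,P(o[W])$, which is~\eqref{bicorrelation}.

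I expect the main obstacle to be making the graph surgery in the marginalization lemma fully rigorous against the Joyal--Street formalism: one must verify that ``capping an outgoing half-edge with $\tau$'' and ``summing a node label'' interact with $\nu$ exactly through the axioms~\ref{velem}--\ref{vlin}, and that the decomposition $G_{U\cup W}=G_U+G_W$ respects the chosen orderings closely enough for~\ref{vmult} to apply verbatim. The terminality step is the conceptual crux: it is where the causality assumption~\ref{CPterm} plays the role that normalization of conditional distributions played in the classical computation. Once the bookkeeping of half-edges and orderings is in place, each individual appeal to~\ref{Ccone}--\ref{CPterm} and~\ref{velem}--\ref{vlin} should be routine.
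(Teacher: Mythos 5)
Your proof is correct and takes essentially the same route as the paper's: the same marginalization lemma (marginals of $P$ equal values of the sub-graph on $U$ with edges leaving $U$ discarded), proved by the same induction on $|V\setminus U|$ with terminality~\ref{CPterm} collapsing the summed node label $\sum_{o[w]}f_w(o[w])$ to discarding maps, followed by the same disjoint-union argument via~\ref{vmult} for the binary factorization. The only cosmetic difference is that you keep the $\tau$-caps as explicit half-edge terminations, whereas the paper absorbs them into marginal instrument components at the nodes; these are equivalent by~\ref{velem}--\ref{vcomp}.
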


See also~\cite{Cns} for more general considerations relating assumption~\ref{CPterm} to the no-signaling property.

\begin{proof}
This proof generalizes the one of Proposition~\ref{ccorriscorr}. In particular, the overall proof strategy is the same, and we will see that Proposition~\ref{ccorriscorr} can indeed be regarded as an instance of the present claim. 

\begin{figure}
\begin{tikzpicture}[node distance=2.1cm,>=stealth',thick,circuit ee IEC]
\tikzstyle{transition}=[rectangle,thick,draw=black!75,fill=black!10,minimum size=6.0mm]
\node[transition] (cent) at (0,2) {$g$} ;
\node (in) at (0,0) {} ;
\node[ground,rotate=90] (out1) at (-1.5,4) {} ;
\node (out2) at (1.5,4) {} ;
\draw[->] (in) -- (cent) node [midway,right] {$X$} ;
\draw[->] (cent) -- (out1) node [midway, below left] {$Y$} ;
\draw[->] (cent) -- (out2) node [midway, below right] {$Z$} ;
\end{tikzpicture}
\caption[]{Taking the marginal of a morphism in $\Cp$ over an output system $Y$ is defined in terms of composition with the unique morphism $\tau_Y\in\Pp(Y,I)$ drawn as ``$\ground$''. In the quantum case, this is an application of partial trace over $Y$.}
\label{margmorph}
\end{figure}
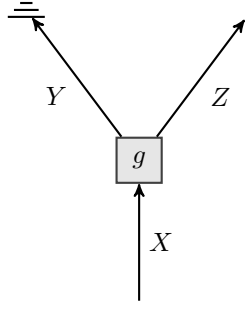

We start by deriving the following generalization of~\eqref{cmarginal}: if $W\subseteq V$ is a set of nodes with $\pst{W}=W$, then the marginal $P(o[W])$ can be computed using the following steps:
\begin{itemize}
\item Take the restricted graph $(W,E|_W)$, where $E|_W$ contains exactly those edges whose source and target both lie in $W$. Just as the original $G$, this is also a graph without boundary.
\item Equip $(W,E|_W)$ with the same objects $X_e$ as edge labels, and label the nodes by the $\Cp$-instruments whose components are the corresponding marginal operations (Figure~\ref{margmorph}). Here, the marginal of an operation $f_v(o[v])$ needs to be taken over all those edges $e\in\out{v}$ whose target $\tar{e}$ does not lie in $W$. It follows from bilinearity~\ref{Clin} and the fact that the marginalization morphisms $\tau_{\tar{e}}$ are in the subcategory $\Pp$ that the new marginal operations do also form a $\Cp$-instrument.
\item Take the linear orderings on the new sets $\inc{v}$ and $\out{v}$ for all $v\in W$ to be those induced from $G$.
\item Evaluate this graph to a family of numbers $P(o[W])$ by applying $\nu$ as described above.
\end{itemize}

We prove that the resulting numbers $P(o[W])$ coincide with the corresponding marginal of $P(o[V])$ by induction on the size of $V{\setminus} W$. The base case is $W=V$, for which the claim is simply the definition of $P(o[V])$. For the induction step, we need to prove that the claim holds for a given $W$ with $\pst{W}=W$, while assuming that it holds for all bigger $W'$ in place of $W$ which likewise satisfy $\pst{W'}=W'$. For the given $W$, we can choose a node $w$ with $\pa{w}\subseteq W$, as in the proof of~\eqref{cmarginal}. Upon putting $W'\defin W\cup\{w\}$, it is then guaranteed that $\pst{W'}=W'$. We therefore can apply the induction assumption, which gives that the marginal $P(o[W'])$ can be computed by applying the above steps to $W'$; this is depicted in Figure~\ref{before}. Then it remains to evaluate the sum over $o[w]$, in order to show that the further restricted marginal
\beq
\label{margclaim}
P(o[W]) = \sum_{o[w]} P(o[{W'}])
\eeq
can likewise be computed by following the above prescription. In order to prove this, we use the linearity property~\ref{vlin} to interpret the right-hand side as the value of the diagram arising from the valuation of $(W',E|_{W'})$ by objects and operations such that each $v\in W$ is labelled by the morphism $f_v(o[v])$ where the value $o[v]$ is the given one on the left-hand side of~\eqref{margclaim}, while the ``new'' node $w$ is labelled by $\sum_{o[w]} f_w(o[w])$. Because the latter is a morphism in $\Pp$, we can apply assumption~\ref{CPterm}: since $I$ is a terminal object in $\Pp$, the set $\Pp\big(\bigotimes_{e\in \inc{w}} X_e, I\big)$ has exactly one element. In particular, the elements $\sum_{o[w]} f_w(o[w])\in\Pp\big(\bigotimes_{e\in \inc{w}} X_e,I\big)$ and $\bigotimes_{e\in\inc{w}} \tau_{X_e}\in \Pp\big(\bigotimes_{e\in \inc{w}} X_e, I\big)$ are equal,
\beq
\label{tmor}
\sum_{o[w]} f_w(o[w]) = \bigotimes_{e\in\inc{w}} \tau_{X_e},
\eeq
where $\tau_{X_e}$ is itself the unique element of $\Pp(X_e,I)$. Intuitively, the idea is that $\sum_{o[w]} f_w(o[w])$ is the unique normalized operation going from the composite system $\bigotimes_{e\in\inc{w}} X_e$ to no system, as illustrated in Figure~\ref{first}, and hence coincides with throwing away all systems in parallel, which is given by the morphism $\bigotimes_{e\in\inc{w}} \tau_{X_e}$ as depicted in Figure~\ref{intermediate}. Now we can regard the graph $(W',E|_{W'})$ as the composition of two parts as indicated by the dashed line in Figure~\ref{first}. The upper part consists of only one node with ingoing edges and is labelled by the morphism~\eqref{tmor}, and hence by~\ref{velem} its value coincides with this very morphism. Since this morphism itself factors into a tensor product, rule~\ref{vmult} tells us that we obtain the same value if we furnish each incoming edge with its own target node, as in Figure~\ref{intermediate}. Then again by the rules~\ref{vmult} and~\ref{vcomp}, the overall diagram value coincides with the one obtained by taking $(W,E|_W)$ and labelling it with marginal morphisms as described above. This completes the induction step.

\begin{figure}
\centerfloat
\subfigure[]{
\label{before}
\begin{tikzpicture}[node distance=2.1cm,>=stealth',thick,scale=.7]
\tikzstyle{transition}=[rectangle,thick,draw=black!75,fill=black!10,minimum size=6.0mm]
\node[transition] (aw) at (0,5) {$f_w(o[w])$} ;
\node[transition] (in1) at (-2,3) {$f_u(o[u])$} ;
\node[transition] (in2) at (2,3) {$f_v(o[v])$} ;
\draw[<-] (aw) -- (in1) ;
\draw[<-] (aw) -- (in2) ;
\node (e1) at (-3.5,1) {} ;
\node (e2) at (-0.5,1) {} ;
\node (e3) at (.5,1) {} ;
\node (e4) at (2,1) {} ;
\node (e5) at (3.5,1) {} ;
\node[rotate=35] (x) at (-3.5,5) {$\vdots$} ;
\draw[->] (in1) -- (x) ;
\draw[->] (e1) -- (in1) ;
\draw[->] (e2) -- (in1) ;
\draw[->] (e3) -- (in2) ;
\draw[->] (e4) -- (in2) ;
\draw[->] (e5) -- (in2) ;
\node at (-3.6,.7) {$\vdots$} ;
\node at (-1.8,.7) {$\vdots$} ;
\node at (0,.7) {$\vdots$} ;
\node at (1.8,.7) {$\vdots$} ;
\node at (3.6,.7) {$\vdots$} ;
\end{tikzpicture}}\hspace{2pc}
\subfigure[]{
\label{first}
\begin{tikzpicture}[node distance=2.1cm,>=stealth',thick,scale=.7,circuit ee IEC]
\tikzstyle{transition}=[rectangle,thick,draw=black!75,fill=black!10,minimum size=6.0mm]
\node[ground,rotate=90] (aw) at (0,5) {} ;
\node[transition] (in1) at (-2,3) {$f_u(o[u])$} ;
\node[transition] (in2) at (2,3) {$f_v(o[v])$} ;
\draw[<-] (aw) -- (in1) ;
\draw[<-] (aw) -- (in2) ;
\node (e1) at (-3.5,1) {} ;
\node (e2) at (-0.5,1) {} ;
\node (e3) at (.5,1) {} ;
\node (e4) at (2,1) {} ;
\node (e5) at (3.5,1) {} ;
\node[rotate=35] (x) at (-3.5,5) {$\vdots$} ;
\draw[->] (in1) -- (x) ;
\draw[->] (e1) -- (in1) ;
\draw[->] (e2) -- (in1) ;
\draw[->] (e3) -- (in2) ;
\draw[->] (e4) -- (in2) ;
\draw[->] (e5) -- (in2) ;
\node at (-3.6,.7) {$\vdots$} ;
\node at (-1.8,.7) {$\vdots$} ;
\node at (0,.7) {$\vdots$} ;
\node at (1.8,.7) {$\vdots$} ;
\node at (3.6,.7) {$\vdots$} ;
\draw[dashed] plot [smooth] coordinates {(-2.2,5) (-.35,3.8) (.65,3.8) (2.2,4.5)} ;
\end{tikzpicture}}\hspace{2pc}
\subfigure[]{
\label{intermediate}
\begin{tikzpicture}[node distance=2.1cm,>=stealth',thick,scale=.7,circuit ee IEC]
\tikzstyle{transition}=[rectangle,thick,draw=black!75,fill=black!10,minimum size=6.0mm]
\node[ground,rotate=90] (w1) at (-1,5) {} ;
\node[ground,rotate=90] (w2) at (1,5) {} ;
\node[transition] (in1) at (-2,3) {$f_u(o[u])$} ;
\node[transition] (in2) at (2,3) {$f_v(o[v])$} ;
\draw[->] (in1) -- (w1) ;
\draw[->] (in2) -- (w2) ;
\node (e1) at (-3.5,1) {} ;
\node (e2) at (-0.5,1) {} ;
\node (e3) at (.5,1) {} ;
\node (e4) at (2,1) {} ;
\node (e5) at (3.5,1) {} ;
\node[rotate=35] (x) at (-3.5,5) {$\vdots$} ;
\draw[->] (in1) -- (x) ;
\draw[->] (e1) -- (in1) ;
\draw[->] (e2) -- (in1) ;
\draw[->] (e3) -- (in2) ;
\draw[->] (e4) -- (in2) ;
\draw[->] (e5) -- (in2) ;
\node at (-3.6,.7) {$\vdots$} ;
\node at (-1.8,.7) {$\vdots$} ;
\node at (0,.7) {$\vdots$} ;
\node at (1.8,.7) {$\vdots$} ;
\node at (3.6,.7) {$\vdots$} ;
\draw[dashed] plot [smooth] coordinates {(-2.2,5) (-.35,3.8) (.65,3.8) (2.2,4.5)} ;
\draw[dashed] (.1,3.7) -- (.1,5.3) ;
\end{tikzpicture}}
\caption[]{Illustration of the induction step in the proof of Theorem~\ref{Ccorriscorr} for part of an example graph. The ground symbol $\ground$ represents the unique normalized operation $\tau_X\in\Pp(X,I)$ from any system $X$ to no system.}
\label{proofillu}
\end{figure}
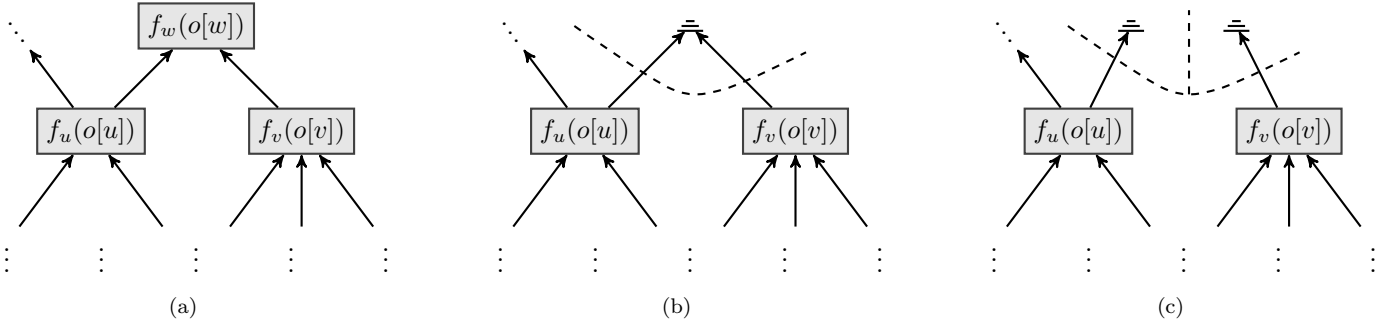

Now we can make use of the above prescription. First, if $W=\emptyset$, then we obtain the marginal $P(o[{\emptyset}])$, which simply is a number corresponding to the normalization of the distribution $P(o[V])$, which should be $1$. What the above prescription tells is that we can compute $P(o[{\emptyset}])$ as the value of the empty graph, which is $\id_I\in\Cp(I,I)$. As a number, this is indeed $1$.

Second, we can also show that the ``binary'' correlation equation~\eqref{bicorrelation} with the assumption $\pst{U}=U$ and $\pst{W}=W$ holds by applying the above prescription to the subgraph $(U\cup W,E|_{U\cup W})$. Since this subgraph decomposes into the disjoint union of $(U,E|_U)$ and $(W,E|_W)$, the claim now follows immediately from~\ref{vmult}.
\end{proof}

So for any category of operations $\Cp$, we now have a precise definition of the set of $\Cp$-correlations. However, this definition was phrased in terms of the chosen orderings on each $\inc{v}$ and $\out{v}$; since these are not part of the data of the original graph $G$, we still need to show that the set of $\Cp$-correlations is independent of that choice.

\begin{prop}
The set of $\Cp$-correlations on $G$ does not depend on the particular orderings chosen on the sets $\inc{v}$ and $\out{v}$.
\end{prop}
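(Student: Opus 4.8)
The plan is to show that the two constructions agree on the nose, by exhibiting for any two systems of orderings a value-preserving bijection between their valuations; then the two sets of achievable numbers $P(o[V])$ literally coincide. First I would reduce to a single local move. Since the orderings on the various sets $\inc{v}$ and $\out{v}$ are chosen independently, and since any permutation is a product of adjacent transpositions, it suffices to compare two systems of orderings that agree everywhere except on one set, say $\out{v}$ for a single node $v$, where they differ by the transposition of two consecutive edges $e,e'$ (the case of an $\inc{v}$ is entirely analogous). Connecting two arbitrary systems of orderings is then a matter of composing finitely many such moves.

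Given a valuation with respect to the first system, in which $v$ carries the $\Cp$-instrument $f_v$, I would define a valuation with respect to the second system by keeping all edge-objects $X_e$ and all other instruments unchanged and replacing $f_v$ by $f_v' \defin \sigma\circ f_v$, where $\sigma$ is the symmetry isomorphism of the (strict) symmetric monoidal category $\Cp$ acting as the braiding $X_e\otimes X_{e'}\to X_{e'}\otimes X_e$ on the two affected output factors and as the identity elsewhere (for a move on an $\inc{v}$ one precomposes with the corresponding symmetry instead). I would first check that $f_v'$ is again a $\Cp$-instrument: summing over the outcome and using bilinearity~\ref{Clin} gives $\sum_{o[v]} f_v'(o[v]) = \sigma\circ\sum_{o[v]}f_v(o[v])$, which lies in $\Pp$ because $\sum_{o[v]}f_v(o[v])\in\Pp$, because the symmetry $\sigma$ belongs to the symmetric monoidal subcategory $\Pp$ by~\ref{CProcess}, and because $\Pp$ is closed under composition. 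As $\sigma$ is invertible, the assignment $f_v\mapsto f_v'$ is a bijection onto the second system's valuations, with inverse given by composing with $\sigma^{-1}$.

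The crux is to show that this reassignment leaves every number $P(o[V])$ unchanged. Here I would appeal to the fact that the value $\nu$ of Joyal and Street~\cite[Sec.~2.1]{JS} depends only on the combinatorial wiring of the diagram---which output port is joined to which input port---and not on the linear orderings used to express the tensor products, so long as a change of ordering is mediated by the canonical symmetry isomorphisms. Concretely, replacing the $\out{v}$-ordering by the transposition of $e$ and $e'$ alters the tensor-factor order of the relevant outgoing wires by exactly $\sigma$, and absorbing this same $\sigma$ into $f_v'$ restores the original wiring; by coherence for (strict) symmetric monoidal categories---equivalently, by the invariance of value in~\cite{JS}---the two valued diagrams evaluate to equal morphisms in $\Cp(I,I)$.

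The main obstacle is precisely this last step: one must verify that the symmetry isomorphism \emphalt{introduced} by the reordering and the one \emphalt{absorbed} into $f_v'$ are the same canonical morphism, so that they cancel. Coherence guarantees this, but it requires careful bookkeeping of the orderings at $v$ and along the edges $e,e'$ to their (possibly distinct, possibly coinciding) target nodes, together with naturality of the symmetry to slide $\sigma$ past the rest of the diagram. Once this identity of values is established, $f_v\mapsto f_v'$ is a value-preserving bijection between the valuations for the two systems of orderings, so the two systems yield the same set of $\Cp$-correlations on $G$; iterating over all adjacent transpositions and all nodes then proves the proposition.
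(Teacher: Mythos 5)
Your proposal is correct and follows essentially the same route as the paper's own sketch: reduce to an adjacent transposition at a single node, absorb the corresponding symmetry isomorphism into the instrument at that node (post-composing for outgoing edges, pre-composing for incoming ones), and appeal to the invariance of the Joyal--Street value $\nu$ under this change. Your additional checks---that $\sigma\circ f_v$ is again a $\Cp$-instrument via bilinearity and the fact that $\Pp$ is a symmetric monoidal subcategory, and that the assignment is a bijection---are worthwhile details the paper leaves implicit, but the core argument is the same.
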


\begin{sketchproof}
It is enough to show that the ordering on $\inc{v}$ and $\out{v}$ for some fixed $v$ can be modified arbitrarily without changing the set of $\Cp$-correlations. Since any two orderings are related by a permutation, and any permutation can be generated from adjacent transpositions which do nothing but exchange a pair of neighboring edges, it is enough to consider the case that only two neighboring edges $e_1,e_2$ are exchanged at some fixed node $v$, either among the incoming edges or among the outgoing edges. We then need to show that if $P(o[V])$ is a $\Cp$-correlation with respect to the original ordering, then it is also a $\Cp$-correlation with respect to the new ordering. Since $P(o[V])$ is a $\Cp$-correlation with respect to the original ordering by assumption, we know that there exists a labelling of edges by objects $X_e$ of $\Cp$ and of nodes by instruments $f_v$ in $\Cp$ which reproduces the probabilities. Now we can construct a labelling of the graph with the new ordering by replacing the instrument $f_v$ with the instrument given by the composition of each component operation $f_v(o[v])$ with the symmetry (swap) $X_{e_1}\otimes X_{e_2}\longrightarrow X_{e_2}\otimes X_{e_1}$, while retaining all other labels. If $e_1$ and $e_2$ are incoming edges of $v$, then the symmetry has to be applied before $f_v$; otherwise, if $e_1,e_2\in\out{v}$, it should be used after. This defines a new set of valuations of the same graph with the new ordering which reproduces the same probabilities $P(o[V])$ thanks to the invariance properties of the graphical calculus as given in terms of the map $\nu$~\cite[Sec.~2.1]{JS}.
\end{sketchproof}

We would like to reemphasize that having to deal with these orderings explicitly is not a shortcoming of the use of category theory per se; it is rather due to the fact that the appropriate pieces of category theory do not seem to have been developed \emphalt{yet}. Recent work like~\cite{SSR} paves the way towards improving this situation.

If one follows the ``hidden variables everywhere'' paradigm advocated in this paper, then there should be no difference between the causal structure corresponding to direct causal links $v\to w$ and the one corresponding to indirect causal links $v\rsa w$. The intuitive reason is this: clearly, every correlation explainable in terms of hidden variables on the given causal structure $G$ is likewise explainable by the enlarged causal structure comprising all direct \emphalt{and indirect} causal links. To see that the converse also holds, one can simulate the propagation of a hidden variable along an indirect causal link by passing it \emphalt{incognito} along all the direct causal links making up the indirect link as additional information. So there is ultimately no reason to distinguish between a direct link $v\to w$ and an indirect link $v\rsa w$, and so it should be sufficient to define the causal structure only in terms of $v\rsa w$.

\begin{thm}
\label{onlyposet}
The set of $\Cp$-correlations depends only on the partial order ($=$causal set) induced by $G$, as defined in Section~\ref{causalsec}.
\end{thm}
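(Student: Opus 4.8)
The plan is to reduce the statement to the invariance of the set of $\Cp$-correlations under the addition of a single ``shortcut'' edge. Concretely, I would first record the following graph-theoretic fact: two directed acyclic graphs on the same vertex set $V$ induce the same partial order $\rsa$ if and only if they have the same transitive closure $T$, and every such graph $G$ satisfies $R \subseteq E(G) \subseteq E(T)$, where $R$ is the transitive reduction (its edges are the covering pairs of the order, which every DAG realizing the order must contain, since a covering pair admits no directed path of length $\geq 2$). Consequently, any two DAGs $G_1, G_2$ inducing the same order can both be transformed into their common transitive closure $T$ by adjoining the missing edges one at a time, and at each stage the intermediate graph is again a DAG inducing the same order. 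It therefore suffices to prove: if $G$ is a DAG, $u \rsa w$ in $G$, and $e_0 \defin (u,w) \notin E$, then $G$ and $G' \defin (V, E \cup \{e_0\})$ have the same set of $\Cp$-correlations (note that $G'$ is acyclic because $w \not\rsa u$, so adding $e_0$ creates no new reachabilities).

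For the inclusion that every $\Cp$-correlation on $G$ is one on $G'$, I would simply label the new edge $e_0$ by the monoidal unit $X_{e_0} \defin I$ and keep all other objects and all instruments unchanged; since $\bigotimes_{e \in \out{u}} X_e \otimes I = \bigotimes_{e \in \out{u}} X_e$ and likewise at $w$ (using strictness of $\Cp$), the $f_v$ remain well-typed $\Cp$-instruments, and routing the trivial system $I$ from $u$ to $w$ does not alter the value $\nu(G')$, so the same numbers $P(o[V])$ result.

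For the converse ``incognito'' inclusion, suppose $P(o[V])$ is realized on $G'$ by objects $X_e$ and instruments $f_v$, with the extra edge carrying the object $X_{e_0}$. Because $e_0 \notin E$ and edges are determined by their endpoints, the relation $u \rsa w$ in $G$ is witnessed by a directed path $u = v_0 \to v_1 \to \dots \to v_k = w$ with $k \geq 2$ passing through genuine intermediate nodes; fix one, with edges $d_1, \dots, d_k \in E$. I would then transport $X_{e_0}$ along this path: augment each $d_i$ to the object $X_{d_i} \otimes X_{e_0}$, route the output of $f_u$ originally destined for $e_0$ into $d_1$ instead, feed the $X_{e_0}$-component of $d_k$ into $f_w$ in place of $e_0$, and at each intermediate node $v_i$ replace $f_{v_i}(o[v_i])$ by $f_{v_i}(o[v_i]) \otimes \id_{X_{e_0}}$ (inserting the symmetries dictated by the chosen orderings). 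Each modified node is still a $\Cp$-instrument: at $u$ and $w$ the total operation is unchanged up to reassociation, while at an intermediate $v_i$ one has $\sum_{o[v_i]} f_{v_i}(o[v_i]) \otimes \id_{X_{e_0}} = \big(\sum_{o[v_i]} f_{v_i}(o[v_i])\big) \otimes \id_{X_{e_0}}$, which lies in $\Pp$ by~\ref{Clin} and~\ref{CProcess}. The resulting valuation of $G$ reproduces $\nu(G') = P(o[V])$, because passing $X_{e_0}$ directly from $u$ to $w$ and passing it incognito along $d_1, \dots, d_k$ yield the same composite morphism in $\Cp$.

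The main obstacle is the final assertion in each direction: verifying rigorously --- rather than merely pictorially --- that these rerouting operations leave the value $\nu$ unchanged. In each case this amounts to an equality of morphisms in $\Cp$ obtained by inserting identities and regrouping tensor factors along a path, which is exactly the kind of statement the graphical calculus of Joyal and Street~\cite{JS} is designed to certify; the difficulty is that the formal bookkeeping of the orderings and the uniqueness properties~\ref{velem}--\ref{vlin} of $\nu$ must be invoked with the same care as in the proof of Theorem~\ref{Ccorriscorr}. I would therefore expect the clean statement ultimately to rest on the same coherence properties of $\nu$ exploited throughout this section.
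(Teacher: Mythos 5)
Your proposal is correct and follows essentially the same route as the paper's own sketch proof: both inclusions rest on the same two devices, namely labelling the added shortcut edge by the monoidal unit $I$ for the easy direction, and relaying the system $X_{e_0}$ \emphalt{incognito} along an existing path (tensoring the edge objects and the intermediate instruments with $\id_{X_{e_0}}$, with normalization preserved via~\ref{Clin} and~\ref{CProcess}) for the converse. The only difference is bookkeeping: the paper reduces to adding triangle shortcuts over a single relay node $v$ with $u\to v\to w$, leaving the passage to the transitive closure implicit, whereas you add general shortcut edges for any $u\rsa w$ and relay along paths of arbitrary length, spelling out the transitive-closure reduction; like the paper, you defer the final verification that the value $\nu$ is unchanged to the coherence properties~\ref{velem}--\ref{vlin} of the graphical calculus.
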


\begin{sketchproof}
It is sufficient to show the following: if $u,v,w\in V$ are nodes such that $u\to v$ and $v\to w$ but $u\not\to w$, then the graph $G'\defin (V,E\cup\{u\to w\})$ differs from $G$ only in that the edge $u\to w$ has been added. It is enough to show that $G'$ has the same set of $\Cp$-correlations as $G$, since applying this statement repeatedly proves the claim.

As noted above, every $\Cp$-correlation on $G$ is clearly also a $\Cp$-correlation on $G'$: any labelling of $G$ by objects and instruments in $\Cp$ can be extended to a labelling of $G'$ by assigning the monoidal unit $I\in\Cp$ to the new edge $u\to w$, while retaining all other labels. Since labelling an edge by $I$ is the same as not considering that edge at all, the resulting correlation is the same.

Conversely, suppose that $P(o[V])$ is a $\Cp$-correlation on $G'$ constructed in terms of some labelling by objects $(X'_e)_{e\in E\cup\{u\to w\}}$ and instruments $(f'_x)_{x\in V}$. We can turn this into a corresponding labelling of $G$ by again retaining all labels, except for the replacements
\[
X_{u\to v} \defin X'_{u\to v} \otimes X'_{u\to w}, \qquad X_{v\to w} \defin X'_{v\to w} \otimes X'_{u\to w}, \qquad f_v(o[v]) \defin f'_v(o[v]) \otimes \id_{X'_{u\to w}},
\]
where we have assumed that the additional edge $u\to w$ comes last in the ordering of $\inc{v}$ and $\out{v}$ on $G'$. Intuitively, the idea is this: in $G'$, the information on $X_{u\to w}$ gets passed directly from $u$ to $w$, while $G$ needs to use the intermediate node $v$ as a relay for sending the information from $u$ to $w$ via $v$. By considering the value map $\nu$ only on the subgraph with boundary induced by the three nodes $u$, $v$ and $w$, the properties~\ref{velem}--\ref{vcomp} imply that the resulting number $P(o[V])$ ends up being the same.
\end{sketchproof}

We now illustrate this categorical formalism with the example of classical correlations, before applying it to quantum correlations in the next section. Classical correlations are defined in terms of a category $\sop$ whose objects are measurable spaces $(X,\Sigma_X)$, and whose morphisms $(X,\Sigma_X)\to (Y,\Sigma_Y)$ are stochastic operations; see Appendix~\ref{appsop}. The monoidal product $\otimes$ of measurable spaces is simply given by their cartesian product. The unit object $I$ is any fixed measurable space having exactly one point. The subcategory of normalized operations $\sp\subseteq\sop$ contains precisely those stochastic operations $P(y|x)$ which are normalized, i.e.~$\int_y P(y|x)=1$ for all $x\in X$. It is straightforward to check that this turns $\sop$ into a symmetric monoidal category satisfying our assumptions~\ref{Ccone}--\ref{CPterm}.

Instantiating the above general definition for the category $\sop$ results in the notion of $\sop$-correlation, which coincides exactly with the notion of classical correlation from Definition~\ref{defccorr}. The integrals in~\eqref{ccorr} arise from the composition of stochastic operations as defined in~\eqref{stochcomp}.

Besides the definition of quantum correlations in the next section, we also suspect that it should be possible to define general probabilistic~\cite{Barrett} correlations as arising from a suitable category in the analogous way. Currently, the problem with defining general probabilistic theories is that a corresponding category of general probabilistic operations does not yet seem to have been worked out. However, it has been shown in independent work of Henson, Lal and Pusey~\cite{HLP} that there exist correlations in certain scenarios that are not general probabilistic correlations. This should be seen in contrast to the case of Bell scenarios, where the correlations correspond to the no-signaling boxes (Theorem~\ref{nsbox}), and it is well-known that every no-signaling box arises from the general probabilistic theory known as ``box world''~\cite{SB}. 

Sometimes a certain framework for physical theories is clearly more powerful than another one. For example, it is commonly believed that everything that can be done classically can also be done quantumly, and this should imply that every classical correlation is also a quantum correlation. One can try to prove statements of this form by making use of this result:

\begin{prop}
\label{DC}
Let $\Cp$ and $\Dp$ be symmetric monoidal category of operations satisfying~\ref{Ccone}--\ref{CPterm}. If there is a linear symmetric monoidal functor $F:\Cp\to\Dp$ with $F(\Pp)\subseteq\Qp$, then every $\Cp$-correlation is also a $\Dp$-correlation.
\end{prop}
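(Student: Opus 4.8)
The plan is to transport a given $\Cp$-valuation of the graph $G$ along the functor $F$ and to show that the resulting $\Dp$-valuation computes the same probabilities. Concretely, suppose $P(o[V])$ is a $\Cp$-correlation on $G$, witnessed by a labelling of each edge $e$ by an object $X_e\in\Cp$ and of each node $v$ by a $\Cp$-instrument $f_v$. I would define a candidate $\Dp$-labelling by taking the new edge objects to be $F(X_e)$ and the components of the new node instruments to be $F(f_v(o[v]))$. Since $F$ is a symmetric monoidal functor it preserves tensor products, so $F(f_v(o[v]))$ has the correct type $\bigotimes_{e\in\inc{v}} F(X_e)\to\bigotimes_{e\in\out{v}} F(X_e)$, and the chosen linear orderings on $\inc{v}$ and $\out{v}$ carry over unchanged.

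First I would check that this really is a $\Dp$-instrument, i.e.\ that $\sum_{o[v]} F(f_v(o[v]))$ lies in the normalized subcategory $\Qp$. This is immediate from the hypotheses: by linearity of $F$ we have $\sum_{o[v]} F(f_v(o[v])) = F\big(\sum_{o[v]} f_v(o[v])\big)$, the argument on the right lies in $\Pp$ because $f_v$ is a $\Cp$-instrument, and $F(\Pp)\subseteq\Qp$ by assumption. Hence the transported labelling is a legitimate $\Dp$-instrument at every node.

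The heart of the argument is to show that the two valuations compute the same scalar. Writing $F_*$ for the operation of pushing a valuation forward along $F$, I would establish that $\nu_\Dp(F_* G) = F(\nu_\Cp(G))$. Regarded as functions of the original $\Cp$-valuation, both sides satisfy the defining clauses \ref{velem}--\ref{vlin}: on a one-node graph each returns $F$ of the single labelling morphism (\ref{velem}); both are multiplicative under disjoint union (\ref{vmult}) and compatible with composition (\ref{vcomp}), since $F$ preserves $\otimes$ and $\circ$; and both are linear in each node morphism (\ref{vlin}), since $F$ is linear on hom-sets. As any graph is built from single nodes by iterated disjoint union and composition, these clauses determine the value, so the two sides coincide. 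Finally I would translate this identity of morphisms into an identity of numbers: the value $\nu_\Cp(G)$ is the scalar $P(o[V])\in\Cp(I,I)=\Rplus$, that is $P(o[V])\cdot 1$ with $1\in\Pp(I,I)$ the unit of \ref{CI}; applying $F$ and using linearity gives $F(P(o[V])\cdot 1)=P(o[V])\cdot F(1)$, and since $1\in\Pp(I,I)$ while $F(\Pp)\subseteq\Qp$ and $\Qp(I,I)=\{1\}$ by \ref{CI} for $\Dp$, we get $F(1)=1$. Thus the image scalar is again $P(o[V])\in\Dp(I,I)=\Rplus$, so the transported labelling reproduces exactly the same family of numbers, proving it is a $\Dp$-correlation.

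The step I expect to be the main obstacle is the third one: rigorously arguing that $F$ commutes with the value map $\nu$. This is morally just the slogan that a symmetric monoidal functor preserves the graphical calculus, but making it precise requires matching $F$'s preservation of $\otimes$, $\circ$, single-node values and linearity against each of the clauses \ref{velem}--\ref{vlin}, and in particular checking that $F$ respects the chosen orderings on $\inc{v}$ and $\out{v}$ (and on the boundaries of the intermediate subgraphs appearing in any inductive decomposition) that enter the definition of the iterated tensor products.
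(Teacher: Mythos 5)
Your proposal is correct and follows essentially the same route as the paper's own (sketch of) proof: transport the edge objects and instrument components along $F$, use linearity together with $F(\Pp)\subseteq\Qp$ to verify that the transported data form $\Dp$-instruments, and invoke the defining properties of the value map $\nu$ to conclude that applying $\nu$ commutes with $F$, so the scalars $P(o[V])$ are unchanged. You in fact supply somewhat more detail than the paper, which labels its argument only a sketch---notably the explicit verification that $F(1)=1$ and the identification of the commutation of $F$ with $\nu$ as the step requiring the most care.
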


Here, linearity of $F$ means that $F$ preserves addition and scalar multiplication of morphisms. The functor is monoidal if it comes equipped with isomorphisms $F(X\otimes_{\Cp} Y) \cong F(X)\otimes_{\Dp} F(Y)$ satisfying the appropriate coherence laws, one of which involves the symmetry isomorphisms. These coherence requirements are technical conditions guaranteed to hold in any naturally arising situation.

\bigskip
\begin{sketchproof}
Let $P(o[V])$ be a $\Cp$-correlation arising via objects $(X_e)_{e\in E}$ and instruments $(f_v)_{v\in V}$. Linearity of the functor $F$ and $F(\Pp)\subseteq\Qp$ imply that the instrument components $f_v(o[v])$ map to $\Dp$-operations $F(f_v(o[v]))$ which form an instrument themselves that we denote $F(f_v)$. Hence we obtain a labelling of $G$ in $\Dp$ with objects $\left(F(X_e)\right)_{e\in E}$ and instruments $\left(F(f_v)\right)_{v\in V}$. The properties~\ref{velem}--\ref{vcomp} then guarantee that the resulting probabilities $P(o[V])$ are the same, since they imply that applying $\nu$ commutes with $F$, and $F$ maps the numbers in $\Cp(I,I)$ to the same numbers in $\Dp(I,I)$ thanks to linearity.
\end{sketchproof}

\newpage
\section{Quantum correlations}
\label{secqc}

Our world is governed, at least on microscopic scales, by the laws of quantum mechanics and quantum field theory. It was the revolutionary insight of Bell~\cite{Bell} that this leads to correlations (on the graph of Figure~\ref{Bellscen}) which are not classical. As a consequence, either our world cannot be described in terms of classical probability theory, or the given causal structure is not actually the correct one describing the given correlations. The first option is generally known as \emph{non-realism}~\cite{Gisin,Gisin2}, while the second option is referred to as \emph{non-locality}, in the sense of the existence of non-local interactions\footnote{Unfortunately, the term \emphalt{nonlocality} is also often used in a much broader sense, as a synonym for the existence of non-classical correlations~\cite{Bellreview}.}. On the mathematical level, the non-realism solution is preferable since non-locality requires fine-tuning to explain why these non-local interactions cannot be exploited for instantaneous signalling~\cite{WS}.

Using the definitions and results from the previous section, we will now define what a \emph{quantum correlation} is in our more general formalism. The basic idea is that instead of carrying measurable spaces as in the classical case, the edges should carry the quantum analogue of those, i.e.~Hilbert spaces. Similarly, instead of having stochastic operations at the nodes, the nodes should now be labelled by \emph{quantum operations}. Roughly speaking, quantum correlations are exactly those joint distributions $P(o[V])$ that one can obtain upon interpreting the given graph in Hardy's operator-tensor formulation of quantum theory~\cite{HardyOT}. However, since the operator-tensor formulation has so far only been worked out for finite-dimensional quantum systems, we cannot use it for our definition of quantum correlation. We prefer to be general and allow systems of arbitrary finite or infinite dimension, since we have no way to tell how much stronger quantum correlations based on infinite-dimensional Hilbert space may be. While we suspect that there should be a quantum analogue of Theorem~\ref{finitethm}, already stating and proving such a theorem requires having a general definition. Without a general definition, there is no way to know whether quantum correlations with infinite-dimensional Hilbert spaces can be stronger than those based on finite dimensions or not.

So in order to apply the definitions and results from the previous section, we need to explain how quantum operations form a symmetric monoidal category, which we denote $\qop$. This works as follows: the objects of $\qop$ are Hilbert spaces $\H$ of arbitrary dimension, finite or infinite. These can be tensored in the usual way; we refer to~\cite{HalmHilb,KR} for background on the definition of this tensor product in the infinite-dimensional case. To every Hilbert space $\H$ we associate its space of trace class operators $\S(\H)$. One can understand the letter ``$\S$'' as suggestive of either ``state'' or of ``Schatten class''~\cite{Conway}. The quantum states (density matrices) on $\H$ are precisely those positive elements of $\S(\H)$ that have unit trace. With this in mind, we can now define the morphisms of $\qop$: for Hilbert spaces $\H$ and $\K$, the morphisms forming the set $\qop(\H,\K)$ are defined to be the \emph{completely positive} maps $\S(\H)\to\S(\K)$. Note that our usage of the term ``quantum operation'' differs from the usual one~\cite[Sec.~8.2.4]{NC} in that we do not impose subnormalization. The normalized maps $\S(\H)\to\S(\K)$ forming the subcategory $\qop_1$ are those quantum operations that are trace-preserving, i.e.~the quantum channels. It is then straightforward to check that our requirements~\ref{Ccone}--\ref{CPterm} are satisfied.

\begin{defn}
$P(o[V])$ is a \emph{quantum correlation} if it is a $\qop$-correlation.
\end{defn}

We refer to the proof of Proposition~\ref{bellclassification} for an example of how to unfold this definition for a concrete graph. Thanks to the general Theorem~\ref{Ccorriscorr} from the previous section, we obtain immediately:

\begin{prop}\label{qisc}
If $P(o[V])$ is a quantum correlation, then it is a correlation.
\end{prop}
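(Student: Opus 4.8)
The plan is to observe that this proposition is an immediate corollary of the general Theorem~\ref{Ccorriscorr}. Since a quantum correlation is \emph{by definition} a $\qop$-correlation, it suffices to confirm that the category $\qop$ satisfies the hypotheses~\ref{Ccone}--\ref{CPterm}; once this is done, Theorem~\ref{Ccorriscorr} applies verbatim and yields the claim. Thus the entire content of the proof reduces to the verification already announced as ``straightforward'' in the construction of $\qop$.

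First I would dispatch the easy axioms. For~\ref{Ccone}, the completely positive maps $\S(\H)\to\S(\K)$ are closed under addition and under multiplication by nonnegative scalars, with the zero map as neutral element, so each hom-set is an $\Rplus$-module. For~\ref{Clin}, composition of linear maps and the tensor product of completely positive maps are bilinear by construction. The subcategory $\qop_1$ of trace-preserving maps contains all identities and is closed under composition and tensor product, giving~\ref{CProcess}.

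The only steps requiring a moment's thought are~\ref{CI} and~\ref{CPterm}. For~\ref{CI}, the monoidal unit is $I=\C$, so $\S(I)=\C$; a completely positive map $\C\to\C$ is multiplication by a nonnegative real, whence $\qop(I,I)=\Rplus$, and the trace-preserving such map is the identity, so $\qop_1(I,I)=\{1\}$. For~\ref{CPterm} one must show that the trace $\tr:\S(\H)\to\C$ is the \emph{unique} trace-preserving completely positive map $\H\to I$. Any positive linear functional on $\S(\H)$ has the form $\rho\mapsto\tr(A\rho)$ for a positive operator $A\in\B(\H)$ via the duality $\S(\H)^\ast\cong\B(\H)$ (and a map into the abelian algebra $\C$ is completely positive iff it is positive), and the trace-preservation constraint $\tr(A\rho)=\tr(\rho)$ for all $\rho$ forces $A=\id$; hence $\tau_\H=\tr$ is the unique element of $\qop_1(\H,I)$, as required.

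With~\ref{Ccone}--\ref{CPterm} verified for $\qop$, Theorem~\ref{Ccorriscorr} immediately gives that every $\qop$-correlation is a correlation, which is exactly the statement. I expect no genuine obstacle: the sole subtlety is the terminality argument~\ref{CPterm}, where one must invoke the duality $\S(\H)^\ast\cong\B(\H)$ to pin down the functional and rule out any trace-preserving completely positive map $\H\to I$ other than the trace itself.
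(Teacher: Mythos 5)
Your proposal is correct and follows exactly the paper's route: the paper likewise obtains Proposition~\ref{qisc} as an immediate consequence of Theorem~\ref{Ccorriscorr}, after declaring it ``straightforward to check'' that $\qop$ satisfies~\ref{Ccone}--\ref{CPterm}. You merely spell out that verification, including the key terminality point~\ref{CPterm} via the duality $\S(\H)^\ast\cong\B(\H)$, which the paper leaves implicit.
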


Moreover, we also would like to apply Proposition~\ref{DC} in order to show that every classical correlation is also a quantum correlation. However, the problem with this is that we do not know how to construct any functor $\sop\to\qop$ satisfying our requirements, since there is no obvious way to assign a Hilbert space to any given measurable space $(X,\Sigma)$. The reader not interested in the reasons for this may skip the following remark.

\begin{rem}
There is a standard construction assigning a Hilbert space $L^2(X,\Sigma,P)$ to any \emphalt{measure space} $(X,\Sigma,P)$, but this Hilbert space depends crucially on the measure $P$, so that it is not applicable to a mere measurable space $(X,\Sigma)$. In fact, even for the full subcategory $\fsop\subseteq\sop$ containing only the finite measurable spaces $(X,\mathbf{2}^X)$, we have not been able to find any functor $F : \fsop\to\qop$ that would satisfy the assumptions of Proposition~\ref{DC}. Any obvious candidate should be given on objects by assigning to every finite set $X$ the Hilbert space $\C^X$, which has a canonical orthonormal basis given by $\{|x\rangle\}_{x\in X}$. To any stochastic operation $P(y|x) : X\to Y$, we can try to assign the quantum operation
\beq
\label{fun1}
\S(\C^X) \to \S(\C^Y), \qquad |x_1\rangle\langle x_2| \mapsto \sum_{y_1,y_2} \sqrt{P(y_1|x_1)\, P(y_2|x_2)} \: |y_1\rangle\langle y_2| ,
\eeq
which can be understood as being represented by the single Kraus operator
\[
 \C^X \to \C^Y, \qquad |x\rangle \mapsto \sum_y \sqrt{P(y|x)} \: |y\rangle .
\]
It is straightforward to check that this assignment of a quantum operation to any stochastic operation is indeed a monoidal functor. However, due to the square root in~\eqref{fun1}, it fails to be linear---the sum of two classical operations will in general not map to the sum of the associated quantum operations! So, alternatively, one might try and come up with an alternative to~\eqref{fun1} which is linear in $P$. The obvious candidate is a ``decoherence in the canonical basis'' approach, in which we would assign to $P(y|x)$ the quantum operation
\beq
\label{fun2}
\S(\C^X) \to \S(\C^Y), \qquad |x_1\rangle\langle x_2| \mapsto \delta_{x_1,x_2} \sum_{y} P(y|x_1) \: |y\rangle\langle y| ,
\eeq
which indeed has the advantage of being linear in $P$. However, this assignment lacks functoriality: the identity stochastic operation $X\to X$, represented by $P(x'|x)=\delta_{x',x}$, gets mapped to the quantum operation $\S(\C^X)\to\S(\C^X)$ which decoheres any quantum state with respect to the computational basis; in other words, this assignment of a quantum operation to any stochastic operation does not map identities to identities, as any functor should, and hence~\eqref{fun2} does not define a functor.
\end{rem}

So while we strongly suspect that every classical correlation is indeed a quantum correlation, we do not have any proof of this seemingly basic statement! 

\begin{conj}
\label{cisq}
Every classical correlation is also a quantum correlation.
\end{conj}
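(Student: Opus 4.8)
The plan is to bypass Proposition~\ref{DC} entirely. As the preceding remark explains, neither obvious assignment $\sop\to\qop$ is simultaneously linear and functorial: the isometric embedding~\eqref{fun1} is functorial but not linear, whereas the decoherence map~\eqref{fun2} is linear but sends $\id_X$ to the dephasing channel rather than to $\id$. My observation is that this failure of functoriality is irrelevant to the task at hand. To realize one given classical correlation as a quantum correlation I do not need a functor defined on all of $\sop$; I only need to exhibit a single valid $\qop$-labelling of the fixed graph $G$ that reproduces the numbers $P(o[V])$. The spurious dephasing introduced by~\eqref{fun2} is then harmless, because in my construction every edge of $G$ is fed a state that is already diagonal in the computational basis, so the redundant dephasing at the receiving node acts as the identity.

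Concretely, I would first treat the case of finite hidden variable spaces. Label each edge $e$ by the Hilbert space $\H_e\defin\C^{X_e}$ with orthonormal basis $\{\ket{\lambda_e}\}_{\lambda_e\in X_e}$, and label each node $v$ by the quantum instrument whose component at outcome $o[v]$ is the measure-and-prepare (entanglement-breaking) operation
\beq
\label{mpcp}
f_v(o[v])(\rho) \defin \sum_{\lambda_{\inc{v}},\,\lambda_{\out{v}}} P(o[v]\,\lambda_{\out{v}}|\lambda_{\inc{v}})\,\langle\lambda_{\inc{v}}|\rho|\lambda_{\inc{v}}\rangle\;|\lambda_{\out{v}}\rangle\langle\lambda_{\out{v}}|,
\eeq
where $\ket{\lambda_{\inc{v}}}\defin\bigotimes_{e\in\inc{v}}\ket{\lambda_e}$ and similarly for $\out{v}$. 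Each such map is completely positive, and summing over $o[v]$ yields a trace-preserving map by normalization of $P(o[v]\lambda_{\out{v}}|\lambda_{\inc{v}})$, so that $f_v$ is a genuine $\qop$-instrument lying over $\qop_1$. It then remains to check that evaluating $G$ with these labels via the value map $\nu$ reproduces $P(o[V])$. The key step is a diagram chase: whenever the preparation $|\lambda_e\rangle\langle\lambda_e|$ emitted along an edge $e$ meets the measurement $\langle\mu_e|\cdot|\mu_e\rangle$ performed at its target, composition along that wire contributes the factor $\langle\mu_e|\lambda_e\rangle\langle\lambda_e|\mu_e\rangle=\delta_{\lambda_e,\mu_e}$. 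Thus all off-diagonal coherences drop out, the hidden-variable values are forced to agree across every edge, and the quantum value collapses to $\sum_{\lambda_E}\prod_{v\in V}P(o[v]\lambda_{\out{v}}|\lambda_{\inc{v}})$, which is exactly the finite-support instance of~\eqref{ccorr}.

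The hard part will be the passage to arbitrary, non-discrete hidden variable spaces, which is precisely where the absence of a good functor $\sop\to\qop$ bites. I see two routes, each with its own difficulty. One could assign to each edge the Hilbert space $L^2(X_e,\mu_e)$ built from the marginal $\mu_e$ of the total measure~\eqref{totaldist} carried by that edge, and replace the basis sums in~\eqref{mpcp} by a ``measure in the position representation, then reprepare'' instrument expressed through a direct-integral decomposition; the obstacle is that a literal preparation of $|\lambda_e\rangle$ is not normalizable for continuous $\lambda_e$, so the whole construction must be rephrased in terms of POVMs and spectral measures and its complete positivity re-established in infinite dimensions. Alternatively, one could invoke Theorem~\ref{finitethm} to approximate $P(o[V])$ by finitely-supported classical correlations, each of which is a quantum correlation by the construction above, and then pass to the limit; but this yields the conjecture only if the set of quantum correlations on $G$ is closed, and establishing such closedness (a quantum counterpart of Theorem~\ref{finitethm}, suspected in Section~\ref{secqc}) is itself open. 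I regard reconciling these two approaches---most plausibly by carrying out the $L^2$ construction rigorously, perhaps under a mild regularity hypothesis on the $(X_e,\Sigma_e)$---as the central obstacle, and the reason the conjecture remains unproven.
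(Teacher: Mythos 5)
Your finite-case construction---labelling each edge by $\C^{X_e}$ and each node by the measure-in-the-canonical-basis-then-reprepare instrument---is precisely the paper's own partial proof, which applies the ``decoherence in the canonical basis'' map~\eqref{fun2} non-functorially to a single labelling of $G$ and notes that, since all states propagating along edges are diagonal, the spurious dephasing is harmless and the classical probabilities are recovered. Like the paper, you establish only the case of finite hidden variable spaces and leave the full conjecture open (the paper likewise remarks that Theorem~\ref{finitethm} brings one only ``arbitrarily close'' to the conjecture), so your proposal matches the paper's argument in both approach and scope.
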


All that we can say for sure is that every classical correlation which can be realized with finite hidden variable spaces is also a quantum correlation. Providing a proof sketch of this is what we do now. By Theorem~\ref{finitethm}, this is arbitrarily close to proving the full conjecture, but does not go all the way. 

\medskip
\begin{partialproof}
As outlined above, we do not know how to apply Proposition~\ref{DC} to this situation, but turning any stochastic operation between finite sets into a quantum operation as in~\eqref{fun2} does indeed do the trick, despite not being functorial. So for each edge $e\in E$, we replace the hidden variable space $X_e$, assumed to be finite, by the Hilbert space $\C^{X_e}$. We also replace every stochastic instrument $P(o[v]\lambda_{\out{v}}|\lambda_{\inc{v}})$ by the corresponding ``decoherence in the canonical basis'' quantum instrument
\[
O_v\times \S\Bigg(\bigotimes_{e\in\inc{v}} \C^{X_e} \Bigg) \longrightarrow \S\Bigg(\bigotimes_{e\in\out{v}} \C^{X_e} \Bigg)
\]
with same outcome set $O_v$. Then all states propagating along the edges will be diagonal in the canonical basis. This guarantees that the original joint probabilities $P(o[V])$ are recovered.
\end{partialproof}

In order to underline that our definition of quantum correlation is adequate, we compare it to the usual definition in the Bell scenario case:

\begin{prop}\label{bellclassification}
In the $n$-party Bell scenario, $P(o[a_1\ldots a_n x_1\ldots x_n s])$ is a quantum correlation if and only if it satisfies the free will equation~\eqref{freewill} and the corresponding conditional probabilities admit a \emph{quantum model},
\beq
\label{qmodel}
P(o[a_1\ldots a_n] | o[x_1 \ldots x_n s]) = \braket{\psi_{o[s]}|E^{o[x_1]}_{o[a_1]} \otimes \dotsb \otimes E^{o[x_n]}_{o[a_n]}|\psi_{o[s]}},
\eeq
where $\{|\psi_{o[s]}\rangle\}_{o[s]\in O_s}$ is a family of states on some composite Hilbert space $\H_1\otimes\ldots\otimes\H_n$ and the $E_{o[a_i]}^{o[x_i]}$ are positive operators on $\H_i$, not depending on $o[s]$, subject to the completeness relation $\sum_{o[a_i]} E_{o[a_i]}^{o[x_i]} = \mathbbm{1}_{\H_i}$.
\end{prop}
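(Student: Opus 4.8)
The plan is to unfold the definition of $\qop$-correlation in the $n$-party Bell scenario and compare it directly with the quantum model~\eqref{qmodel}, handling the two implications separately. First I would fix the labelling dictated by the scenario's edges: each edge $s\to a_i$ carries a Hilbert space $\H_i$ and each edge $x_i\to a_i$ carries a Hilbert space $\K_i$. Since $s$ has no incoming edges, its $\qop$-instrument assigns to each outcome $o[s]$ a subnormalized state $\rho_{o[s]}$ on $\bigotimes_i\H_i$ with $\sum_{o[s]}\rho_{o[s]}$ of unit trace; likewise each setting node $x_i$ assigns to $o[x_i]$ a subnormalized state $\sigma_{o[x_i]}$ on $\K_i$; and each measurement node $a_i$, having incoming system $\H_i\otimes\K_i$ and no outgoing edge, is a POVM $\{F_{o[a_i]}\}$ on $\H_i\otimes\K_i$, where the constraint that $\sum_{o[a_i]}F_{o[a_i]}$ equal the discard morphism $\tau$ forces $\sum_{o[a_i]}F_{o[a_i]}=\mathbbm{1}_{\H_i\otimes\K_i}$. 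Evaluating the diagram gives
\beq
P(o[V]) = \tr\!\left[\Big(\bigotimes_i F_{o[a_i]}\Big)\Big(\rho_{o[s]}\otimes\bigotimes_i\sigma_{o[x_i]}\Big)\right].
\eeq
The free will equation~\eqref{freewill} then follows from Proposition~\ref{qisc} combined with Proposition~\ref{nsbox}, or directly by summing over all $o[a_i]$ and using $\sum_{o[a_i]}F_{o[a_i]}=\mathbbm{1}$, which leaves $\tr(\rho_{o[s]})\prod_i\tr(\sigma_{o[x_i]})=P(o[s])\prod_i P(o[x_i])$.

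For the ``only if'' direction I would extract a quantum model in two moves. Writing $\hat\sigma_{o[x_i]}$ and $\hat\rho_{o[s]}$ for the normalized states (legitimate whenever $P(o[x_i])>0$ and $P(o[s])>0$), I first absorb the setting states into effective local measurements by setting
\beq
E^{o[x_i]}_{o[a_i]} \defin \tr_{\K_i}\!\left[F_{o[a_i]}\,(\mathbbm{1}_{\H_i}\otimes\hat\sigma_{o[x_i]})\right],
\eeq
a positive operator on $\H_i$ independent of $o[s]$, with $\sum_{o[a_i]}E^{o[x_i]}_{o[a_i]}=\mathbbm{1}_{\H_i}$ by completeness of the $F_{o[a_i]}$ and normalization of $\hat\sigma_{o[x_i]}$. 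The partial-trace identity $\tr[F(\xi\otimes\hat\sigma)]=\tr[\tr_{\K_i}(F(\mathbbm{1}\otimes\hat\sigma))\,\xi]$, applied party by party, turns the conditional probability (after dividing~\eqref{qmodel}'s left side by $P(o[s])\prod_i P(o[x_i])$) into $\tr[(\bigotimes_i E^{o[x_i]}_{o[a_i]})\,\hat\rho_{o[s]}]$. The second move is purification: since~\eqref{qmodel} demands a \emph{pure} state, I would purify each $\hat\rho_{o[s]}$ on $\bigotimes_i(\H_i\otimes\H_i')$ with ancillas $\H_i'\cong\H_i$ distributed one per party, take $|\psi_{o[s]}\rangle$ as the purification, and replace each $E^{o[x_i]}_{o[a_i]}$ by $E^{o[x_i]}_{o[a_i]}\otimes\id_{\H_i'}$. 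Tracing out the ancillas recovers $\hat\rho_{o[s]}$, so~\eqref{qmodel} holds over the enlarged spaces.

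For the ``if'' direction I would run the construction of Proposition~\ref{cbell} in reverse. Given the model, let $s\to a_i$ carry its $\H_i$ and let $x_i\to a_i$ carry $\K_i\defin\C^{O_{x_i}}$ with orthonormal basis $\{|o[x_i]\rangle\}$. The source instrument emits $P(o[s])\,|\psi_{o[s]}\rangle\langle\psi_{o[s]}|$, each setting node emits $P(o[x_i])\,|o[x_i]\rangle\langle o[x_i]|$, and each measurement node uses
\beq
F_{o[a_i]} \defin \sum_{o[x_i]} E^{o[x_i]}_{o[a_i]}\otimes|o[x_i]\rangle\langle o[x_i]|,
\eeq
which is positive with $\sum_{o[a_i]}F_{o[a_i]}=\mathbbm{1}_{\H_i\otimes\K_i}$, hence defines valid $\qop$-instruments. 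Evaluating the trace, the orthogonal setting states select the matching term of each $F_{o[a_i]}$, yielding $P(o[s])\prod_i P(o[x_i])\cdot\langle\psi_{o[s]}|\bigotimes_i E^{o[x_i]}_{o[a_i]}|\psi_{o[s]}\rangle$; the free will equation identifies the prefactor with $P(o[x_1\ldots x_n s])$ and~\eqref{qmodel} supplies the conditional factor, so the original $P(o[V])$ is reproduced.

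I expect the only genuinely delicate step to be the purification in the ``only if'' direction: the general $\qop$ formalism permits the source to emit a \emph{mixed} state, whereas~\eqref{qmodel} insists on a pure vector, so one must enlarge each $\H_i$ by an ancilla and check that a single ancilla space serves uniformly for all finitely many $o[s]$ (taking $\H_i'\cong\H_i$ always suffices, and the spectral construction of the purification goes through even when $\H_i$ is infinite-dimensional). Everything else is careful tensor-factor bookkeeping together with the two applications of the partial-trace identity.
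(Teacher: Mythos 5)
Your proposal is correct and follows essentially the same route as the paper's proof: unfold the $\qop$-instruments into subnormalized states $\rho_{o[s]}$, $\sigma_{o[x_i]}$ and POVMs, absorb the setting states into effective operators $E^{o[x_i]}_{o[a_i]}$ via a partial trace (identical to the paper's definition~\eqref{defE} up to normalization bookkeeping), purify the source state, and for the converse use the block-diagonal construction $F_{o[a_i]}=\sum_{o[x_i]}E^{o[x_i]}_{o[a_i]}\otimes|o[x_i]\rangle\langle o[x_i]|$ with classical setting states on $\C^{O_{x_i}}$. The only (immaterial) deviation is that you purify with one ancilla per party, whereas the paper absorbs a single ancilla into one of the spaces $\H_{s\to a_i}$; both preserve the tensor-product structure required by~\eqref{qmodel}.
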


As in Proposition~\ref{cbell}, we implicitly also assume that only those values of $o[x_i]$ and $o[s]$ are considered for which $P(o[x_i]) > 0$ and $P(o[s]) > 0$, since otherwise conditioning on these variables would not make sense.

\begin{proof}
We begin by spelling out our definition of quantum correlation for the $n$-party Bell scenario in more detail.
First, our definition requires that the edges of the graph are labelled by Hilbert spaces
\[
\H_{x_i\to a_i},\qquad \H_{s\to a_i} .
\]
Second, there is a quantum instrument labelling the source node $s$. Since $s$ does not have any incoming edges, this quantum instrument corresponds to a collection of subnormalized states indexed by the outcome $o[s]\in O_s$,
\[
\rho_{o[s]} \:\in\: \S\Bigg(\bigotimes_{i=1}^n \H_{s\to a_i} \Bigg), \qquad \tr\left(\sum_{o[s]} \rho_{o[s]}\right) = 1,
\]
where the prescription in the proof of Theorem~\ref{Ccorriscorr} shows that $P(o[s]) = \tr(\rho_{o[s]})$. The same statement applies to the ``choice of setting'' nodes $x_i$, which likewise emit subnormalized states indexed by outcomes $o[x_i]$,
\[
\sigma_{o[x_i]} \:\in\: \S(\H_{x_i \to a_i}), \qquad \tr\left(\sum_{o[x_i]} \sigma_{o[x_i]}\right) = 1,
\]
and we similarly have $P(o[x_i]) = \tr(\sigma_{o[x_i]})$.

Finally, the quantum instrument labelling a ``measurement'' node $a_i$ corresponds simply to a POVM\footnote{To begin with, the components of the quantum instrument at $a_i$ are completely positive maps of type $\S(\H)\to\C$ for $\H=\H_{x_i\to a_i} \otimes \H_{s\to a_i}$, but it is well-known that these are in bijection with bounded positive operators on $\H$.},
\[
M_{o[a_i]} \:\in\: \B\left( \H_{x_i\to a_i} \otimes \H_{s\to a_i}\right) ,\qquad \sum_{o[a_i]} M_{o[a_i]} = \mathbbm{1}.
\]
This data is related to the overall probabilities by
\beq
\label{newqmodel}
P(o[a_1\ldots a_n x_1\ldots x_n s]) = \tr\left[\left(\rho_{o[s]} \otimes \bigotimes_{i=1}^n \sigma_{o[x_i]} \right) \left(\bigotimes_{i=1}^n M_{o[a_i]}\right)\right],
\eeq
where it is understood that the tensor factors in the tensor product Hilbert space carrying the ``state'' part need to be reordered so as to match those of the ``measurement'' part in order for the operator multiplication to make sense. (This is closely related to the ordering issues that we encountered in Section~\ref{seccats}.)

We begin the proof with the ``only if'' part. One obtains the free will equation~\eqref{freewill} by noting that Proposition~\ref{qisc} guarantees that $P(o[a_1\ldots a_nx_1\ldots x_ns])$ is a correlation, and therefore Theorem~\ref{nsbox} applies. In order to obtain a representation of the form~\eqref{qmodel}, we take the Hilbert spaces to be $\H\defin\H_{s\to a_i}$, on which we have positive operators
\beq
\label{defE}
E_{o[a_i]}^{o[x_i]} \defin \frac{1}{P(o[x_i])} \,\tr_{\H_{x_i\to a_i}} \left( \left(\sigma_{o[x_i]}\otimes\mathbbm{1}_{\H_{s\to a_i}}\right) M_{o[a_i]} \right).
\eeq
Since $\sum_{o[a_i]} M_{o[a_i]} = \mathbbm{1}$ and $\tr(\sigma_{o[x_i]}) = P(o[x_i])$, this indeed satisfies the completeness relation $\sum_{o[a_i]} E_{o[a_i]}^{o[x_i]} = \mathbbm{1}_{\H_{x_i\to a_i}}$. The definition~\eqref{defE} formalizes the intuitive idea that $o[x_i]$ behaves like a choice of setting determining which POVM will be used at $a_i$. With this, we can rewrite~\eqref{newqmodel} in the form
\[
P(o[a_1\ldots a_n x_1 \ldots x_n s]) = \tr\left[\left(\rho_{o[s]} \otimes \bigotimes_{i=1}^n \mathbbm{1}_{\H_{s\to a_i}} \right)\left(\bigotimes_{i=1}^n E_{o[a_i]}^{o[x_i]}\right)\right] \cdot\prod_{i=1}^n P(o[x_i]),
\]
where now the trace only ranges over the remaining Hilbert space $\bigotimes_{i=1}^n \H_{s\to a_i}$. If we now purify the states $\rho_{o[s]}$ jointly to unit vectors $|\psi_{o[s]}\rangle$ by adding an additional ancilla system, which should be regarded as becoming part of any one of the $\H_{s\to a_i}$, we can write this as
\beq
\label{altqmodel}
P(o[a_1\ldots a_n x_1 \ldots x_n s]) = \left\langle\psi_{o[s]}\left|E_{o[a_1]}^{o[x_1]}\otimes\ldots\otimes E_{o[a_n]}^{o[x_n]} \right|\psi_s\right\rangle \cdot P(o[x_1])\cdots P(o[x_n]) P(o[s]),
\eeq
where we had to factor out an additional factor of $P(o[s])$, since $|\psi_{o[s]}\rangle$ is normalized, while $\tr(\rho_{o[s]}) = P(o[s])$. This gives the desired form~\eqref{qmodel} upon dividing by the free will equation~\eqref{freewill}.

Now for the ``if'' part. Suppose~\eqref{freewill} and~\eqref{qmodel} hold; we can multiply these two equations and obtain~\eqref{altqmodel}. We rewrite this in the form~\eqref{newqmodel} as follows. Let the Hilbert spaces on the edges be given by
\[
\H_{s\to a_i} \defin \H_i ,\qquad \H_{x_i\to a_i} \defin \C^{O_{x_i}},
\]
where $\C^{O_{x_i}}$ is the Hilbert space with a canonical orthonormal basis $\{|o[x_i]\rangle\}$ indexed by the possible values of $o[x_i]$. Now define
\[
\rho_{o[s]} \defin P(o[s]) \cdot |\psi_{o[s]}\rangle\langle\psi_{o[s]}|,\quad \sigma_{o[x_i]} \defin P(o[x_i]) \cdot |o[x_i]\rangle\langle o[x_i]|, \quad M_{o[a_i]} \defin \sum_{o[x_i]} |o[x_i]\rangle\langle o[x_i]| \otimes E_{o[a_i]}^{o[x_i]}.
\]
It is straightforward to check that these operators satisfy the properties mentioned above that are required for the operators to assemble into quantum instruments on the nodes $s$, $x_i$ and $a_i$, and that~\eqref{newqmodel} then indeed recovers the given correlation.
\end{proof}

The proofs of this proposition and the earlier Propositions~\ref{nsbox} and~\ref{cbell} should make it clear how to translate back and forth between Bell scenarios in our formalism and the already existing vast literature on Bell scenarios, which uses the ordinary definitions~\eqref{lcausal} and~\eqref{qmodel}.

We leave all further study of quantum correlations on arbitrary causal structure to future work. As a start, it should be verified that our general definition not only comprises the usual one for Bell scenarios, but also for all other concrete scenarios that have been studied until now~\cite{Pop,BGP,Fri,GWCAN}, as far as precise definitions have been spelled out in these works. Then, it will be a challenging enterprise to find further examples of non-classical quantum correlations and try to answer the question in which scenarios non-classical quantum correlations exist and in which ones they do not; as a trivial example, the single-party Bell scenario does not have any non-classical quantum correlations. We believe that a lot remains to be explored, and works like the present one and~\cite{Fri,HLP} are but the beginning of a transformation of our understanding of quantum ``nonlocality'' and its applications to quantum information processing.

\newpage
\section{Classical correlations as hidden Bayesian networks}
\label{hbn}

\subsection*{Hidden Markov models.} Hidden variables have been studied extensively not only in quantum foundations, but also in machine learning, and in particular hidden variables for stochastic processes. Recall that a \emph{stochastic process} is a joint distribution $P(o[\Z])$ of a collection of random variables $o[\Z]=(o[t])_{t\in\Z}$ indexed by integer timesteps $t\in\Z$. The underlying causal structure is the one of $\Z$, as in Figure~\ref{stochproc}; the node labels are precisely the timesteps $t\in\Z$. We now informally discuss correlations and classical correlations on this causal structure; this must remain at an informal level since our definitions and results of the previous sections, like Definition~\ref{defccorr}, are not actually applicable to the case of infinite $G$, like $G=\Z$. 

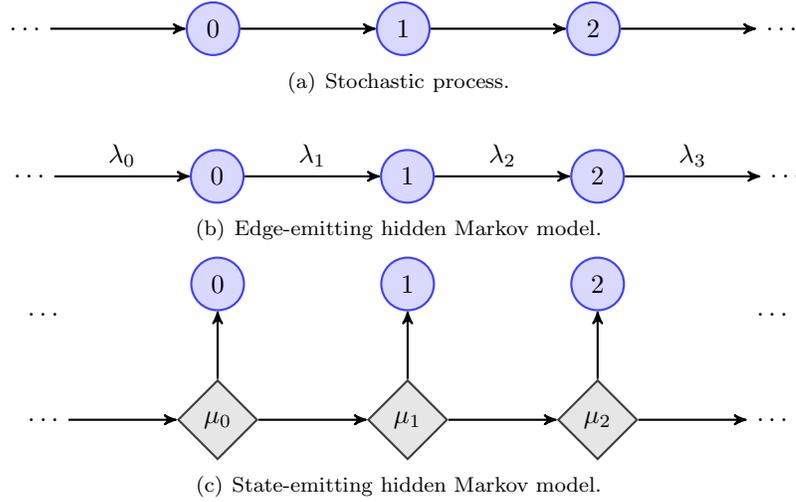
\begin{figure}
\subfigure[Stochastic process.]{
\label{stochproc}
\begin{tikzpicture}[node distance=1.8cm,>=stealth',thick,scale=.8]
\tikzstyle{place}=[circle,thick,draw=blue!75,fill=blue!15,minimum size=7mm]
\node[place] (a-1) at (0,0) {$0$} ;
\node[place] (a0) [right=of a-1] {$1$} ;
\node[place] (a1) [right=of a0] {$2$} ;
\node (a-2) [left=of a-1] {$\ldots$} ;
\node (a2) [right=of a1] {$\ldots$} ;
\draw[->] (a-2) -- (a-1) ;
\draw[->] (a-1) -- (a0) ;
\draw[->] (a0) -- (a1) ;
\draw[->] (a1) -- (a2) ;
\end{tikzpicture}}
\subfigure[Edge-emitting hidden Markov model.]{
\rule{0ex}{3pc}
\label{eehmm}
\begin{tikzpicture}[node distance=1.8cm,>=stealth',thick,scale=.8]
\tikzstyle{place}=[circle,thick,draw=blue!75,fill=blue!15,minimum size=7mm]
\node[place] (a-1) at (0,0) {$0$} ;
\node[place] (a0) [right=of a-1] {$1$} ;
\node[place] (a1) [right=of a0] {$2$} ;
\node (a-2) [left=of a-1] {$\ldots$} ;
\node (a2) [right=of a1] {$\ldots$} ;
\draw[->] (a-2) -- node[midway,above] {$\lambda_0$} (a-1) ;
\draw[->] (a-1) -- node[midway,above] {$\lambda_1$} (a0) ;
\draw[->] (a0) -- node[midway,above] {$\lambda_2$} (a1) ;
\draw[->] (a1) -- node[midway,above] {$\lambda_3$} (a2) ;
\end{tikzpicture}}
\subfigure[State-emitting hidden Markov model.]{
\rule{0ex}{6pc}
\label{sehmm}
\begin{tikzpicture}[node distance=1.44cm,>=stealth',thick,scale=.8]
\tikzstyle{place}=[circle,thick,draw=blue!75,fill=blue!15,minimum size=7mm]
\tikzstyle{transition}=[diamond,thick,draw=black!75,fill=black!10,minimum size=6.0mm]
\node[transition] (l0) at (0,0) {$\mu_1$} ;
\node[transition] (l-1) [left=of l0] {$\mu_0$} ;
\node[transition] (l1) [right=of l0] {$\mu_2$} ;
\node[place] (a0) [node distance=.9cm,above=of l0] {$1$} ;
\node[place] (a-1) [node distance=.9cm,above=of l-1] {$0$} ;
\node[place] (a1) [node distance=.9cm,above=of l1] {$2$} ;
\node (l-2) [left=of l-1] {$\ldots$} ;
\node (a-2) [node distance=1.1cm,above=of l-2] {$\ldots$} ;
\node (l2) [right=of l1] {$\ldots$} ;
\node (a2) [node distance=1.1cm,above=of l2] {$\ldots$} ;
\draw[->] (l-2) -- (l-1) ; 
\draw[->] (l-1) -- (l0) ;
\draw[->] (l0) -- (l1) ;
\draw[->] (l1) -- (l2) ;
\draw[->] (l-1) -- (a-1) ;
\draw[->] (l0) -- (a0) ;
\draw[->] (l1) -- (a1) ;
\end{tikzpicture}}
\caption{A stochastic process and the two kinds of hidden Markov models.}
\end{figure}

Informally speaking, every joint distribution $P(a_{\Z})$ should be a correlation in the sense of Definition~\ref{defcor}, since there are no two non-empty subsets of $\Z$ with disjoint causal past. Realizations of $P(a_\Z)$ in terms of classical hidden variables, as in Definition~\ref{defccorr} and depicted in Figure~\ref{eehmm}, are known as \emph{edge-emitting hidden Markov models}~\cite[Sec.~3.2]{Upper}\footnote{There are other technical differences besides the fact that our definition only applies to finite graphs. The hidden variable spaces $X_t$ in a hidden Markov model are typically assumed to be finite, and the stochastic maps $P(a_t\lambda_{t+t}|\lambda_t)$ are typically assumed to be independent of $t$. We use the term ``hidden Markov model'' in a sense which does not impose either of these requirements.}. Hidden Markov models have a wide range of applications e.g.~in machine learning~\cite{GY} and computational biology~\cite{Eddy}. The essential idea is that the hidden variables $(\lambda_t)_{t\in\Z}$ form a Markov chain in the sense that $\lambda_{t+1}$ may become correlated with $\lambda_{t-1}$ only through $\lambda_t$, while the analogous statement may not be true for the observed values $o[t]$. Since Markov chains such as the hidden distribution $P(\lambda_\Z)$ are much easier to handle mathematically and computationally, it is useful to consider a representation of $P(o[\Z])$, the actual stochastic process of interest, as a classical correlation. It is easy to see that every $P(o[\Z])$ actually admits such a representation by defining each hidden variable space $(X_t,\Sigma_t)$ to be equal to $\prod_{t\in\Z} O_t$, so that one can take all hidden variables $\lambda_t$ to be equal and to coincide with the entire sequence of outputs $(o[t])_{t\in\Z}$; an information processing gate $P(o[t]\lambda_{t+1}|\lambda_t)$ then simply takes an incoming $\lambda_t = o[\Z]$ and deterministically outputs the corresponding component $o[t]$ and forwards the entire sequence $o[\Z]$ as $\lambda_{t+1}$. On the other hand, in general it is \emphalt{not} the case that every $P(o[\Z])$ can be represented as a classical correlation in such a way that all hidden variable spaces $X_t$ are finite~\cite[Sec.~3.6]{Upper}.

So much for edge-emitting hidden Markov models. One more commonly encounters \emph{state-emitting} hidden Markov models~\cite{Ghahr}, of which Figure~\ref{sehmm} shows an illustration. These are defined in a different way: again, there are hidden variable spaces $(Y_t,\Omega_t)$ carrying hidden variables $\mu_t\in Y_t$, but this time they come equipped with a pair of conditional distributions
\[
P(\mu_{t+1}|\mu_t), \qquad P(o[t]|\mu_t),
\]
which can now be thought of as living on the \emphalt{nodes} in Figure~\ref{sehmm}. The first of these describes an updating of the hidden variable from $\mu_t$ to $\mu_{t+1}$, while the second can be thought of as a probing of the hidden variable $\mu_t$ resulting in an outcome $o[t]$. The main point here is that these two updating operations are taken to be independent.

The purpose of this section is to generalize the latter notion of state-emitting hidden Markov model to an arbitrary (finite directed acyclic) graph $G=(V,E)$, which results in the notion of \emph{hidden Bayesian network}. Then we prove that a correlation $P(o[V])$ is classical if and only if it admits a description in terms of a hidden Bayesian network. Figure~\ref{Popscenhbn} displays the structure of a hidden Bayesian network for the example causal structure of Figure~\ref{Popscen}.

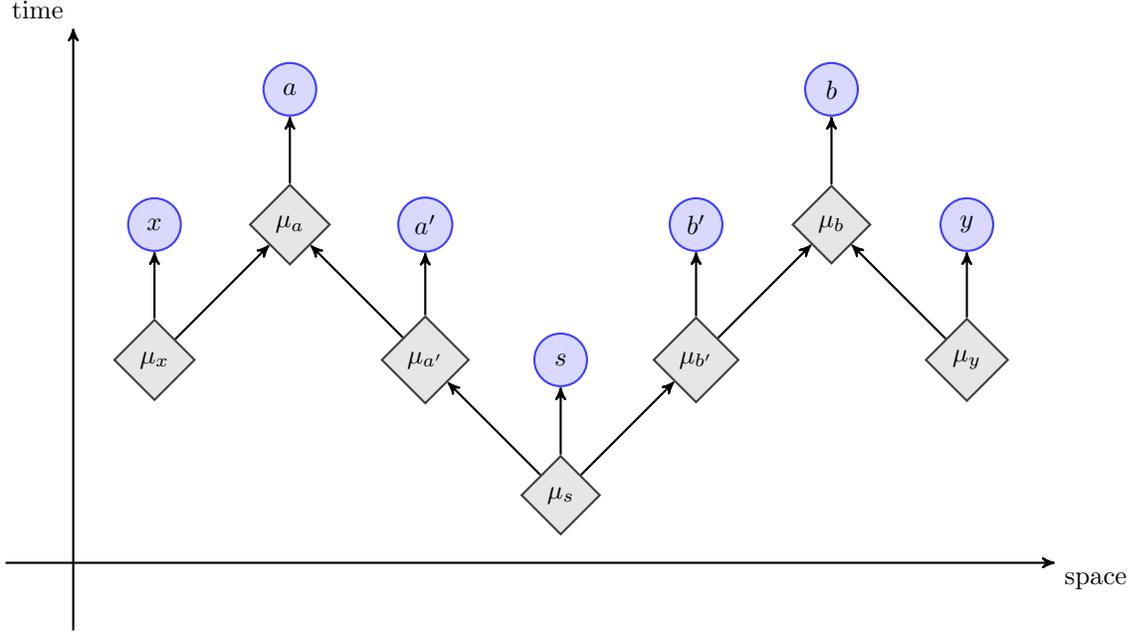
\begin{figure}
\centering
\begin{tikzpicture}[node distance=1.8cm,>=stealth',thick,scale=.9]
\draw[->] (-2.2,-1) -- (-2.2,7.9) node[anchor=south east] {time} ;
\draw[->] (-3.2,0) -- (12.3,0) node[anchor=north west] {space} ;
\tikzstyle{place}=[circle,thick,draw=blue!75,fill=blue!15,minimum size=7mm]
\tikzstyle{transition}=[diamond,thick,draw=black!75,fill=black!10,minimum size=6.0mm]
\node[transition] (a1) at (3,3) {$\mu_{a'}$} ;
\node[transition] (b1) at (7,3) {$\mu_{b'}$} ;
\node[transition] (l) at (5,1) {$\mu_s$} ;
\node[transition] (la) at (-1,3) {$\mu_x$} ;
\node[transition] (lb) at (11,3) {$\mu_y$} ;
\node[transition] (a2) at (1,5) {$\mu_a$} ;
\node[transition] (b2) at (9,5) {$\mu_b$} ;
\node[place, above of = a1] (aa1) at (3,3) {$a'$} ;
\node[place, above of = ab1] (ab1) at (7,3) {$b'$} ;
\node[place, above of = al] (al) at (5,1) {$s$} ;
\node[place, above of = ala] (ala) at (-1,3) {$x$} ;
\node[place, above of = alb] (alb) at (11,3) {$y$} ;
\node[place, above of = aa2] (aa2) at (1,5) {$a$} ;
\node[place, above of = ab2] (ab2) at (9,5) {$b$} ;
\draw[->] (l) -- (a1) ;
\draw[->] (l) -- (b1) ;
\draw[->] (la) -- (a2) ;
\draw[->] (lb) -- (b2) ;
\draw[->] (a1) -- (a2) ;
\draw[->] (b1) -- (b2) ;
\draw[<-] (aa1) -- (a1) ;
\draw[<-] (aa2) -- (a2) ;
\draw[<-] (ab1) -- (b1) ;
\draw[<-] (ab2) -- (b2) ;
\draw[<-] (al) -- (l) ;
\draw[<-] (ala) -- (la) ;
\draw[<-] (alb) -- (lb) ;
\end{tikzpicture}
\caption{Hidden Bayesian network for Popescu's ``hidden nonlocality'' scenario (Figure~\ref{Popscen}).}
\label{Popscenhbn}
\end{figure}

\begin{defn}
A \emph{Bayesian network} on a graph $G$ is an assignment of a measurable space $(Y_v,\Omega_v)$ to every node $v\in V$ together with a conditional probability distribution
\[
P(\mu_v | \mu_{\pa{v}})
\]
which generates the overall distribution
\beq
\label{bn}
P(\mu_V) = \prod_{v\in V} P(\mu_v|\mu_{\pa{v}}).
\eeq
\end{defn}

Bayesian networks have been studied extensively for the case that the measurable spaces $(Y_v,\Omega_v)$ are finite~\cite{Pearl}, and for this case there are several equivalent conditions known for when a joint distribution $P(\mu_V)$ constitutes a Bayesian network.

\begin{defn}
\label{defhbn}
A correlation $P(o[V])$ \emph{arises from a hidden Bayesian network} if there is a Bayesian network with distribution $P(\mu_V)$ on $G$ and stochastic maps $P(o[v]|\mu_v)$ for every $v\in V$ such that
\beq
\label{hbnmodel}
P(o[V]) = \int_{\mu_V} \left(\prod_{v\in V} P(o[v]|\mu_v) \right) \cdot P(\mu_V) .
\eeq
\end{defn}

Intuitively, this formula expresses the idea that each $o[v]$ should be a noisy coarse-graining of the hidden variable $\mu_v$, which are hidden variables whose dependencies are captured by the structure of the graph $G$.

\begin{thm}
\label{hbnthm}
A correlation $P(o[V])$ arises from a hidden Bayesian network if and only if it is classical.
\end{thm}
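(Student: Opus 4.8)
The plan is to prove both implications by giving explicit dictionaries that translate an edge-based hidden variable model (Definition~\ref{defccorr}) into a node-based one (Definition~\ref{defhbn}) and back. In both directions the graph stays the same; only the \emphalt{location} and \emphalt{bundling} of the hidden information changes, and the verification reduces to the same kind of marginalization computation already carried out in the proof of Proposition~\ref{ccorriscorr}.

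For the implication \emphalt{classical $\Rightarrow$ hidden Bayesian network}, I would bundle all the local data produced at a node into a single node variable. Concretely, given a realization of $P(o[V])$ as a classical correlation with edge variables $\lambda_e$ and gates $P(o[v]\lambda_{\out{v}}|\lambda_{\inc{v}})$, set
\[
\mu_v \defin (o[v], \lambda_{\out{v}}) \in O_v \times \prod_{e\in\out{v}} X_e.
\]
Since the incoming edges at $v$ are precisely the outgoing edges of its parents that target $v$, the tuple $\lambda_{\inc{v}}$ is a measurable function of $\mu_{\pa{v}}$, so the given gate can be reread as a Bayesian-network kernel $P(\mu_v|\mu_{\pa{v}}) \defin P(o[v]\lambda_{\out{v}}|\lambda_{\inc{v}})$, while the outcome map $P(o[v]|\mu_v)$ is the deterministic projection onto the first component. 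Then $P(\mu_V) = \prod_{v} P(\mu_v|\mu_{\pa{v}})$ is exactly the total distribution~\eqref{totaldist}, and~\eqref{hbnmodel} collapses to~\eqref{ccorr} once the Dirac projections fix the $o[v]$-components and the remaining $\lambda_{\out{v}}$-components are integrated out.

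For the converse \emphalt{hidden Bayesian network $\Rightarrow$ classical}, the idea is to realize ancestral sampling of the network by sending each node's value to its children along the edges. Put $X_{u\to v}\defin Y_u$, so that the edge $u\to v$ carries a copy of the source variable $\mu_u$; then $\lambda_{\inc{v}} = \mu_{\pa{v}}$ is exactly the data that $v$ needs. Define the gate at $v$ to sample $\mu_v$ from $P(\mu_v|\mu_{\pa{v}})$ using the incoming $\lambda_{\inc{v}}$, to emit $o[v]$ from $P(o[v]|\mu_v)$, and to place a deterministic copy of $\mu_v$ on every outgoing edge, i.e.
\[
P(o[v]\lambda_{\out{v}}|\lambda_{\inc{v}}) \defin \int_{\mu_v} P(o[v]|\mu_v)\, \delta_{\lambda_{\out{v}},(\mu_v,\dotsc,\mu_v)}\, P(\mu_v|\lambda_{\inc{v}}).
\]
Integrating $\prod_v P(o[v]\lambda_{\out{v}}|\lambda_{\inc{v}})$ over $\lambda_E$, the diagonal Dirac factors force each edge to carry the common value of its source, which collapses the edge integrals and leaves $\int_{\mu_V}\big(\prod_v P(o[v]|\mu_v)\big)\prod_v P(\mu_v|\mu_{\pa{v}})$; by~\eqref{bn} this is~\eqref{hbnmodel}.

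I expect the only real difficulty to be measure-theoretic rather than combinatorial. The copy operation is a deterministic stochastic map in $\sop$, but making the Dirac measure on the diagonal of an arbitrary measurable space $Y_v$ rigorous, and justifying the collapse of the edge integrals against these diagonal constraints (a Fubini-type interchange), is exactly the kind of step that requires the categorical and measure-theoretic machinery of Section~\ref{seccats} and Appendix~\ref{appsop}; as in the proof of Proposition~\ref{ccorriscorr}, I would lean on normalization of the kernels to evaluate the iterated integrals. For finite or standard Borel hidden-variable spaces these manipulations are unproblematic, which is where I would check the argument first before invoking the general framework.
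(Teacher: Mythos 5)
Your proposal is correct and is essentially the paper's own proof: the same bundling $\mu_v \defin (o[v],\lambda_{\out{v}})$ with the deterministic projection as outcome map in one direction, and the same assignment $X_e \defin Y_{\src{e}}$ with diagonal-copying gates in the other, verified by the same collapse-of-integrals computations. If anything, your gate formula $\int_{\mu_v} P(o[v]|\mu_v)\,\delta_{\lambda_{\out{v}},(\mu_v,\dotsc,\mu_v)}\,P(\mu_v|\lambda_{\inc{v}})$ renders the intended construction more accurately than the paper's equation~\eqref{defgate}, which---apparently a typo---conditions $o[v]$ on $\mu_{\pa{v}}$ instead of on $\mu_v$, and taken literally would make $o[v]$ conditionally independent of $\mu_v$ and thus fail to reproduce correlations between $o[v]$ and the outcomes downstream of $v$.
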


The proof will actually show that any joint distribution represented in the form~\eqref{hbnmodel} is a classical correlation; in other words, assuming that $P(o[V])$ is a correlation in Definition~\ref{defhbn} is redundant.

\begin{proof}
We start with the ``if'' part, i.e.~we assume a representation as a classical correlation in the form~\eqref{ccorr}. We define the new hidden variable spaces $(Y_v,\Omega_v)$ to be given by, in terms of the old hidden variable spaces $(X_e,\Sigma_e)$,
\beq
\label{Yv}
Y_v \defin O_v \times \prod_{e\in\ch{v}} X_e,
\eeq
where it is understood that the $\sigma$-algebra $\Omega_v$ is given by the product of the corresponding $\Sigma_e$ with $\mathbf{2}^{O_v}$. The basic idea here is that the new hidden variable $\mu_v\in Y_v$ will contain all the data which, in the representation as a classical correlation, corresponds to the output of the information processing gate $P(o[v]\lambda_{\out{v}}|\lambda_{\inc{v}})$. The stochastic map $P(o[v]|\mu_v)$ which assigns to each new hidden variable a corresponding outcome can then simply be taken to be the projection from~\eqref{Yv} onto $O_v$, and as such it is deterministic.

This identification $\mu_v = (o[v],\lambda_{\out{v}})$ of the new hidden variable $\mu_v$ with the outcome $o[v]$ together with the tuple of old hidden variables $\lambda_{\out{v}}$, which together form the output of the old information processing gate $P(o[v]\lambda_{\out{v}}|\lambda_{\inc{v}})$, suggests that we should also choose the conditional distributions in~\eqref{bn} to coincide with that information processing gate,
\[
P(\mu_v | \mu_{\pa{v}}) \defin P(o[v] \lambda_{\out{v}} | \lambda_{\inc{v}}).
\]
This is a sensible definition if we consider $\lambda_{\inc{v}}$ as a function of $\mu_{\pa{v}} = (o[\pa{v}],\lambda_{\out{\pa{v}}})$, so that the left-hand side becomes a well-defined probability measure on~\eqref{Yv} once a concrete tuple of values $\mu_{\pa{v}}$ has been specified.

With this definition of the hidden Bayesian network, we need to show that~\eqref{hbnmodel} indeed reproduces the original correlations. We have
\begin{align*}
P_{\mathrm{new}}(o[V]) &= \int_{\mu_V} \left(\prod_{v\in V} P(o[v]|\mu_v) \right) \cdot P(\mu_V) \stackrel{\eqref{bn}}{=} \int_{\mu_V} \prod_{v\in V} P(o[v]|\mu_v) \, P(\mu_v|\mu_{\pa{v}})  \\
 & \stackrel{(*)}{=} \int_{\lambda_E} \prod_{v\in V} P(o[v] \lambda_{\out{v}} | \lambda_{\inc{v}} ) = P_{\mathrm{old}} (o[V]),
\end{align*}
where $(*)$ has used the fact that $P(o[v]|\mu_v)$ is simply given by projection onto the first factor, so that when decomposing $\mu_v$ into its components, only those tuples with the correct $o[v]$ component survive. Since every edge occurs in $\out{v}$ for exactly one $v$, the resulting integral over $\mu_V$ turns into an integral over $\lambda_E$.

For the ``only if'' part, we start with a representation of the form~\eqref{hbnmodel}. In this case, the new hidden variables are the $\lambda_e$, while the old ones are the $\mu_v$. We define the new hidden variable spaces to be given by
\beq
\label{Xe}
X_e \defin Y_{\src{e}}.
\eeq
In other words, all outgoing edges $e\in\out{v}$ at a given node $v$ carry the same hidden variable space. Moreover, we construct the information processing gate at $v$ so that all the outgoing information is the same on all these edges:
\beq
\label{defgate}
P(o[v]\lambda_{\out{v}}|\lambda_{\inc{v}}) \defin P(o[v] | \mu_{\pa{v}}) \, P(\mu_v|\mu_{\pa{v}}),
\eeq
which is to be understood as follows: the known incoming values $\lambda_{\inc{v}}$ can be identified by~\eqref{Xe} with values for the $\mu_{\pa{v}}$. Then when keeping these values fixed, the right-hand side of~\eqref{defgate} defines a probability measure on $O_v\times Y_v$ (actually a product measure). We can push this forward along the map $Y_v\to X_{\out{v}}$, which by~\eqref{Xe} we can take to be the inclusion along the diagonal. In other words, the resulting measure is supported on those tuples $\lambda_{\out{v}}$ for which all components are the same. This ends the definition of the hidden variables and information processing gates.

We now verify that the resulting correlation is indeed the expected one,
\[
P_{\mathrm{new}}(o[V]) =  \int_{\lambda_E} \prod_{v\in V} P(o[v] \lambda_{\out{v}} | \lambda_{\inc{v}}) = \int_{\lambda_E} \prod_{v\in V} P(o[v]|\mu_{\pa{v}}) \, P(\mu_v|\mu_{\pa{v}}),
\]
where we simply plugged in~\eqref{defgate}, and it is understood that the intended interpretation of~\eqref{defgate} needs to be applied. We can evaluate this further by noting that, at every node $v$, the integral over $\lambda_{\out{v}}$ is supported on the diagonal, and hence it is sufficient to do an integral over $\mu_v$. This proves that the previous expression is equal to
\[
\int_{\mu_V} \prod_{v\in V} P(o[v]|\mu_{\pa{v}}) \, P(\mu_v|\mu_{\pa{v}}) = P_{\mathrm{old}}(o[V]),
\]
as was to be shown.
\end{proof}

It should be interesting to figure out whether a similar statement can be derived for the quantum case, by using e.g.~the definition of ``quantum Bayesian network'' of Leifer and Spekkens~\cite{LS}.

\appendix

\newpage
\section{Measure-theoretic technicalities}
\label{appsop}

\subsection*{Measurable spaces and measures.} 

Here, we discuss the proper mathematics needed to describe classical hidden variables. A classical hidden variable is modelled by a set of possible values equipped with a probability distribution. Since we would like to allow this set of values to be of any size, finite or infinite, we first need to recall the standard mathematical structures used for dealing with probability distributions with an arbitrary number of outcomes. This is standard material can be found in any textbook on measure theory, e.g.~\cite{HalmMeas}. The notions that we recall in the following can be thought of as classical analogues of concepts which are well-known to the quantum information and foundations community in the quantum-mechanical context. 

\begin{defn}
\label{defmessp}
A \emph{measurable space} is a pair $(X,\Sigma)$ consisting of a set $X$ and a collection of subsets $\Sigma\subseteq 2^X$ such that:
\begin{itemize}
\item $\emptyset \in \Sigma$,
\item If $U\in\Sigma$, then also $X{\setminus} U\in\Sigma$,
\item If $U_1,U_2,\ldots \in \Sigma$, then also $\bigcup_i U_i \in\Sigma$.
\end{itemize}
\end{defn}

If $U\in\Sigma$, then $U$ is called a \emph{measurable set}. The measurable sets $U$ are those to which one expects to be able to assign a number measuring its ``size'', like the probability that a randomly chosen element of $X$ will lie in $U$. A collection of sets $\Sigma$ satisfying these axioms is called a \emph{$\sigma$-algebra}.

We always assume that $\Sigma$ \emphalt{separates points} in $X$, which means that for any $x,y\in X$ there is a measurable set $A\in\Sigma$ such that $x\in A$, but $y\not\in A$. If $X$ is finite, this is equivalent to assuming that $\Sigma=2^X$, meaning that all sets are actually measurable. The necessity to distinguish measurable from non-measurable sets arises only for infinite $X$.

The r\^ole of measurable spaces in classical probability theory is more or less analogous to the one of Hilbert spaces in quantum theory. The analogues of states are the measures:

\begin{defn}
A \emph{measure} on $(X,\Sigma)$ is a function $P:\Sigma\to\Rplus$ with 
\begin{itemize}
\item $P(\emptyset)=0$,
\item $P\left(\bigcup_i U_i\right) = \sum_i P(U_i)$ for every sequence of pairwise disjoint measurable sets $U_1,U_2,\ldots\in\Sigma$.
\end{itemize}
\end{defn}

The \emph{normalization} of a measure is the number $P(X)$. If $P(X)=1$, then $P$ is a \emph{probability measure}; this models a situation in which it is certain that some outcome $x\in X$ occurs, and then we say that $(X,\Sigma,P)$ is a \emph{probability space}. Since we do not always want to make this assumption, we also allow that the total probability may be different from one, in which case it will typically be less than one. We nevertheless use the letter ``$P$'' to denote a measure since all our measures do in some way or the other have an interpretation as probabilities.

The r\^ole of an operator on a Hilbert space is played by the concept of measurable function:

\begin{defn}
A \emph{measurable function} on $(X,\Sigma)$ is a function $f:X\to \R$ such that for all $a,b\in\R$, the set
\[
f^{-1}([a,b]):=\{ x\in X \:|\: a \leq f(x) \leq b \}
\]
is measurable, i.e.~in $\Sigma$.
\end{defn}

Measurable functions that are \emph{bounded}, i.e.~take values in a bounded subset of $\R$, can be integrated with respect to any measure $P$. This means that we can form the integral
\[
\int_x f(x) P(x),
\]
and this corresponds to the \emph{expectation value} of $f$. Note that we have now written the integration variable formally as an argument of $P$. Throughout this paper, this should be understood to be symbolic notation for integral of a function $f$ with respect to a measure $P$. It is symbolic since the measure $P(\{x\})$ of any one-element set $\{x\}$ is typically zero, and hence it is not meaningful in any sense to regard the measure $P$ as a function of the variable $x$. For the precise definition of this integral we refer to the literature on measure theory~\cite{HalmMeas}.

\subsection*{Stochastic operations.} In the main text, we also crucially need to deal with operations taking a collection of hidden variables as input and producing another collection of hidden variables as output. In order to define what exactly this is supposed to mean, we need to define \emph{stochastic operations} between measurable spaces, which are the classical analogue of quantum operations. To this end, We are going to reproduce a variant of a definition of Panangaden~\cite{PP}; the difference is that we allow measures with arbitrary normalization instead of only subnormalized ones, and we also speak of stochastic operations instead of ``stochastic relations'' in order to keep the terminology more in line with the standard jargon for the quantum case.

\begin{defn}
Let $(X,\Sigma_X)$ and $(Y,\Sigma_Y)$ be measurable spaces. A \emph{stochastic operation} from $X$ to $Y$ is a function
\[
P(y|x) \: : \: X \times \Sigma_Y \to [0,1]
\]
satisfying the following properties:
\begin{itemize}
\item For every fixed $x\in X$, $P(y|x)$ is a measure on $Y$.
\item For every fixed $U\in\Sigma_Y$, the function $P(U|x)$ is a bounded measurable function of $x$.
\end{itemize}
\end{defn}

The idea behind the notation ``$P(y|x)$'' is that for every $x\in X$, one obtains a measure on $Y$ which, by abuse of notation, we also denote $P(y|x)$. If both $X$ and $Y$ are finite, $P(y|x)$ is simply a matrix of nonnegative numbers indexed by $x\in X$ and $y\in Y$. In any case, a stochastic operation, with our definition, has the property that the normalization of the measure $P(y|x)$ may depend on the value of $x$.

There is an important special case of a stochastic operation: if $I$ is a one-element set, then the stochastic operations from $I$ to $Y$ are precisely the measures on $Y$. This is similar to how quantum states can themselves be regarded as maps from a trivial system to a non-trivial system~\cite{CP}.

A stochastic operation from $X$ to $Y$ can be composed with a stochastic operation from $Y$ to $Z$ to give a stochastic operation from $X$ to $Z$:
\beq
\label{stochcomp}
P(z | x) \defin \int_y P(z | y) P(y | x).
\eeq
In words, this means that for every fixed $z$, we integrate the function $y\mapsto P(z | y)$ with respect to the measure $P(y|x)$. We refer to~\cite{PP} for the (non-trivial!) proof showing that this definition of composition of stochastic operations does indeed define a category, and we denote this category by $\sop$.

There is an interesting subcategory $\sp\subseteq\sop$ consisting of those stochastic operations $P(y|x)$ which are \emph{normalized}, by which we mean that they satisfy
\[
\int_y P(y|x) = 1
\]
for all $x\in X$. For such a normalized stochastic operation, we also use the term \emph{conditional distribution}.

If one only considers finite measurable spaces and normalized stochastic operations between those, one obtains a category which coincides with the category $\FinStoch$ discussed e.g.~in~\cite{BF}.

\subsection*{Products of measure spaces and stochastic operations.}

For given measurable spaces $(X,\Sigma_X)$ and $(Y,\Sigma_Y)$, there is a way to form their \emph{product space} $(X\times Y,\Sigma_X\otimes\Sigma_Y)$, where $X\times Y$ is the cartesian product. The $\sigma$-algebra $\Sigma_X\otimes\Sigma_Y$ is defined to contain all sets of the form $U\times V$ for $U\in\Sigma_X$ and $V\in\Sigma_Y$; but in order to obtain a measurable space, we need to close this collection of basic measurable sets by the operations of Definition~\ref{defmessp}. See~\cite[Sec.~33]{HalmMeas} for more detail.

For stochastic operations $P(y_1|x_1)$ and $P(y_2|x_2)$, there also ought to exist a ``product'' operation corresponding to the idea of executing both $P(y_1|x_1)$ and $P(y_2|x_2)$ in parallel, analogous to the description of sequential execution described by~\eqref{stochcomp}. This product operation
\[
P(y_1|x_1) \, P(y_2|x_2) \: : \: (X_1\times X_2,\Sigma_{X_1}\otimes\Sigma_{X_2}) \longrightarrow (Y_1\times Y_2,\Sigma_{Y_1}\otimes\Sigma_{Y_2})
\]
is defined to map a pair of values $(x_1,x_2)$ to the unique measure on $Y_1\times Y_2$ having the property that it assigns to the set $U_1\times U_2\in \Sigma_{Y_1}\otimes\Sigma_{Y_2}$ the number $P(U_1|x_1)\cdot P(U_2|x_2)$; we refer to~\cite[Thm.~35.B]{HalmMeas} for the proof of uniqueness. We leave it to the reader to verify that this does indeed give a stochastic operation and end by noting that this product turns $\sop$ into a \emph{symmetric monoidal category}. In the particular case $X_1=X_2=I$, the product of stochastic operations specializes to a definition of product of measures.

\newpage
\section{Finite approximations of probability spaces}
\label{secfinapp}

Here we prove some technical results which are of crucial importance for the proof of Theorem~\ref{finitethm}. We refer to~\cite{HalmMeas} for the additional terminology and background from measure theory. For sets $A$ and $B$, we write
\[
A \bigtriangleup B \defin (A\setminus B) \cup (B\setminus A)
\]
for the symmetric difference.

\begin{lem}
\label{algebraapprox}
Let $(X,\Sigma,P)$ be a probability space, $\Gamma\subseteq\Sigma$ an algebra of sets and $f:X\to S$ a measurable function to some finite set $S$. Then for every $\delta > 0$, there exist sets $B_s\in\Gamma$ indexed by $s\in S$ such that $P\left(B_s \bigtriangleup f^{-1}(s)\right)<\delta$, and the $B_s$ partition $X$.
\end{lem}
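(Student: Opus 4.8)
The plan is to reduce the statement to the classical measure-theoretic approximation theorem and then repair the approximating sets into a genuine partition. Throughout I assume, as the intended setting requires, that $\Gamma$ generates $\Sigma$, i.e.\ $\sigma(\Gamma) = \Sigma$; without this the conclusion fails (take $\Gamma = \{\emptyset, X\}$). Writing $A_s \defin f^{-1}(s)$ for $s \in S$, measurability of $f$ gives $A_s \in \Sigma$, and since $f$ is a function into $S$ the family $(A_s)_{s\in S}$ is an \emph{exact} partition of $X$. The difficulty is only that the sets furnished by the approximation theorem will lie in $\Gamma$ but need not be disjoint or cover $X$.

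First I would invoke the approximation theorem. The collection $\mathcal{A} \defin \{A \in \Sigma : \forall \eta > 0\ \exists B \in \Gamma,\ P(A \bigtriangleup B) < \eta\}$ is a $\sigma$-algebra: it contains $\Gamma$ trivially, it is closed under complementation because $A \bigtriangleup B = (X{\setminus}A) \bigtriangleup (X{\setminus}B)$ and $\Gamma$ is closed under complements, and it is closed under countable unions by approximating a union first by a finite subunion (using continuity of $P$ from below) and then approximating each member from $\Gamma$. Hence $\mathcal{A} \supseteq \sigma(\Gamma) = \Sigma$. Applying this to each of the finitely many sets $A_s$ with a tolerance $\eta$ to be fixed later yields sets $C_s \in \Gamma$ with $P(A_s \bigtriangleup C_s) < \eta$ for every $s$.

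The key step is to turn the $C_s$ into a partition without spoiling the estimate. Let $N \defin \bigcup_{s\in S}(A_s \bigtriangleup C_s)$, so that $P(N) \le \sum_s P(A_s \bigtriangleup C_s) < |S|\,\eta$. For $x \notin N$ one has $x \in A_s \iff x \in C_s$ for every $s$, so off $N$ the $C_s$ already form a partition agreeing with the $A_s$; in particular all overlaps $C_s \cap C_t$ (for $s\neq t$) and the uncovered region $X{\setminus}\bigcup_s C_s$ are contained in $N$. Now fix an enumeration $s_1,\dots,s_k$ of $S$ and disjointify: set $B_{s_1}\defin C_{s_1}$, $B_{s_j} \defin C_{s_j}{\setminus}(C_{s_1}\cup\cdots\cup C_{s_{j-1}})$ for $1 < j < k$, and dump the leftover into the last block, $B_{s_k} \defin X{\setminus}(B_{s_1}\cup\cdots\cup B_{s_{k-1}})$. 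Since $\Gamma$ is an algebra, every $B_s$ lies in $\Gamma$, and by construction the $B_s$ partition $X$.

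Finally I would check the bound. Off $N$ there are no overlaps and no uncovered points, so the disjointification changes nothing there and $B_s \cap (X{\setminus}N) = C_s \cap (X{\setminus}N) = A_s \cap (X{\setminus}N)$. Hence $B_s \bigtriangleup A_s \subseteq N$ for every $s$, giving $P(B_s \bigtriangleup A_s) \le P(N) < |S|\,\eta$, and choosing $\eta \defin \delta/|S|$ completes the argument. The main obstacle is this last repair stage: the approximation theorem is entirely standard, but one must organise the disjointification so that the \emph{single} set $N$ simultaneously controls all the symmetric differences $B_s \bigtriangleup A_s$; the observation that off $N$ the $C_s$ already behave like a partition is precisely what makes the uniform bound possible.
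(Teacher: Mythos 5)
Your proof is correct, and it takes a genuinely different route to the partition than the paper does. The paper proceeds by induction on $|S|$: it peels off one element $u\in S$, approximates $f^{-1}(u)$ by a single $B_u\in\Gamma$ within $\delta/2$ via Halmos's approximation lemma, restricts everything to $X{\setminus}B_u$ (redefining $f$ there on a set of measure $<\delta/2$), and invokes the induction hypothesis with tolerance $\delta/2$; the partition property is automatic because each step works inside the complement of the previously chosen sets. You instead approximate all preimages $A_s=f^{-1}(s)$ simultaneously, collect every error into the single exceptional set $N$ with $P(N)<|S|\eta$, observe that off $N$ the approximants $C_s$ already behave as a partition, and then disjointify within the algebra $\Gamma$ --- the crucial point, which you identify correctly, being that the disjointification only alters things inside $N$, so every symmetric difference $B_s\bigtriangleup A_s$ is bounded by the one number $P(N)$. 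Your version buys a non-recursive argument with transparent bookkeeping ($\eta=\delta/|S|$), and it sidesteps a small wrinkle in the paper's induction: after restricting to $X{\setminus}B_u$ the measure is no longer a probability measure, so the induction hypothesis strictly needs a subprobability (or renormalized) variant of the lemma, which the paper glosses over. You also make explicit the hypothesis $\sigma(\Gamma)=\Sigma$, which the paper's statement omits but its proof silently requires (Halmos 13.D only approximates sets in the $\sigma$-ring \emph{generated by} $\Gamma$); your counterexample $\Gamma=\{\emptyset,X\}$ shows the statement is false without it, and in the paper's application $\Gamma$ is indeed a generating algebra. The only cosmetic blemish is that your disjointification rules for $B_{s_1}$ and $B_{s_k}$ collide when $|S|=1$, but that case is trivial ($B_{s_1}=X$).
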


The main point here is that $B_s\in\Gamma$: since $f$ is measurable, we already know that $f^{-1}(s)\in\Sigma$, and the problem is to approximate these $f^{-1}(s)$ by elements of $\Gamma$.

\begin{proof}
We use induction on $|S|$. If $|S|=1$, then there is nothing to prove, since we can simply take $B_s\defin X$. For general $S$, we choose any element $u\in S$. We then apply the approximation lemma of measure theory~\cite[13.D]{HalmMeas} to obtain a set $B_u\in\Gamma$ with $P\left( B_u\bigtriangleup f^{-1}(u)\right)<\delta/2$. 

The idea now is to restrict everything to $X{\setminus} B_u$ and apply the induction hypothesis to that. To do so, we define the function $f':X{\setminus} B_u\to S{\setminus}\{u\}$ by taking $f'(x) \defin f(x)$ for all $x$ with $f(x)\neq u$; we let $f'$ map the other $x$'s with $f(x)=u$ to an arbitrary element of $S{\setminus}\{u\}$. Then $f'$ differs from the restriction of $f$ to $X{\setminus}B_u$ on a set of measure less than $\delta/2$.

Applying the induction hypothesis to this situation, with $\delta$ replaced by $\delta/2$, gives a partition $X{\setminus} B_u = \bigcup_{s\in S{\setminus}\{u\}} B_s$ satisfying $P\left( B_s \bigtriangleup f'^{-1}(s)\right) < \delta/2$ for all $s\neq u$. Since we already know that $P\left(f'^{-1}(s) \bigtriangleup f^{-1}(s)\right) < \delta/2$ for all $s$, we conclude that
\begin{align*}
P\left( B_s \bigtriangleup f^{-1}(s)\right) & \leq P(B_s \bigtriangleup f'^{-1}(s)) + P(f'^{-1}(s) \bigtriangleup f^{-1}(s)) \\
 & < \delta/2 + \delta/2 = \delta,
\end{align*}
as desired.
\end{proof}

\begin{prop}
\label{finiteapprox}
Let $(X_k,\Sigma_k)_{k=1}^n$ be a family of measurable spaces equipped with a probability measure $P$ on the product $\prod_k X_k$ and $S$ a finite set. Then for every $\eps > 0$ and any measurable function
\[
f \: : \: X_1\times\ldots\times X_n \lra S
\]
there exist a finite set $S_k$ together with a measurable function $f_k : X_k\to S_k$ for each $k$, and a function
\[
g \: :\: S_1\times\ldots\times S_n \lra S
\]
such that the diagram
\[
\xymatrix{ X_1\times\ldots\times X_n \ar[rr]^f \ar[rd]_(.4){f_1\times\ldots\times f_n} && S \\
& S_1\times\ldots\times S_n \ar[ur]_(.6)g }
\]
commutes up to $\eps$, by which we mean that $f(x) = \left(g\circ (f_1\times\ldots\times f_n)\right)(x)$ holds with probability greater than $1-\eps$ with respect to $P$.
\end{prop}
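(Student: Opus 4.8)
The plan is to approximate each fibre $f^{-1}(s)$ of $f$ by a finite union of measurable rectangles, and then to read off the finite quotients $S_k$ from the finitely many ``sides'' of these rectangles. The genuine content is already packaged into Lemma~\ref{algebraapprox}; everything else is bookkeeping.

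First I would take $\Gamma$ to be the algebra consisting of all finite disjoint unions of measurable rectangles $A_1\times\ldots\times A_n$ with $A_k\in\Sigma_k$. By the definition of the product $\sigma$-algebra, $\Gamma$ generates $\Sigma_1\otimes\ldots\otimes\Sigma_n$, and $f$ is measurable with respect to the latter. Applying Lemma~\ref{algebraapprox} to this $\Gamma$ with $\delta\defin\eps/|S|$ then yields a partition $X_1\times\ldots\times X_n=\bigcup_{s\in S} B_s$ with $B_s\in\Gamma$ and $P\!\left(B_s\bigtriangleup f^{-1}(s)\right)<\delta$ for every $s\in S$.

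Each $B_s$ is a finite union of measurable rectangles, so across all $s\in S$ only finitely many rectangles occur, and hence for each coordinate $k$ only finitely many sides $A\in\Sigma_k$ appear. I would let $S_k$ be the (finite) set of nonempty atoms of the finite subalgebra of $\Sigma_k$ generated by these sides; these atoms are measurable and partition $X_k$, and I take $f_k:X_k\to S_k$ to send each point to the atom containing it, which is measurable. By construction every side, hence every rectangle, hence every $B_s$, is a union of product cells $\prod_k\sigma_k$ with $\sigma_k\in S_k$. Since the product cells partition $X_1\times\ldots\times X_n$ and the $B_s$ do too, each product cell $\prod_k\sigma_k$ (which is nonempty, being a product of nonempty atoms) lies in exactly one $B_s$. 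This defines $g:S_1\times\ldots\times S_n\to S$ by $g(\sigma_1,\ldots,\sigma_n)\defin s$ whenever $\prod_k\sigma_k\subseteq B_s$; the well-definedness of $g$ is the one point requiring a little care, and it is exactly the statement that each product cell is swallowed whole by a single $B_s$, which holds because both families partition the product space.

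Finally I would check the error bound. By the definitions of $g$ and the $f_k$, the composite $g\circ(f_1\times\ldots\times f_n)$ takes the value $s$ precisely on $B_s$. Hence it disagrees with $f$ only on the set $\bigcup_{s\in S}\left(B_s\setminus f^{-1}(s)\right)\subseteq\bigcup_{s\in S}\left(B_s\bigtriangleup f^{-1}(s)\right)$, whose measure is at most $\sum_{s\in S}P\!\left(B_s\bigtriangleup f^{-1}(s)\right)<|S|\,\delta=\eps$. Thus $f(x)=\left(g\circ(f_1\times\ldots\times f_n)\right)(x)$ holds with probability greater than $1-\eps$, i.e.~the diagram commutes up to $\eps$, as required. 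The only nonroutine ingredients are Lemma~\ref{algebraapprox} and the elementary observation that finitely many rectangles involve only finitely many sides per coordinate; once the finite quotients $S_k$ are in hand, the construction of $g$ and the estimate are immediate.
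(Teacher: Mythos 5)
Your proof is correct and follows essentially the same route as the paper's: both apply Lemma~\ref{algebraapprox} to the algebra of finite disjoint unions of measurable rectangles, take $S_k$ to be the atoms of the finite Boolean subalgebra of $\Sigma_k$ generated by the finitely many sides occurring in the $B_s$, define $g$ by the unique $B_s$ containing each product cell of atoms, and close with the same union bound $|S|\cdot\delta$ with $\delta\leq\eps/|S|$. Your explicit verification that $g$ is well defined (each nonempty product cell being swallowed whole by a single $B_s$, since every $B_s$ is a union of such cells and both families partition the space) is slightly more careful than the paper's corresponding remark, but it is the same argument.
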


The main point here is that the $S_k$ are finite: if we think of $f$ as a function which coarse-grains $X_1\times\ldots\times X_n$ down to a finite set $S$, then the claim is that this coarse-graining can be approximated arbitrarily well by first coarse-graining each $X_k$ down to a finite set $S_k$, and then coarse-graining $S_1\times\ldots\times S_n$ down to $S$.

\begin{proof}
The finite disjoint unions of products of measurable sets form an algebra of sets in $X_1\times\ldots\times X_n$~\cite[33.E]{HalmMeas} which generates the product $\sigma$-algebra $\Sigma_1\otimes\ldots\otimes\Sigma_n$. Hence Lemma~\ref{algebraapprox} gives a partition $X_1\times\ldots\times X_n = \bigcup_{s\in S} B_s$ such that $P\left( B_s \bigtriangleup f^{-1}(s) \right) < \delta$. We will determine the required value of $\delta$ later. By construction, we know that each $B_s$ is a finite union of sets of the form
\beq
\label{productA}
C_1^i\times\ldots\times C_n^i,
\eeq
where $C_k^i\in\Sigma_k$ and $i$ indexes all the required sets for all $s$ together. For fixed $k$, the finitely many sets $C_k^i$ generate a finite Boolean algebra $\Omega_k\subseteq\Sigma_k$ whose atoms we denote by $A_k^i$. Again by construction, the $B_s$'s are elements of $\Omega_1\otimes\ldots\otimes\Omega_n$, and hence they can also be written as finite unions of products of atoms,
\[
A_1^i\times\ldots\times A_n^i.
\]
We then define $S_k$ to be the set of all the atoms $A_k^i$, i.e.~the spectrum of the Boolean algebra $\Omega_k$. This $S_k$ is guaranteed to be finite since $\Omega_k$ is finite. There is a canonical map $f_k:X_k\to S_k$ which sends an $x\in X_k$ to the unique atom which contains it.

We need to define the map $g:S_1\times\ldots\times S_n\to S$. We do this by stipulating that a tuple of atoms $(A_1,\ldots,A_n)\in S_1\times\ldots\times S_n$ maps to the unique $s\in S$ for which $A_1\times\ldots\times A_n\subseteq B_s$; since the $B_s$'s partition $X$ and each of them is a finite union of products of atoms, that $s$ is indeed unique. 

Finally, we have to verify that this satisfies the required ``commutativity up to $\eps$'' for a suitable choice of $\delta$. This is easy to do once we realize that the composition $g\circ (f_1\times\ldots\times f_n)$ is precisely the map which sends all of $B_s$ to $s$. Hence the measure of the set of all tuples $(x_1,\ldots,x_n)$ on which $f$ differs from $g\circ(f_1\times\ldots\times f_n)$ is given by
\[
P\left( \bigcup_{s\in S} (B_s \bigtriangleup f^{-1}(s) )\right) \leq \sum_{s\in S} P\left( B_s\bigtriangleup f^{-1}(s) \right) < |S|\cdot\delta.
\]
So if we do the whole construction with $\delta \leq \eps/|S|$, the claim follows.
\end{proof}

\newpage

\bibliographystyle{unsrt}
\bibliography{bayesian_networks}

\begin{thebibliography}{10}

\bibitem{Bell}
John~S. Bell.
\newblock {On the Einstein-Podolsky-Rosen paradox}.
\newblock {\em Physics}, 1:195--200, 1964.

\bibitem{Bellreview}
Nicolas Brunner, Daniel Cavalcanti, Stefano Pironio, Valerio Scarani, and
  Stephanie Wehner.
\newblock Bell nonlocality, 2014.

\bibitem{Ekert}
Artur~K. Ekert.
\newblock Quantum cryptography based on {B}ell's theorem.
\newblock {\em Phys. Rev. Lett.}, 67:661--663, Aug 1991.

\bibitem{VVqkd}
Umesh Vazirani and Thomas Vidick.
\newblock Fully device independent quantum key distribution.
\newblock {\em Phys. Rev. Lett.}, 113:140501, 2014.

\bibitem{PAal}
S.~Pironio, A.~Ac{\'i}n, S.~Massar, A.~Boyer de~la Giroday, D.~N. Matsukevich,
  P.~Maunz, S.~Olmschenk, D.~Hayes, L.~Luo, T.~A. Manning, and C.~Monroe.
\newblock Random numbers certified by {B}ell’s theorem.
\newblock {\em Nature}, 464:1021, 2010.

\bibitem{VVrand}
Umesh Vazirani and Thomas Vidick.
\newblock Certifiable quantum dice: or, true random number generation secure
  against quantum adversaries.
\newblock In {\em Proceedings of the Forty-fourth Annual {ACM} Symposium on
  Theory of Computing}, STOC '12, pages 61--76, 2012.

\bibitem{Svet}
George Svetlichny.
\newblock Distinguishing three-body from two-body nonseparability by a
  {B}ell-type inequality.
\newblock {\em Phys.~Rev.~D}, 35:3066, 1987.

\bibitem{Pop}
Sandu Popescu.
\newblock Bell's inequalities and density matrices: Revealing ``hidden''
  nonlocality.
\newblock {\em Phys. Rev. Lett.}, 74:2619--2622, Apr 1995.

\bibitem{BGP}
Cyril Branciard, Nicolas Gisin, and Stefano Pironio.
\newblock Characterizing the nonlocal correlations created via entanglement
  swapping.
\newblock {\em Phys. Rev. Lett.}, 104:170401, 2010.

\bibitem{Fri}
Tobias Fritz.
\newblock Beyond {B}ell's theorem {I}: correlation scenarios.
\newblock {\em New J. Phys.}, 14:103001, 2012.

\bibitem{GWCAN}
Rodrigo Gallego, Lars~Erik W{\"u}rflinger, Rafael Chaves, Antonio Ac{\'i}n, and
  Miguel Navascu{\'e}s.
\newblock Nonlocality in sequential correlation scenarios.
\newblock {\em New J. Phys.}, 16:033037, 2014.

\bibitem{HardyOT}
Lucien Hardy.
\newblock The operator tensor formulation of quantum theory.
\newblock {\em Phil. Trans. R. Soc. A}, 28:3385--3417, 2012.

\bibitem{Pearl}
Judea Pearl.
\newblock {\em Causality}.
\newblock Cambridge University Press, Cambridge, second edition, 2009.
\newblock Models, reasoning, and inference.

\bibitem{EPR}
Albert Einstein, Boris Podolsky, and Nathan Rosen.
\newblock Can quantum-mechanical description of physical reality be considered
  complete?
\newblock {\em Phys. Rev.}, 47:777--780, 1935.

\bibitem{WS}
Christopher~J. Wood and Robert~W. Spekkens.
\newblock The lesson of causal discovery algorithms for quantum correlations:
  Causal explanations of {B}ell-inequality violations require fine-tuning.
\newblock {\em New. J. Phys.}, 17:033002, 2015.

\bibitem{HLP}
Joe Henson, Raymond Lal, and Matthew~F. Pusey.
\newblock Theory-independent limits on correlations from generalised bayesian
  networks.
\newblock {\em New J. Phys.}, 16:113043, 2014.

\bibitem{CLG}
Rafael Chaves, Lukas Luft, and David Gross.
\newblock Causal structures from entropic information: geometry and novel
  scenarios.
\newblock {\em New. J. Phys.}, 16:043001, 2014.

\bibitem{CLMGJS}
Rafael Chaves, Lukas Luft, Thiago~O. Maciel, David Gross, Dominik Janzing, and
  Bernhard Sch\"olkopf.
\newblock Inferring latent structures via information inequalities.
\newblock In {\em Proceedings of the 30th Conference on Uncertainty in
  Artificial Intelligence}, pages 112--121. AUAI Press, 2014.

\bibitem{CMG}
Rafael Chaves, Christian Majenz, and David Gross.
\newblock Information-theoretic implications of quantum causal structures.
\newblock {\em Nat. Comm.}, 6:5766, 2015.

\bibitem{CKBG}
Rafael Chaves, Richard Kueng, Jonatan Bohr~Brask, and David Gross.
\newblock A unifying framework for relaxations of the causal assumptions in
  bell's theorem.
\newblock {\em Phys. Rev. Lett.}, 114:140403, 2015.

\bibitem{cc}
Frank Arntzenius.
\newblock Reichenbach's common cause principle.
\newblock In Edward~N. Zalta, editor, {\em The Stanford Encyclopedia of
  Philosophy}. Stanford University, 2010.
\newblock
  \href{http://plato.stanford.edu/archives/fall2010/entries/physics-Rpcc/}{plato.stanford.edu/archives/fall2010/entries/physics-Rpcc/}.

\bibitem{OCB}
Ognyan Oreshkov, Fabio Costa, and {\v{C}}aslav Brukner.
\newblock Quantum correlations with no causal order.
\newblock {\em Nat. Comm.}, 3:1092, 2012.

\bibitem{Bierhorst}
Peter Bierhorst.
\newblock A rigorous analysis of the {C}lauser-{H}orne-{S}himony-{H}olt
  inequality experiment when trials need not be independent.
\newblock {\em Found. Phys.}, 44:736--761, 2014.

\bibitem{Henson}
Joe Henson.
\newblock The causal set approach to quantum gravity, 2006.
\newblock \href{http://arxiv.org/abs/gr-qc/0601121}{arXiv:gr-qc/0601121}.

\bibitem{BLMPPR}
Jonathan Barrett, Noah Linden, Serge Massar, Stefano Pironio, Sandu Popescu,
  and David Roberts.
\newblock Nonlocal correlations as an information-theoretic resource.
\newblock {\em Phys. Rev. A}, 71:022101, Feb 2005.

\bibitem{LG}
Anthony~J. Leggett and Anupam Garg.
\newblock Quantum mechanics versus macroscopic realism: is the flux there when
  nobody looks?
\newblock {\em Phys. Rev. Lett.}, 54:857, 1985.

\bibitem{Fine}
Arthur Fine.
\newblock Hidden variables, joint probability, and the {B}ell inequalities.
\newblock {\em Phys. Rev. Lett.}, 48(5):291--295, Feb 1982.

\bibitem{Upper}
Daniel~Ray Upper.
\newblock {\em Theory and Algorithms for Hidden {M}arkov Models and Generalized
  Hidden {M}arkov Models}.
\newblock PhD thesis, University of California at Berkeley, 1989.
\newblock Available at
  \href{http://csc.ucdavis.edu/~cmg/papers/TAHMMGHMM.pdf.gz}{csc.ucdavis.edu/$\sim$cmg/papers/TAHMMGHMM.pdf.gz}.

\bibitem{CDP}
Giulio Chiribella, G.~Mauro D'Ariano, and Paolo Perinotti.
\newblock Probabilistic theories with purification.
\newblock {\em Phys. Rev. A}, 81:062348, 2010.

\bibitem{CP}
Bob Coecke and Eric~Oliver Paquette.
\newblock Categories for the practising physicist.
\newblock In {\em New {S}tructures for {P}hysics}, pages 173--286. Springer,
  2011.

\bibitem{Spivak}
David Spivak.
\newblock Category theory for scientists (old version), 2013.
\newblock \href{http://arxiv.org/abs/1302.6946}{arXiv:1302.6946}.

\bibitem{CLcausal}
Bob Coecke and Raymond Lal.
\newblock Causal categories: relativistically interacting processes.
\newblock {\em Found. Phys.}, 43(4):458--501, 2012.

\bibitem{Coecke}
Bob Coecke.
\newblock Quantum picturalism.
\newblock {\em Contemporary Physics}, 51(1):59--83, 2010.

\bibitem{JS}
Andr{\'e} Joyal and Ross Street.
\newblock The geometry of tensor calculus, {I}.
\newblock {\em Adv.~Math.}, 88:55--112, 1991.

\bibitem{Leinster}
Tom Leinster.
\newblock Higher operads, higher categories, 2003.
\newblock \href{http://arxiv.org/abs/math/0305049}{arXiv:math/0305049}.

\bibitem{Cns}
Bob Coecke.
\newblock Terminality implies non-signalling.
\newblock In {\em Proceedings 11th workshop on Quantum Physics and Logic},
  volume 172, pages 27--35, 2014.

\bibitem{SSR}
David~I. Spivak, Patrick Schultz, and Dylan Rupel.
\newblock String diagrams for traced and compact categories are oriented
  1-cobordisms, 2015.
\newblock \href{http://arxiv.org/abs/1508.01069}{arXiv:1508.01069}.

\bibitem{Barrett}
Jonathan Barrett.
\newblock Information processing in generalized probabilistic theories.
\newblock {\em Phys. Rev. A}, 75(3):032304, Mar 2007.

\bibitem{SB}
Anthony Short and Jonathan Barrett.
\newblock Strong nonlocality: A trade-off between states and measurements.
\newblock {\em New J. Phys.}, 12:033034, 2010.

\bibitem{Gisin}
Nicolas Gisin.
\newblock Non-realism: deep thought or soft option?
\newblock {\em Found. Phys.}, 42:80--85, 2012.

\bibitem{Gisin2}
Nicolas Gisin.
\newblock A possible definition of a \textit{realistic} physics theory, 2014.
\newblock \href{http://arxiv.org/abs/1401.0419}{arXiv:1401.0419}.

\bibitem{HalmHilb}
Paul~R. Halmos.
\newblock {\em A {H}ilbert Space Problem Book}, volume~19 of {\em Graduate
  Texts in Mathematics}.
\newblock Springer, second edition, 1982.

\bibitem{KR}
Richard~V. Kadison and John~R. Ringrose.
\newblock {\em Fundamentals of the theory of operator algebras. {V}ol. {I}},
  volume 100 of {\em Pure and Applied Mathematics}.
\newblock Academic Press Inc. Harcourt Brace Jovanovich Publishers, 1983.
\newblock Elementary theory.

\bibitem{Conway}
John~B. Conway.
\newblock {\em A Course in Operator Theory}, volume~21 of {\em Graduate Studies
  in Mathematics}.
\newblock American Mathematical Society, 1999.

\bibitem{NC}
Michael Nielsen and Isaac Chuang.
\newblock {\em Quantum Computation and Quantum Information}.
\newblock Cambridge University Press, 2000.

\bibitem{GY}
Mark Gales and Steve Young.
\newblock The application of hidden {M}arkov models in speech recognition.
\newblock {\em Foundations and Trends in Signal Processing}, 1(3):195--304,
  2008.

\bibitem{Eddy}
Sean~R. Eddy.
\newblock What is a hidden markov model?
\newblock {\em Nature Biotechnology}, 22:1315--1316, 2004.

\bibitem{Ghahr}
Zoubin Ghahramani.
\newblock An introduction to hidden {M}arkov models and {B}ayesian networks.
\newblock {\em Int. J. Patt. Recogn. Artif. Intell.}, 15(1):9--42, 2001.

\bibitem{LS}
Matthew Leifer and Robert Spekkens.
\newblock Towards a formulation of quantum theory as a causally neutral theory
  of {B}ayesian inference.
\newblock {\em Phys. Rev. A}, 88:052130, 2013.

\bibitem{HalmMeas}
Paul~R. Halmos.
\newblock {\em Measure theory}.
\newblock D. Van Nostrand Company, Inc., New York, N. Y., 1950.

\bibitem{PP}
Prakash Panangaden.
\newblock {\em Labelled Markov Processes}.
\newblock Imperial College Press, 2009.

\bibitem{BF}
John~C. Baez and Tobias Fritz.
\newblock A {B}ayesian characterization of relative entropy.
\newblock {\em Theory and Applications of Categories}, 29:421--456, 2014.

\end{thebibliography}

\end{document}